\lstdefinestyle{mystyle}{
  backgroundcolor=\color{backcolour},
  commentstyle=\color{codegreen},
  keywordstyle=\color{magenta},
  numberstyle=\tiny\color{codegray},
  stringstyle=\color{codepurple},
  basicstyle=\ttfamily\footnotesize,
  breakatwhitespace=false,
  breaklines=true,
  captionpos=b,
  keepspaces=true,
  numbers=left,
  numbersep=5pt,
  showspaces=false,
  showstringspaces=false,
  showtabs=false,
  tabsize=2
}
\newcommand{\lin}{\lstinline}
\def\bS{{\bf S}}
\newcommand{\Om}{\Omega}
\newcommand{\la}{\lambda}
\newcommand{\al}{\alpha}
\newcommand{\si}{{\sigma}}
\newcommand{\fSim}{{\mathrm{fSim}}}
\newcommand{\ii}{\mathrm{i}}
\newcommand{\ee}{\mathrm{e}}
\newcommand{\eq}[1]{Eq.~(\ref{eq:#1})}
\newcommand{\fig}[1]{Fig.~\ref{fig:#1}}
\renewcommand{\sec }[1]{Sec.~\ref{sec:#1}}
\newcommand{\tab}[1]{Table~\ref{tab:#1}}
\newcommand{\thm}[1]{Theorem~\ref{thm:#1}}
\newcommand{\SWAP}{\mathrm{SWAP}}
\newcommand{\CNOT}{\mathrm{CNOT}}
\newcommand{\mc}[1]{\mathcal{#1}}
\newcommand{\id}{\mathbb{1}}
\newcommand{\sfrac}[2]{{\textstyle\frac{#1}{#2}}}
\newcommand{\f}[2]{{\frac{#1}{#2}}}
\newcommand{\nn}{\nonumber}
\newcommand{\ketbra}[2]{\ket{#1}\!\!\bra{#2}}
\let\perptmp\perp
\renewcommand{\perp}{{\! \mathsmaller{\perptmp}}}
\newcommand{\e}{\mathlarger{e}}
\newcommand{\mrm}{\mathrm}
\newcommand{\nodagger}{{\phantom{\dagger}}}
\newtheorem{theorem}{Theorem}
\theoremstyle{definition}
\def\@bibdataout@aps{%
  \immediate\write\@bibdataout{%
    @CONTROL{%
      apsrev41Control%
      \longbibliography@sw{%
        ,author="08",editor="1",pages="1",title="0",year="1"%
      }{%
        ,author="08",editor="1",pages="1",title="",year="1"%
      }%
    }%
  }%
  \if@filesw \immediate \write \@auxout {\string \citation {apsrev41Control}}\fi
}
\definecolor{blue}{rgb}{0.12156862745098039, 0.4666666666666667, 0.7058823529411765}
\definecolor{orange}{rgb}{1.0, 0.4980392156862745, 0.054901960784313725}
\definecolor{green}{rgb}{0.17254901960784313, 0.6274509803921569, 0.17254901960784313}
\definecolor{red}{rgb}{0.8392156862745098, 0.15294117647058825, 0.1568627450980392}
\definecolor{purple}{rgb}{0.5803921568627451, 0.403921568627451, 0.7411764705882353}
\definecolor{brown}{rgb}{0.5490196078431373, 0.33725490196078434, 0.29411764705882354}
\definecolor{pink}{rgb}{0.8901960784313725, 0.4666666666666667, 0.7607843137254902}
\definecolor{pygray}{rgb}{0.4980392156862745, 0.4980392156862745, 0.4980392156862745}
\definecolor{olive}{rgb}{0.7372549019607844, 0.7411764705882353, 0.13333333333333333}
\definecolor{9}{rgb}{0.09019607843137255, 0.7450980392156863, 0.8117647058823529}
\definecolor{lightblue}{rgb}{0.9,.95,1}
\newcommand{\hdimer}{
\!\!\raisebox{-.25em}{
  \begin{tikzpicture}[scale=.25,line width=0.75pt]
    \useasboundingbox (-0.2,-0.2) rectangle (1.2,1.2);
    \coordinate (0) at (0,0);
    \coordinate (1) at (1,0);
    \coordinate (2) at (1,1);
    \coordinate (3) at (0,1);

    \fill (0) circle (4pt);
    \fill (1) circle (4pt);
    \fill (2) circle (4pt);
    \fill (3) circle (4pt);

    \draw (0) -- (1);
    \draw (2) -- (3);
\end{tikzpicture}}
}
\newcommand{\vdimer}{
\!\!\raisebox{-.25em}{
  \begin{tikzpicture}[scale=.25,line width=0.75pt]
    \useasboundingbox (-0.2,-0.2) rectangle (1.2,1.2);

    \coordinate (0) at (0,0);
    \coordinate (1) at (1,0);
    \coordinate (2) at (1,1);
    \coordinate (3) at (0,1);

    \fill (0) circle (4pt);
    \fill (1) circle (4pt);
    \fill (2) circle (4pt);
    \fill (3) circle (4pt);

    \draw (0) -- (3);
    \draw (1) -- (2);
\end{tikzpicture}}
}
\newcommand{\singledimer}{
\!\!\raisebox{-.25em}{
  \begin{tikzpicture}[scale=.25,line width=0.75pt]
    \useasboundingbox (-0.2,-0.2) rectangle (1.2,1.2);

    \coordinate (0) at (0,0.5);
    \coordinate (1) at (1,0.5);

    \fill (0) circle (4pt);
    \fill (1) circle (4pt);

    \draw (0) -- (1);
\end{tikzpicture}}
}
\newcommand{\xlink}{
  \makebox[0pt]{
  \begin{tikzpicture}[scale=.25,line width=0.75pt]
    \coordinate (A) at (0,0);
    \coordinate (B) at (30:1);
    \fill (A) circle (4pt);
    \fill (B) circle (4pt);
    \draw (A) -- (B);
  \end{tikzpicture}
}
}
\newcommand{\ylink}{
  \makebox[0pt]{
  \begin{tikzpicture}[scale=.25,line width=0.75pt]
    \coordinate (A) at (0,0);
    \coordinate (B) at (120:1);
    \fill (A) circle (4pt);
    \fill (B) circle (4pt);
    \draw (A) -- (B);
\end{tikzpicture}}
}
\newcommand{\zlink}{
  \makebox[0pt]{
  \begin{tikzpicture}[scale=.25,line width=0.75pt]
    \coordinate (A) at (0,0);
    \coordinate (B) at (-90:1);
    \fill (A) circle (4pt);
    \fill (B) circle (4pt);
    \draw (A) -- (B);
\end{tikzpicture}}
}
\newcommand{\diag}{\mathrm{diag}}
\newcommand{\placefig}[2]{
  \parbox{
    \widthof{
      \includegraphics[scale=.15]{#1}
    }
  }{
    \includegraphics[scale=.15]{#1} \\ \centering #2
  }
}
\begin{document}

\title{Efficient quantum simulation for translationally invariant systems}

\author{Joris Kattem\"olle}
\author{Guido Burkard}
\affiliation{Department of Physics, University of Konstanz, D-78457 Konstanz, Germany}

\begin{abstract}
Discrete translational symmetry plays a fundamental role in condensed matter physics and lattice gauge theories, enabling the analysis of systems that would otherwise be intractable. Despite this, many open problems remain. Quantum simulation promises to offer new insights, but progress is often limited by device connectivity constraints, which lead to prohibitively long computation times. We extend the use of spatial symmetry from the systems to be simulated to the quantum circuits simulating them. One application is that it becomes possible to efficiently and optimally alleviate device connectivity constraints algorithmically. This leads to reductions in quantum computational time by several orders of magnitude even for moderate system sizes, making such simulations feasible, with even greater relative gains for larger systems. This substantially enhances the capabilities of quantum computers in the simulation of condensed matter systems and lattice gauge theories, even before hardware improvements. Our work forms the basis for using spatial symmetry of quantum circuits in other areas of quantum computation, such as in the design and implementation of quantum error correcting codes.
\end{abstract}

\maketitle
Quantum simulation~\cite{feynman1982simulating,lloyd1996universal} remains one of the most promising applications of quantum computers. Recent experiments have demonstrated encouraging results in simulating quantum magnetism~\cite{diepen2021quantum, frey2022realization,youngseok2023evidence, rosenberg2024dynamics, mi2024stable, burkard2025recipes}. Other systems that can be simulated in principle include the Fermi-Hubbard model, which provides insights into high-temperature superconductivity~\cite{hensgens2017quantum,dagotto1994correlated,cade2020strategies}, and lattice gauge theories such as the Rokhsar-Kivelson model~\cite{rokhsar1988superconductivity} and the Kogut-Susskind model of 2D quantum electrodynamics (QED)~\cite{kogut1975hamiltonian,haase2021resource,paulison2021simulating,meth2023simulating}, where the phase diagram and real-time dynamics remain only partially understood~\cite{pichler2016realtime, felser2020twodimensional}. These systems possess discrete translational symmetry, and this symmetry is inherited by circuits for the quantum simulation of those systems. In this work, we call such circuits \emph{tileable} since they consist of a circuit tile or motif that is repeated spatially and optionally temporally.

A quantum simulation circuit typically comprises a sequence of one- and two-qubit unitaries acting on spin-1/2 degrees of freedom (qubits). In many architectures, such as superconducting or quantum dot devices, two-qubit gates can only be applied to specific pairs determined by the device’s coupling graph~\cite{qiskit2024quantum,cirq2024cirq} (also called the coupling map~\cite{qiskit2024quantum} or device graph~\cite{cirq2024cirq}). Examples include the square grid~\cite{acharya2025quantum} and the heavy-hexagonal lattice~\cite{youngseok2023evidence}.

\begin{figure*}[t]
\includegraphics[width=\linewidth]{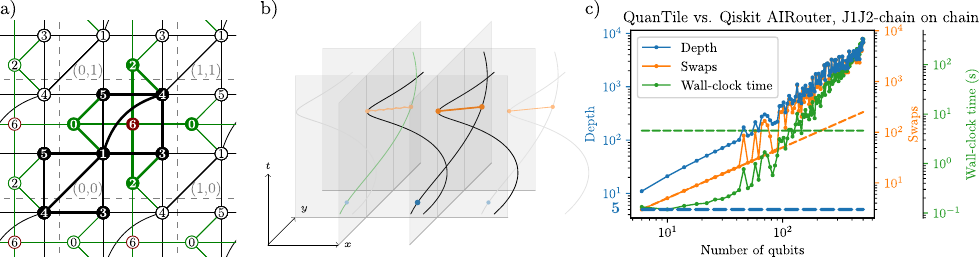}
\caption{\label{fig:figs}
    \textbf{(a)} Basis circuit (bold edges) for the quantum simulation of the Kogut-Susskind model of 2D QED. Dashed lines delineate the circuit cells, with coordinates in gray, and seed numbers within the vertices. The Jordan-Wigner transformation breaks 2D translational symmetry, which we circumvent by using the compact encoding~\cite{derby2021compact, clinton2024towards}, leading to the red auxiliary qubits.
    \textbf{(b)} Qudit mobility zone. The logical qudits of the basis circuit (black worldlines) are free to move by the action of SWAPs (omitted) along the edges in a patch of $2\delta+1$ by $2\delta+1$ hardware cells (here, $\delta=1$ and edges are likewise omitted). The logical qudits are acted upon by single-qudit (dark blue dot) and two-qudit (dark orange edge) physical gates. In this example, the physical circuit (black lines, dark blue dots, and dark orange edges) is repeated to form a $3\times1$ physical circuit patch. These physical basis circuits interact; a two-qudit gate from a translated copy of the physical basis circuit (wavy orange line) acts between a logical qudit from the original basis circuit and a logical qudit from a translated copy (green line).
    \textbf{(c)} Benchmarking results comparing our method's implementation (``QuanTile", dashed lines) \cite{kattemolle2024quantile} with Qiskit's AIRouter \cite{qiskit2024quantum} (solid lines).
  }
\end{figure*}

To overcome restricted qubit connectivity, SWAP gates (SWAPs) are commonly inserted~\cite{childs2019circuit,cowtan2019qubit}. A SWAP gate exchanges the states of two qubits, effectively swapping their positions. To apply a two-qubit gate to a pair of remote qubits, first these qubits must be routed along the edges of the coupling graph using SWAPs until they become adjacent. The \emph{routing problem}~\cite{cowtan2019qubit}, also called quantum layout synthesis~\cite{tan2020optimal,ping2025assessing}, is the task of determining the initial positions of the qubits and the placement of the SWAPs. An optimal solution minimizes routing overhead, commonly measured by the routed circuit's depth or the number of inserted SWAPs.

Optimal qubit routing is crucial. First, on today’s quantum computers, routing overhead increases the effects of noise, currently limiting simulations to models of quantum magnetism that can be directly embedded into the hardware coupling graph~\cite{frey2022realization, youngseok2023evidence, rosenberg2024dynamics, mi2024stable}. Second, on early fault-tolerant quantum computers,   minimizing routing overhead maximizes the system sizes and simulation times that can be achieved. Finally, while scalable fault-tolerant quantum computers can reliably execute circuits of arbitrary depth, the routing overhead directly increases the monetary cost of running a circuit.

However, solving the routing problem optimally is NP hard~\cite{childs2019circuit, maslov2008quantum, cowtan2019qubit}, making optimal methods~\cite{tan2020optimal, nannicini2022optimal} slow and limited in the circuit sizes they can handle. Using highly optimized implementations~\cite{lin2023scalable}, the performance of these methods can be improved, but given the problem's computational complexity, it is unlikely that optimal routing can be achieved for circuits acting on more than 100 qubits---roughly the number of qubits already available on today’s quantum computers and the scale anticipated to be necessary for a quantum advantage~\cite{childs2018toward, zimboras2025myths}. Consequently, efficient but suboptimal heuristic methods have been developed~\cite{li2019tackling,cowtan2019qubit,kremer2024practical}, with SWAP counts sometimes exceeding the optimal by several orders of magnitude~\cite{ping2025assessing}.

In this work, we introduce a framework for analyzing  tileable circuits. It enables the modification of existing routing methods so that they only need to solve the routing problem for a single circuit tile,
while ensuring that the routed circuit remains tileable. We provide an implementation of our method~\cite{kattemolle2024quantile}, in which we modify the optimal satisfiability modulo theories (SMT) based routing method from Ref.~\cite{tan2020optimal}, with extensive verification of the correctness and performance of our method and implementation.  We use it to find routing solutions with negligible overhead for circuits derived from quantum cellular automata~\cite{farrelly2020review,gopalakrishnan2018facilitated, klobas2021exact}, condensed matter simulations, and lattice gauge theory simulations. This unlocks the potential of today's quantum computers to study frustrated magnetism and reduces computational cost on fault-tolerant devices by several orders of magnitude. 

Although finding an (optimal) solution to the routing problem for a single tile is still NP hard, our method allows it to be repeated spatiotemporally at essentially no computational cost. The complexity of solving the routing problem now only depends on local properties of the simulated system and becomes essentially independent of the system size and the simulation time. Benchmarking our specialized method's implementation~\cite{kattemolle2024quantile} demonstrates that also in practice it greatly outperforms leading general-purpose routing software. For instance, on a 500-qubit circuit, our method reduces the depth overhead by more than 3 orders of magnitude while also requiring less computational time.

\emph{Fundamentals---}We define a circuit tile formally as a \emph{basis circuit} (\fig{figs}a). It consists of a finite set of gates $g$. Since the routing problem is independent of whether the degrees of freedom are qubits or qudits with more than two basis states, we use the term ``qudits'' for generality. A single-qudit (two-qudit) gate $g$ acts on qudit(s) $g.\tilde q$ ($g.\tilde q$, $g.\tilde q'$).  We assign each gate a circuit layer $g.\tilde t\in \mathbb N^0$. Each qudit $q$ has cell coordinates $q.\tilde x, q.\tilde y\in\mathbb{Z}$ and a seed number $q.\tilde s\in \mathbb N^0$, which distinguishes qudits within a cell. This generalizes to higher dimensions by adding spatial coordinates.

A spatiotemporal \emph{circuit patch} $P_{n,m,l}(C)$ is formed by merging $n\times m\times l$ translated copies of a depth-$d$ basis circuit $C$,
\begin{equation}\label{eq:patch}
P_{n,m,l}(C) = \bigcup_{\substack{(\Delta x,\Delta y, \Delta t)\\ \in\,\mathbb{Z}_n\times \mathbb{Z}_m \times d\,\mathbb{Z}_l}} T_{\Delta x, \Delta y, \Delta t}(C).
\end{equation}
Here, $T_{\Delta x,\Delta y,\Delta t}(C)$ translates each gate's time coordinate by $\Delta t$ and each qudit's spatial coordinates by $\Delta x$ and $\Delta y$, and  $d\,\mathbb{Z}_l=\{0,d,2d,\dots,(l-1)d\}$. Replacing $\mathbb{Z}_n$ and $\mathbb{Z}_m$ with $\mathbb{Z}$ yields spatially infinite, or \emph{lattice}, circuits. For formal reasons, the union is taken as a multiset union (allowing duplicates). We call circuits in which a basis graph can be identified \emph{tileable circuits}.

A \emph{gate collision} occurs when two gates simultaneously act on the same qudit. A  basis circuit $C$ is valid if the circuit patch $P_{n,m,l}(C)$ (or equivalently, its induced lattice circuit) is collision-free for all $n,m,l\in\mathbb N^+$, which can be checked straightforwardly due to the following theorem. 

\textbf{Theorem 1.} Let $C$ be a basis circuit. The circuit patch $P_{n,m,l}(C)$ is collision-free for all $n,m,l\in\mathbb{N}^+$ if and only if, for each time $t$ and seed number $s$, at most one qudit $q$ with $q.\tilde{s}=s$ is acted on by a gate in layer $t$. A proof is provided in the Supplemental Material (SM)~\footnote{The Supplemental Material is appended. It includes the proof of Theorem 1, the explicit SMT formula for tileable routing, a detailed description of our implementation \cite{kattemolle2024quantile}, and additional results. All basis graphs and tileable quantum circuits used in this work are given explicitly.}.

A \emph{basis graph} \cite{kattemolle2024edge} is a basis circuit with edges instead of gates. Edges lack a time coordinate, and two edges may meet at the same vertex. Merging infinitely many translated copies of a basis graph produces a \emph{lattice graph}, formalizing the concept of an ``infinite lattice'' with edges. Examples include the square, honeycomb, and kagome lattices, with edges connecting nearest neighbors.

Tileable circuits naturally arise in quantum simulations of lattice systems via Trotterization. Consider a two-local Hamiltonian $H=\sum_i H_i+\sum_{(i,j)}H_{ij}$,
acting on $\delta$-dimensional local degrees of freedom (i.e., $\delta$-level qudits), where $H_i$ acts on qudit $i$ and $H_{(i,j)}$ on qudits $i$ and $j$ along the edges of a lattice graph $G$. By Trotterization~\cite{lloyd1996universal,childs2021theory, suzuki1991general}, $\ee^{-\ii \tau H}\approx [\tilde U_p(\tau/r)]^r$,
with $\tilde U_p(\tau/r)$ a $p$th-order Trotter step and $r$ the number of Trotter steps. A first-order step has the form $\tilde U_1(\tau)=\ee^{-\ii \tau H^{(1)}}\cdots \ee^{-\ii \tau H^{(\Gamma)}}$, for some sequence $H^{(1)},\ldots,H^{(\Gamma)}$ of the terms in $H$. Here, $H^{(i)}$ is the $i$th term in the sequence, and each $\ee^{-\ii \tau H^{(i)}}$ is a two-qudit gate. A second-order Trotter step is constructed by applying the first-order step twice, reversing the order in the second application,
$\tilde U_2(\tau)=\tilde U_1^\dagger(-\tau/2)\,\tilde U_1(\tau/2)$. Using the bounds from~\cite{childs2021theory}, the error is $O\bigl[N\tau\,(\tau/r)^p\bigr]$, where $N$ is the number of qudits.

Importantly, the Trotter step $\tilde U_p$ is tileable for any $p$. For the routing problem, only the structure of $G$ matters, so we consider arbitrary two-local (ATL) Hamiltonians, including models such as the Ising, (an)isotropic Heisenberg, Kitaev, and Bose-Hubbard models~\cite{Note1}. Trotterization does not prescribe a specific order $H^{(i)}$. Finding the order that minimizes circuit depth is equivalent to solving an NP-hard edge-coloring problem \cite{kattemolle2024edge}. Tileable circuits also arise naturally in the task of finding ground states of lattice systems, for example, because the implementation of $\ee^{-\ii \tau H}$ is a core subroutine \cite{kitaev1995quantum}, or because $\tilde U_1$  determines the structure of a parameterized ansatz circuit \cite{wecker2015progress,kattemolle2022variational}.

\emph{Routing---}It is common to call the circuit to be routed the \emph{logical circuit}, acting with \emph{logical gates} on \emph{logical qudits}. The routed circuit is the \emph{physical circuit}, acting with \emph{physical gates} on \emph{physical qudits}, while respecting hardware connectivity constraints. Although qudit routing is also needed for fault-tolerant quantum computing, we note the terms \emph{logical} and \emph{physical} here differ from their common usage in error correction. To physically execute a logical circuit, each logical qudit $q$ is assigned to a physical qudit $Q$ via a qudit map $Q_q$. We say that the logical qudit $q$ resides at the physical qudit $Q_q$. If $q$ starts at $Q_q$ and a SWAP acts on $Q_q$ and $Q'$, then afterward $q$ resides at $Q'$. The qudit map has to be updated accordingly.

The routing problem for standard circuits was formulated as an SMT formula in~\cite{tan2020optimal}. An SMT formula consists of variables and constraints. Here, Boolean and integer variables encode the qudit map, the coordinates of the physical qudits of gates, and the time coordinates of physical gates. Constraints on those variables ensure that the corresponding physical circuit is collision-free, respects the hardware connectivity, implements the same unitary as the logical circuit (up to a final reordering of the qudits), and has a preset depth $D>0$. The smallest $D$ for which all constraints become satisfiable yields the routed circuit with minimal depth overhead.

The second-order Trotter formula (and higher-order versions~\cite{Note1}) offers an advantage in qudit routing. Let $U(\tau)$ be the routed version of $\tilde U_1(\tau)$. At the end of $U(\tau)$, logical qudits may not return to their initial positions, but after $U^\dagger(-\tau)U(\tau)$, they do. Thus, $\ee^{-i\tau H} \approx [\tilde U_2(\tau/r)]^r = [U^\dagger(-\tau/(2r))U(-\tau/(2r))]^r$, where the right-hand side is a fully routed circuit. Thus, it suffices to route a single first-order step without enforcing logical qudits to return to their initial position. This temporal symmetry  can be exploited alongside spatial symmetry. Alternatively, one can enforce qudits to return to their initial positions at the end of $U(\tau)$, though this may increase circuit depth.

When simulating Hamiltonians with two-qubit interactions $H_{ij}(\Delta) \sim X_iX_j+Y_iY_j+\Delta Z_iZ_j$ on hardware where the CNOT is native, SWAPs can be absorbed into directly preceding or following two-qubit simulation unitaries $\ee^{-\ii \tau H^{(i)}}$ at no increase of the infidelity of the subcircuits implementing those unitaries~\cite{vidal2004universal,Note1}. The same applies when the two-qubit fSim gate~\cite{foxen2020demonstrating,arute2019quantum,Note1} is native.

\begin{table}
\addtolength{\tabcolsep}{-.5em}
\rowcolors{5}{}{gray!20}
\begin{longtable}{cccc}
\hline
Simulated                   & Hardware          & Depth       & SWAP    \\
system              & coupling graph    & overhead      & overhead  \\
\hline
ATL ladder          	& chain          	& 0 (0 \%)  	& 0       \\ 
ATL J1J2-ladder     	& chain          	& 0 (0 \%)  	& 0       \\ 
ATL J1J2-chain       	& chain          	& 1 (25 \%) 	& 0       \\ 
\hline
Rule 54               	& ladder        	& 0 (0 \%)  	& 1 (4)   \\ 
ATL J1J2-chain     	& ladder        	& 0 (0 \%)  	& 0       \\ 
\hline
ATL J1J2-square     	& square grid       	& 0 (0 \%)  	& 0       \\ 
ATL triangular     	& square grid         	& 0 (0 \%)  	& 0       \\ 
ATL kagome         	& square grid       	& 1 (25 \%) 	& 0       \\ 
ATL shuriken       	& square grid       	& 1 (25 \%) 	& 0       \\ 
ATL snub-square    	& square grid       	& 1 (20 \%) 	& 0       \\ 

Rokhsar-Kivelson      	& square grid       	& 2 (11 \%) 	& 2 (4)   \\ 
Fermi-Hubbard        	& square grid       	& 0 (0 \%)  	& 0       \\ 
Kogut-Susskind       	& square grid         & 9 (4 \%)  	& 253 (6) \\ 
 \end{longtable}
 \caption{\label{tab:results}
   Excerpt of the results obtained by our method \cite{kattemolle2024quantile} in routing a single Trotter step for various lattice systems (first column) to various hardware coupling graphs (second column). ATL J1J2-$\ell$ denotes an arbitrary two-local model (ATL) on a lattice $\ell$ with edges between nearest and next-nearest neighbors. The reported depth overhead (third column) is given per Trotter step and remains valid for any number of Trotter steps and logical circuit patch size. The SWAP overhead (last column) is per Trotter step and per the number of qudits indicated in parentheses. In all listed solutions, the qudit overhead is zero, except for Kogut-Susskind, where it is one qudit per seven logical qudits.}
 \end{table}

\emph{Routing tileable circuits---}To leverage spatial symmetry, we map tileable circuits to hardware whose connectivity graph is described by (a patch of) a lattice graph. The unitary implemented by any $n\times m\times l$ physical circuit patch must equal the unitary implemented by the corresponding $n\times m\times l$ logical circuit patch (up to a final reordering of qudits). Since this holds for any $n$ and $m$, we may even consider infinite physical circuit patches, or \emph{physical lattice circuits}.

In the mathematical formulation of the qudit routing problem for tileable circuits, several concepts and assumptions beyond those required for standard routing are essential. First, when placing any gate into the physical basis circuit, any physical circuit patch induced by the physical basis circuit [\eq{patch}] must be collision-free. This is done straightforwardly by using Theorem~1, demonstrating its necessity and effectiveness. Second, due to the discrete spatial translational symmetry of a physical lattice circuit, the qudit map must share the same symmetry. We assume that the initial qudit map conserves cell coordinates. Translational symmetry then requires that for each seed number $s$, there is a unique seed number $S$ such that all logical qudits with seed number $s$ are mapped to physical qudits with seed number $S$. Logical qudits can move along the edges of the hardware lattice graph via SWAPs, but we assume they cannot move outside the \emph{mobility zone} (\fig{figs}b), which is a patch of $2\delta+1$ by $2\delta+1$ hardware cells centered around the central cell. The mobility zone can be made arbitrarily large at the cost of increased computational resources. 

Finally, consider inserting a SWAP gate $g$ acting on physical qudits $(Q,Q')$ into the physical basis circuit $C$. By \eq{patch}, the physical lattice circuit induced by $C$ will include SWAPs along $T_{\Delta x,\Delta y,0}[g]$ for every $\Delta x,\Delta y$. Crucially, these translated SWAPs, although they are not in $C$ itself, may act on physical qudits of $C$ holding logical qudits. We account for this effect by temporarily inserting all relevant possible translated versions of each SWAP gate that is added to the physical basis circuit. This renders the basis circuit technically invalid by Theorem 1. In postprocessing, however, we retain only the untranslated SWAP gate, ensuring that the physical lattice circuit is collision-free. The translated SWAPs acting on $C$ automatically reemerge in the final physical lattice circuit. For finite physical circuit patches, boundary effects occur, which are dealt with in detail in the SM~\cite{Note1}.

%
%


\emph{Implementation---}
Until now, our discussion has been conceptual and not related to the implementation. For demonstration, we reformulate the above concepts as an SMT formula, as it was done in \cite{tan2020optimal} for general, structureless circuits, and implement this SMT formulation in code. In the End Matter, we give an example of a constraint unique to the routing of tileable circuits. The full details of our SMT formulation and implementation can be found in the SM~\cite{Note1}. Our framework is modular and the SMT-based method can be replaced by other routing methods, including heuristic methods. 

The implementation provides the following options that, like tileability, go beyond the capabilities of standard routing methods: (i) optimize the order of two-qudit gates in the Trotter step; (ii) allow SWAPs to merge with directly preceding or following two-qudit gates; (iii) enforce that logical qudits return to their initial positions at the end of the circuit (``cyclic routing''); and (iv) slice the logical circuit into subcircuits to route sequentially, reducing the computational complexity to linear in the logical circuit depth.

\emph{Results---}We applied our method to basis circuits for one Trotter step in the quantum simulation of: ATL Hamiltonians on 24 different 1D and 2D lattices, the Fermi-Hubbard model on the square lattice~\cite{dagotto1994correlated}, the Rokhsar-Kivelson model~\cite{rokhsar1988superconductivity}, and the Kogut-Susskind model of 2D QED~\cite{kogut1975hamiltonian}. Finally, we applied our method to the basis circuit of the Rule 54 quantum cellular automaton~\cite{gopalakrishnan2018facilitated, klobas2021exact}. Using \eq{patch}, the solutions can be tiled spatially and temporally, creating arbitrarily deep and wide routed circuits. The routing problems were solved for various combinations of the routing options (i--iv). 

All circuits and the 1D and 2D lattices are provided explicitly in the SM, along with extensive results~\cite{Note1}. An excerpt of the results is given in \tab{results}, with the exact routing options used given in the End Matter. For the solutions we found, the routing overhead is remarkably low. In some cases there is no overhead at all, possible under option (i). For the zero-overhead cases, we note that increasing $\delta$ cannot decrease the depth further. For the remaining low-overhead cases, we expect limited to no further improvements from increasing $\delta$ because of the locality of the input circuits. 

Our method offers a scaling advantage over general-purpose methods because its running time is essentially independent of the logical circuit size. Nevertheless, the question remains if this advantage is already significant for today’s quantum chips, with on the order of 100 to 1000 qubits. We therefore benchmarked our method against multiple established routing methods across various routing problems~\cite{Note1}. Here, we show the results of comparing our method to Qiskit's leading AIRouter~\cite{kremer2024practical} on the problem of routing a single Trotter step for the simulation of an ATL J1J2 model on a chain to hardware with chain connectivity (\fig{figs}c). Since the AIRouter does not optimize gate order (i), we allow our implementation to perform this optimization and then use the resulting gate order as a fixed input for the AIRouter. Because the AIRouter does not support SWAP merging (ii) or cyclic routing (iii), we disable these options in our implementation. As both methods route a single Trotter step, their solutions can be repeated temporally to construct arbitrarily deep second-order Trotter circuits, but only our solution can also be repeated spatially. Our method's implementation becomes faster for system sizes above approximately 100 qudits, while also producing solutions with significantly lower depth and SWAP overhead. At around 500 qudits, the depth overhead decreases from approximately $10^4$ to just 5.

\emph{Conclusion---}We have demonstrated that for circuits with discrete spatiotemporal translational symmetry, naturally arising in the quantum simulation of condensed matter systems and lattice gauge theories, inherently scalable qubit routing solutions can be achieved with negligible overhead. Beyond a reduction of costs on fault-tolerant devices by several orders of magnitude, this enables the simulation of geometrically frustrated magnetism on current devices. One possibility is the observation of disorder-free localization and many-body quantum scars in a Heisenberg model on the kagome lattice \cite{mcclarty2020disorder-free,lee2020exact}. The circuits require $O(1)$-depth state preparation, followed by the Trotterized simulation circuit, which for this system is possible on square-grid hardware using five fSim gate layers per Trotter step (\tab{results}). With 100 qubits and a depth-100 circuit, which could be within reach of pre-fault-tolerant devices \cite{zimboras2025myths}, it becomes possible to simulate around 11 second-order (or 17 first-order \cite{Note1}) Trotter steps on 100 qubits. 

Using the techniques from~\cite{lin2023scalable}, an optimization of our implementation is expected to decrease its own (classical) running time by several orders of magnitude. Also, while our focus has been on an optimal method, within our framework, heuristic methods can also be adapted so that their solutions become tileable. This could be valuable when the basis circuits become too large for optimal methods, e.g., when routing circuits for future modular and tileable hardware, where each module consists of hundreds of qubits~\cite{bravyi2022future}.

We have laid the foundation for addressing compilation tasks other than qubit routing in the tileable setting. Examples include leveraging gate identities to reduce logical circuit depth~\cite{maslov2008quantum_circuit_simplification}, compilation to error-correction native gates \cite{ryan-anderson2024high-fidelity}, automated and optimized construction of logical quantum simulation circuits~\cite{li2022paulihedral}, and routing by shuttling~\cite{bluvstein2024logical,tan2024compiling}. Similar improvements over leading methods are expected in these areas as well.

\vspace{1em}

\emph{Data availability---}Our method's implementation (``QuanTile") and all data are available open-source at Ref.~\cite{kattemolle2024quantile}.

\vspace{1em}

\emph{Acknowledgments---}We acknowledge the support from the German Ministry for Education and Research, under the QSolid project, Grant No.~13N16167, and from the State of Baden-W\"urttemberg within the Competence Center Quantum Computing, projects QORA~II and KQCBW24.

\nocite{sriluckshmy2023optimal,yordanov2020efficient,loss1998quantum,burkard2023semiconductor,moura2008z3,aho1983data,zou2024lightsabre,kattemolle2023line}

\bibliography{bib.bib}

\phantom{\ }

\section*{End Matter}
\emph{Tileable SMT formulation---}Denote by $G_{g}.T$ the integer variable describing the time coordinate of the physical gate $G$ implementing the logical gate $g$, and denote by $G_{g}.Q.S$ ($G_{g}.Q'.S$) the integer variable describing the seed number of qubit 1 (2) of $G_g$. Using Theorem 1, tileability of the physical basis circuit is asserted by adding the constraint \begin{equation}
G_{g}.T=G_{g'}.T\Rightarrow G_{g}.A.S\neq G_{g'}.B.S
\end{equation}
for all pairs $\{g,g'\}$ of indistinct logical gates and all $A,B\in\{Q,Q'\}$ (momentarily assuming only two-qudit gates for simplicity). 

\emph{Routing options---}For the results displayed in \tab{results}, we assumed a mobility zone of 3 by 3 cells ($\delta=1$). For the ATL models only, we optimized the gate order (i). We allowed SWAP-merging (ii), except for the Fermi-Hubbard circuit. We enforced cyclic routing (iii), except for the ATL circuits. Logical circuits were not sliced (iv), except for the Kogut-Susskind circuit, which had a slice depth of 20.


\clearpage
\onecolumngrid
{\huge{\centering Supplemental material to:\\
\vspace{.5em}``Efficient quantum simulation for translationally invariant systems''}}
\vspace{4em}
\twocolumngrid
\tableofcontents
\onecolumngrid
\onecolumngrid
\clearpage

\section{Fundamental concepts}

Here, we formally introduce concepts fundamental for defining and routing tileable circuits, as well as for constructing tileable hardware connectivity graphs. These correspond to the definitions in \lin{quantile/base.py} in the implementation~\cite{kattemolle2024quantile}.

\subsection{Basis graphs}\label{sec:basis_graphs}

A \emph{basis graph} $B$ is a graph $B = (V(B), E(B))$ with vertex set $V(B)$ and edge set $E(B)$. Each vertex $v \in V(B)$ has the form $v = (v_x, v_y, v_s)$, where $v_x, v_y \in \mathbb{Z}$ are \emph{cell coordinates}, and $v_s \in\mathbb N^0$ is an identifier called the \emph{seed number} of the vertex. For clarity, we assume two-dimensional basis graphs; higher-dimensional basis graphs can be obtained by extending the list of cell coordinates accordingly.

Basis graphs (along with lattice graphs) were formally introduced in Ref.~\cite{kattemolle2024edge}. However, we present their definitions here in a slightly different but equivalent form for completeness and consistency with the definitions of basis circuits. For a visual example of basis and lattice graphs, we refer the reader to Fig.~1 of Ref.~\cite{kattemolle2024edge}.

We define the translation operator on vertices as $T_{\Delta x,\Delta y}(v) = (v_x + \Delta x, v_y + \Delta y, v_s)$. This naturally extends to a translation operator on an edge $e = (e_0, e_1)$ by applying $T$ to each of its vertices, $T_{\Delta x,\Delta y}(e) = (T_{\Delta x,\Delta y}(e_0), T_{\Delta x,\Delta y}(e_1))$. For simplicity of notation, we use $T$ to denote the translation operator on both vertices and edges.
Extending this definition further, $T$ induces a translation operator on basis graphs by applying $T$ to all vertices and edges in the graph, $T(B) = (T[V(B)], T[E(B)])$, where $T[E(B)]$ is the set obtained by applying $T$ to each edge in $E(B)$ separately, and similarly for $T[V(B)]$.

A \emph{lattice graph} $\mathcal{G}$ is obtained by translating and merging infinitely many copies of a basis graph $B$,
\begin{equation}
    \mathcal{G}(B) = \bigcup_{\Delta x, \Delta y \in \mathbb{Z}} T_{\Delta x, \Delta y}(B).
\end{equation}
The union of two graphs $A$ and $B$ is defined as $A \cup B = \big(V(A) \cup V(B), E(A) \cup E(B)\big)$. A \emph{patch} of a lattice graph is a finite subgraph obtained by translating and merging a bounded number of copies of $B$,
\begin{equation}
    P_{n,m}(B) = \bigcup_{\Delta x \in \mathbb{Z}_n, \Delta y \in \mathbb{Z}_m} T_{\Delta x, \Delta y}(B).
\end{equation}

A \emph{cell} at coordinates $(x,y)$ consists of all vertices in $V(\mathcal{G})$ that share the same cell coordinates $(x,y)$. We refer to the cell at $(x,y) = (0,0)$ as the \emph{central cell}.

\subsection{Basis circuits}\label{sec:basis_circuits}
Similar to a basis graph, a \emph{basis circuit} $C=(Q(C),G(C))$ with qudits $Q(C)$ and gates $G(C)$ is a circuit in which each qudit $q$ has a cell $x$-coordinate $q.\tilde x$, a cell $y$-coordinate $q.\tilde y\in{\mathbb Z}$ and a seed number $q.\tilde s\in \mathbb N^0$. We denote the qudit a single-qudit gate $g\in G(C)$ acts on by $g.{\tilde q}$. For two-qudit gates, we denote the qudits the gate acts on by $g.{\tilde q}$ and $g.{\tilde q'}$. To ensure an unambiguous gate order when using a basis circuit to generate a circuit patch, we explicitly include the circuit layer (or time step) at which each gate $g$ operates, denoted by $g.\tilde{t}\in \mathbb N^0$. In \sec{scheduling}, we demonstrate that if gate times are unspecified, they can always be assigned while preserving tileability and while utilizing gate parallelism.

Similar to vertices, we define the translation operator acting on a qudit $q$ as $T_{\Delta x,\Delta y}(q) = q'$, with $q'.\tilde{x} = q.\tilde{x} + \Delta x$, $q'.\tilde{y} = q.\tilde{y} + \Delta y$, and $q'.\tilde{s} = q.\tilde{s}$. This naturally induces a translation operator on gates by translating each qudit that a gate $g$ acts on by the same amount. That is, for the translated gate $T_{\Delta x,\Delta y, \Delta t}(g)=g'$, we have $g'.\tilde{q} = T_{\Delta x,\Delta y}(g.\tilde{q})$. For two-qudit gates, we additionally have $g'.\tilde{q}' = T_{\Delta x,\Delta y}(g.\tilde{q}')$. Furthermore, the time coordinate transforms as $g'.\tilde{t} = g.\tilde{t} + \Delta t$. The translation operator on gates induces a translation operator on basis circuits by translating each gate and qudit in the circuit; $T_{\Delta x,\Delta y, \Delta t}(C) = (T_{\Delta x,\Delta y}(Q(C)), T_{\Delta x,\Delta y, \Delta t}(G(C)))$, where $T_{\Delta x,\Delta y}(Q(C))$ translates each qudit in $Q(C)$, and $T_{\Delta x,\Delta y, \Delta t}(G(C))$ translates each gate in $G(C)$.

Continuing the analogy to graphs, a (finite-depth) \emph{lattice circuit} $\mc C$ is obtained by translating and merging infinitely many copies of a basis circuit $C$. That is,
\begin{equation}
  \mc C(C) = \bigcup_{\Delta x,\Delta y\in\mathbb{Z}} T_{\Delta x,\Delta y}(C).
\end{equation}
The union of two circuits $A$ and $B$ is defined as $A \cup B = (Q(A) \cup Q(B), G(A) \cup G(B))$. A \emph{patch} of a lattice circuit is obtained by translating and merging a finite number of copies of a basis circuit $C$,
\begin{equation}
  P_{n,m}(C) = \bigcup_{\Delta x\in\mathbb{Z}_n, \Delta y\in\mathbb{Z}_m} T_{\Delta x, \Delta y,0}(C).
\end{equation}
A \emph{circuit cell} at $(x,y)$ consists of all qudits in $Q(C)$ that have cell coordinates $(x,y)$. The \emph{circuit layer} $t$ of a basis circuit, circuit patch, or lattice circuit consists of all gates $g$ in that circuit for which $g.\tilde{t} = t$. The depth $d$ of a basis circuit, circuit patch, or lattice circuit is the number of layers in the circuit.

Quantum circuits for quantum simulation via Trotterization (\sec{trotterization}) consist of a circuit cycle that is repeated temporally. For this application, it is natural to consider circuits that are periodic in time and define \emph{spacetime patches} as
\[
P_{n,m,l}(C) = \bigcup_{\Delta x\in\mathbb{Z}_n, \Delta y\in\mathbb{Z}_m, \Delta t\in d\,\mathbb{Z}_l} T_{\Delta x, \Delta y, \Delta t}(C),
\]
where $d\,\mathbb{Z}_l$ denotes $\{0,d, 2d, \ldots, d(l-1)\}$, with $d$ the depth of $C$.

Unlike edges, two gates may not collide. A \emph{gate collision} between two gates $g, g' \in G(C)$ occurs if $g.\tilde{t} = g'.\tilde{t}$ and there exist qudits $\tilde{a}$ and $\tilde{b}$ such that $g.\tilde{a} = g'.\tilde{b}$, where $\tilde{a} \in \{\tilde{q}\}$ for single-qudit gates and $\tilde{a} \in \{\tilde{q}, \tilde{q}'\}$ for two-qudit gates, with $\tilde{b}$ satisfying a similar condition. Gate collisions cannot occur in $C$ if $C$ is a valid quantum circuit. Moreover, gate collisions should not arise when generating a lattice circuit or any circuit patch. Therefore, for $C$ to qualify as a basis circuit, we require that no gate collisions occur in $P_{n,m,l}(C)$ for any $n, m,l \in \mathbb{N^+}$. A necessary and sufficient condition for this property is that in each layer of the circuit, at most one qudit with seed number $s$ is acted upon by a gate in that layer.

\begin{theorem}\label{thm:unique_seeds}
Let $C$ be a basis circuit. There are no gate collisions in the circuit patch $P_{n,m,l}(C)$ for arbitrary $n,m,l \in \mathbb{N}^+$ if and only if, for every time $t$ and seed number $s$, there is at most one qudit $q \in Q(C)$ acted on by layer $t$ of $C$ so that $q.\tilde{s} = s$.
\end{theorem}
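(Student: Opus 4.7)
The plan is to prove the two directions of the equivalence separately, in both cases exploiting the indexing of the multiset $P_{n,m,l}(C)$ by a translation triple $(\Delta x,\Delta y,\Delta t)\in\mathbb Z_n\times\mathbb Z_m\times d\,\mathbb Z_l$ together with the underlying gate in $G(C)$.

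For the necessity direction I argue by contrapositive. Assume there exist two distinct qudits $q,q'\in Q(C)$ with a common seed $s$, both acted on by a gate in layer $t$. The displacement $(\Delta x,\Delta y)=(q.\tilde x-q'.\tilde x,\,q.\tilde y-q'.\tilde y)$ is nonzero, yet $T_{\Delta x,\Delta y}(q')$ coincides with $q$ in cell coordinates and seed, hence equals $q$ as an element of the merged patch. Choosing $n,m$ large enough to contain both the original and the displaced cell (possible since the displacement is bounded by the finite spread of $Q(C)$), the untranslated gate on $q$ and the $T_{\Delta x,\Delta y,0}$-image of the gate on $q'$ then act on the same qudit in layer $t$ of $P_{n,m,l}(C)$, exhibiting the required collision for any $l\geq 1$.

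For the sufficiency direction, suppose two distinct indexed gates $\hat g_1=T_{\Delta x_1,\Delta y_1,\Delta t_1}(g_1')$ and $\hat g_2=T_{\Delta x_2,\Delta y_2,\Delta t_2}(g_2')$, with $g_1',g_2'\in G(C)$, collide on some qudit $Q$ at absolute time $T$. Since $g_i'.\tilde t\in\{0,\dots,d-1\}$ while $\Delta t_i\in d\,\mathbb Z_l$, equating absolute times gives $\Delta t_1=\Delta t_2$ and $g_1'.\tilde t=g_2'.\tilde t\equiv t$. The qudits in $Q(C)$ underlying the collision must both carry the seed $Q.\tilde s$, so the hypothesis (read in its strong form, consistent with the End Matter SMT encoding, which rules out any two distinct gates touching qudits of that seed in layer $t$) forces both the qudit identity and $g_1'=g_2'$. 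This also pins down $(\Delta x_1,\Delta y_1)=(\Delta x_2,\Delta y_2)$, so $\hat g_1=\hat g_2$, contradicting their distinctness.

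I expect the main obstacle to be bookkeeping for two-qudit gates: the collision can occur on either of the two qudits of either gate, so the sufficiency argument must be carried out across all four choices $A,B\in\{\tilde q,\tilde q'\}$, mirroring the four-fold quantifier in the End Matter SMT encoding. A related subtlety is fixing the interpretation of the hypothesis so that it simultaneously rules out two distinct qudits of the same seed in one layer \emph{and} two distinct gates touching one such qudit; under this reading the proof closes cleanly, whereas a literal count-of-qudits-only reading would require the auxiliary observation that collision-freeness of $C=P_{1,1,1}(C)$ is part of the statement being proved.
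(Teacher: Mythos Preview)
Your overall approach matches the paper's: both directions proceed by pulling gates in $P_{n,m,l}(C)$ back to gates of $C$ via their translation labels, using that temporal translates of depth-$d$ circuits never interact so one may take $l=1$, and then comparing seeds. Your necessity argument is essentially identical to the paper's.

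The one point that needs correcting is your resolution of the sufficiency case where the two pullback qudits coincide ($q_1=q_2$, hence equal spatial shifts). You handle this by reading the hypothesis in a ``strong form'' that forbids two distinct gates touching the same seed in a layer; but that is not the hypothesis as stated, so invoking it does not prove the theorem as written. The paper instead uses that $C$, being a basis circuit, is by definition a valid quantum circuit and therefore already collision-free; in its proof it writes explicitly ``since if they were equal, a gate collision would already occur in $C$, contradicting its validity.'' This is precisely the ``auxiliary observation'' you mention at the end, but you misdiagnose it as circular: collision-freeness of $C$ is an \emph{assumption} carried by the phrase ``Let $C$ be a basis circuit,'' not part of the conclusion you are establishing. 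Once you use that assumption in place of the strong-form reading, your argument closes exactly as the paper's does, and the four-fold case split over $A,B\in\{\tilde q,\tilde q'\}$ you anticipate is indeed implicit but routine.
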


\begin{proof}By construction of $P_{n,m,l}$, there cannot be gate collisions due to temporal translation and hence in the following we may take $l=1$. Suppose there exist $n,m,l \in \mathbb{N}^+$ for which a gate collision occurs in $P_{n,m,l}(C)$. Then there exists a layer $t$ and a qudit $q$ such that two distinct gates $g, g' \in G(C)$, both acting in layer $t$, operate on $q$. By the construction of $P_{n,m,l}(C)$, there exist translations $(\Delta x, \Delta y)$ and $(\Delta x', \Delta y')$ such that $g = T_{\Delta x,\Delta y}(\mathfrak{g})$ and $g' = T_{\Delta x',\Delta y'}(\mathfrak{g}')$ for some $\mathfrak{g},\mathfrak{g}' \in G(C)$. Note that $(\Delta x,\Delta y) \neq (\Delta x',\Delta y')$, since if they were equal, a gate collision would already occur in $C$, contradicting its validity. Applying the corresponding translations to $q$, we obtain $q = T_{\Delta x,\Delta y}(\mathfrak{q})$ and $q = T_{\Delta x',\Delta y'}(\mathfrak{q}')$ with $\mathfrak{q},\mathfrak{q}' \in Q(C)$ and $\mathfrak{q} \neq \mathfrak{q}'$. Since these translations yield the same qudit $q$, it follows that $\mathfrak{q}.\tilde{s} = \mathfrak{q}'.\tilde{s}$.

Conversely, assume that there exist a time $t$, a seed number $s$, and distinct qudits $q,q' \in Q(C)$, both acted on by layer $t$ of $C$, such that $q.\tilde{s} = q'.\tilde{s} = s$. Then there exist $\Delta x,\Delta y \in \mathbb{Z}$ such that $T_{\Delta x,\Delta y}(q') = q$. Consider the circuit layer $t$ of the translated circuit $T_{\Delta x,\Delta y}(C)$. Since $q' \in Q(C)$, this layer acts on the qudit $T_{\Delta x,\Delta y}(q') = q$. Hence, both $C$ and $T_{\Delta x,\Delta y}(C)$ act on qudit $q$ in layer $t$, which implies that there exist integers $n,m$ for which the circuit patch $P_{n,m,l}(C)$ contains a gate collision.
\end{proof}

\subsection{Tileable circuits}

In this work, a \emph{tileable circuit} refers to the informal notion of a circuit intended for spatial and/or temporal repetition. When all qudits of a circuit patch or basis circuit are mapped to integers, the resulting circuit is formally no longer a circuit patch or basis circuit.
Additionally, circuits that lead to full gate collisions may still be considered tileable circuits. In such cases, it is implicitly understood that, depending on the context, these full gate collisions are resolved either by retaining only one of the colliding gates upon spatial repetition or by merging the two gates into a single operation. A gate collision between two gates $g$ and $g'$ is considered \emph{full} if the sets of qudits acted upon by $g$ and $g'$ are identical. Moreover, a circuit may be described as tileable or composed of circuit tiles even before explicitly identifying the specific basis circuit that will be repeated.

\subsection{Reseeding}\label{sec:reseeding}
In the remainder of this work, we will always route basis circuits to hardware with a connectivity graph generated by a basis graph. However, the number of seed numbers of the basis graph (usually equal to the number of vertices in the central cell of the basis graph) may significantly exceed the number of seed numbers in the basis circuit, or vice versa. Moreover, we anticipate that the quality of the routing solution generally improves when large circuit patches are mapped onto correspondingly large hardware patches. Consequently, we seek a general approach for routing circuit patches $P_{n,m,l}(C)$ to hardware generated by patches $P_{n',m'}(B)$ while ensuring that the solution can be tiled.

However, in formulating the routing problem for tileable circuits as a satisfiability modulo theories (SMT) problem (\sec{transpiler}) and in \thm{unique_seeds}, it is significantly more convenient to focus on basis graphs and basis circuits exclusively. Additionally, we have not yet established a formal method for generating lattice graphs from patches or lattice circuits from circuit patches. Fortunately, any lattice graph patch $P_{n',m'}(B)$ can be transformed into a basis graph via a process known as \emph{reseeding}~\cite{kattemolle2024edge}. A similar approach applies to circuit patches.

The key idea behind \emph{reseeding} a hardware patch $P_{n',m'}(B)$ generated by a basis graph $B$ is to assign a unique seed number $s'(v)$ to each vertex $v$ in $P_{n',m'}(B)$ that satisfies $0 \leq v_x < n'$ and $0 \leq v_y < m'$, and then map these vertices to $v' = (0,0,s'(v))$. This effectively places the vertices in the central cell of a new basis graph. Vertices $v$ that fall outside the range $0 \leq v_x < n'$ and $0 \leq v_y < m'$ must be mapped accordingly. The explicit details of this procedure are provided in Ref.~\cite{kattemolle2024edge}. Reseeding a circuit patch is essentially the same as reseeding a lattice graph patch; all qudits have to be updated according to essentially the same procedure.

\subsection{Gate scheduling}\label{sec:scheduling}

Quantum circuits are typically represented as an ordered list of gates, without explicitly specifying the time steps at which the gates should be executed. From this list, a valid time assignment can be obtained by scheduling the $i$th gate, $g^{(i)}$, at time step $i$. However, this results in unnecessarily deep circuits, as no gates are performed in parallel.  A method that leads to shallower circuits is to place each gate as early as possible.  Intuitively, this can be visualized as follows. Consider the circuit diagram and iteratively move each gate to the left until it encounters another gate that prevents further movement. This process continues until no gate can be shifted leftward. The resulting diagram is then partitioned into slices, each representing a layer of depth 1, with all gates in slice $i$ assigned to time step $i$.

However, directly applying this approach to basis circuits leads to invalidly scheduled basis circuits, as collisions may arise when the basis circuit is used to generate a circuit patch. To address this, we first impose periodic boundary conditions on the basis circuit. This is accomplished by setting $g.\tilde{q}.\tilde{x} = g.\tilde{q}.\tilde{y} = 0$ (and similarly for $\tilde{q}'$) for all gates $g$ in the circuit. After imposing the periodic boundary conditions, we then schedule the gates by assigning each gate $g$ to the earliest possible time $g.t$. Finally, we revert all qudits to their original values. This process is depicted in \fig{scheduling}.

In defining a specific basis circuit, it is often less confusing to explicitly specify the times at which gates in the basis circuit should be executed, rather than providing a list of gates without any time assignments. However, the resulting basis circuit may not always fully exploit gate parallelizability. In such cases, the process outlined above can be applied to reduce the circuit depth, and in this context, the process is more accurately referred to as gate \emph{re}scheduling.

\begin{figure}
\input{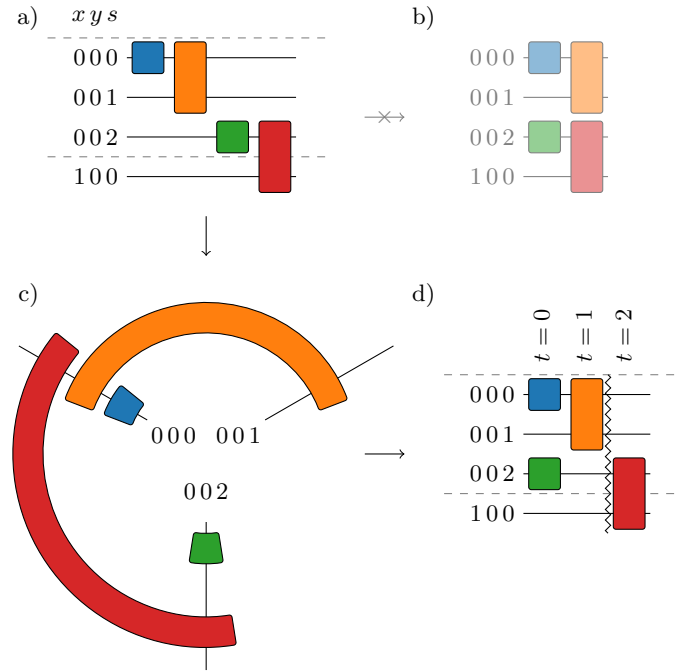}
\caption{\label{fig:scheduling} Gate scheduling. (a) A basis circuit is specified as a list of gates. Dashed lines delineate the central cell of the basis circuit. In this example, the basis circuit is defined on a 1D chain. (b) Treating the basis circuit as a standard circuit results in incorrect gate scheduling; the depicted circuit leads to gate collisions when the circuit is used to generate a circuit patch. (c) The correct scheduling method is to first impose periodic boundary conditions on the circuit. In this picture, time runs radially outward in integer steps. Gates are assigned times accordingly. (d) Periodic boundary conditions are lifted, yielding the correctly scheduled basis circuit.}
\end{figure}

\subsection{Gate dependencies}\label{sec:gate_dependencies}
A gate dependency $(A, B)$ indicates that gate $A$ must be executed before gate $B$. When routing a circuit, the gate dependencies of the input circuit need to be respected in the output circuit as well (unless they explicitly do not need to be respected, as is the case when gates commute or in Trotterized quantum simulation circuits). In tileable circuits, identifying dependencies requires special attention. For example, considering the circuit in \fig{scheduling}a as a standard circuit, there is no dependency between the two-qudit gates $g^{(1)}$ (the first gate) and $g^{(3)}$ (the third gate). However, when treating it as a basis circuit, we have the dependency $(g^{(1)}, g^{(3)})$. This dependency becomes apparent only in the wrapped circuit [\fig{scheduling}c].

To extract gate dependencies from the wrapped basis circuit, we use its directed acyclic graph (DAG) representation (\fig{DAG}). In this representation, gates correspond to vertices. Each vertex has incoming edges labeled with the qudits the gate acts on and outgoing edges labeled with the qudits it has acted on. The edges labeled with qudit $q$ are sourced from a vertex $\mrm{source}_q$ and terminate at a sink vertex $\mrm{sink}_q$. The DAG representation can be viewed as a formalized version of standard quantum circuit diagrams, where the absolute positioning of objects carries no intrinsic meaning.

\begin{figure}
    \definecolor{blue}{rgb}{0.12156862745098039, 0.4666666666666667, 0.7058823529411765}
\definecolor{orange}{rgb}{1.0, 0.4980392156862745, 0.054901960784313725}
\definecolor{green}{rgb}{0.17254901960784313, 0.6274509803921569, 0.17254901960784313}
\definecolor{red}{rgb}{0.8392156862745098, 0.15294117647058825, 0.1568627450980392}
\definecolor{purple}{rgb}{0.5803921568627451, 0.403921568627451, 0.7411764705882353}
\definecolor{brown}{rgb}{0.5490196078431373, 0.33725490196078434, 0.29411764705882354}
\definecolor{pink}{rgb}{0.8901960784313725, 0.4666666666666667, 0.7607843137254902}
\definecolor{pygray}{rgb}{0.4980392156862745, 0.4980392156862745, 0.4980392156862745}
\definecolor{olive}{rgb}{0.7372549019607844, 0.7411764705882353, 0.13333333333333333}
\definecolor{9}{rgb}{0.09019607843137255, 0.7450980392156863, 0.8117647058823529}
\definecolor{lightblue}{rgb}{0.9,.95,1}


\begin{tikzpicture}[font=\small]
    \usetikzlibrary{arrows.meta, positioning,shapes.geometric}
    \node[draw, shape=ellipse, minimum width=1.5cm, minimum height=0.8cm] (n1) at (2, -2) {001};
    \node[draw, shape=ellipse, fill=orange, minimum width=1.5cm, minimum height=0.8cm] (n2) at (4, -4) {{\color{white}$g^{(1)}[000,001]$}};
    \draw[->] (n1) -- (n2) node[midway, sloped, above] {001};

    \node[draw, fill=gray,shape=ellipse, minimum width=1.5cm, minimum height=0.8cm] (n5) at (6, -2) {{\color{white}001}};
    \draw[->] (n2) -- (n5) node[midway, sloped, above] {001};

    \node[draw, shape=ellipse, minimum width=1.5cm, minimum height=0.8cm] (n3) at (0, -8) {000};
    \node[draw, shape=ellipse, fill=blue, minimum width=1.5cm, minimum height=0.8cm] (n4) at (2, -6) {{\color{white}$g^{(0)}$}};
    \draw[->] (n3) -- (n4) node[midway, sloped, above] {000};

    \draw[->] (n4) -- (n2) node[midway, sloped, above] {000};

    \node[draw, shape=ellipse, fill=red, minimum width=1.5cm, minimum height=0.8cm] (n6) at (6, -6) {{\color{white}$g^{(3)}[000,002]$}};
    \draw[->] (n2) -- (n6) node[midway, sloped, above] {000};

    \node[draw, shape=ellipse, minimum width=1.5cm, minimum height=0.8cm, fill=gray] (n7) at (8, -4) {{\color{white}000}};
    \draw[->] (n6) -- (n7) node[midway, sloped, above] {000};

    \node[draw, shape=ellipse, minimum width=1.5cm, minimum height=0.8cm, fill=gray] (n8) at (8, -8) {{\color{white}002}};
    \draw[->] (n6) -- (n8) node[midway, sloped, above] {002};

    \node[draw, shape=ellipse, minimum width=1.5cm, minimum height=0.8cm] (n9) at (2, -10) {002};
    \node[draw, shape=ellipse, fill=green, minimum width=1.5cm, minimum height=0.8cm] (n10) at (4, -8) {{\color{white}$g^{(2)}[000,001]$}};
    \draw[->] (n9) -- (n10) node[midway, sloped, above] {002};

    \draw[->] (n10) -- (n6) node[midway, sloped, above] {002};

\end{tikzpicture}

    \caption{\label{fig:DAG} DAG representation of the circuit in \fig{scheduling}c. Source nodes are white and sink nodes are gray. Each gate node also displays the qudits it acts on as an ordered list. This information is necessary; otherwise, it would be impossible to determine, for example, in the case of a CNOT gate, which of the two incoming edges represents the control qudit and which represents the target qudit.}
\end{figure}
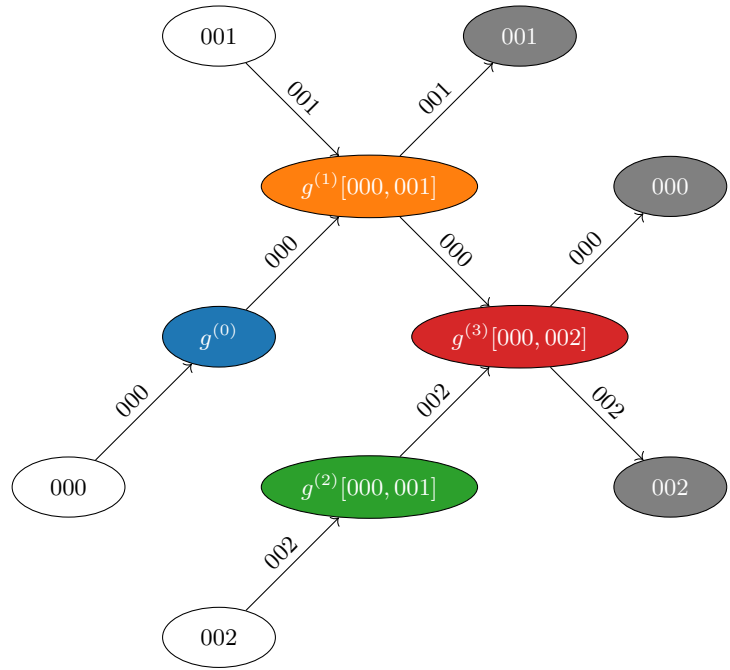

Denote the $j$th gate in the basis circuit by $g^{(j)}$ and its counterpart after imposing periodic boundary conditions by $g^{(j)'}$. From the DAG of the wrapped basis circuit, we iterate through all gates $g^{(j)'}$. For each $g^{(j)'}$, we add to a dependency list all pairs $(g^{(i)}, g^{(j)})$ such that $g^{(i)'}$ is a parent of $g^{(j)'}$ in the DAG. The dependency list thus contains only the `direct' dependencies. However, since gate dependency is transitive, ensuring that direct dependencies are respected in the routed circuit guarantees that all gate dependencies are respected.

\section{Tileable circuits from Trotterization}\label{sec:trotterization}

Consider the Hamiltonian
\begin{equation}\label{eq:klocal}
    H=\sum_{i_1,\ldots,i_k} H_{i_1,\ldots,i_k},
\end{equation}
where each term $H_{i_1,\ldots,i_k}$ acts on $k$ degrees of freedom $(i_1,\ldots, i_k)$ defined on an (infinite) Bravais lattice, with $k$ fixed throughout the lattice. Assume that $H$ is geometrically local, meaning that for any nonzero term $H_{i_1,\ldots,i_k}$, the indices $i_1,\ldots,i_k$ lie within a neighborhood where the maximum distance between any two degrees of freedom $i_j$ and $i_l$ is independent of system size. We assume this neighborhood corresponds to a supercell of $3 \times 3$ unit cells in two dimensions or $3^D$ unit cells in general dimension $D$. This assumption is without loss of generality because, given the geometric locality of $H$, the unit cell size can always be enlarged until the neighborhood fully contains all degrees of freedom within a fixed radius.

In quantum simulation by Trotterization, the system size is fixed to $n$ degrees of freedom, leading to the Hamiltonian
\begin{equation}\label{eq:patch}
    H=\sum_{\gamma=1}^\Gamma H_\gamma
\end{equation}
for some ordering $(H_1,\ldots,H_\Gamma)$ of $\Gamma$ of the terms in \eq{klocal}, where $H_\gamma$ denotes the $\gamma$th term in this ordering. An approximation $[\tilde U_p(t/r)]^r$ of the time-evolution unitary $U(t)=\ee^{-itH}$ is applied to an initial state. (Throughout, we use units where $\hbar=1$.) Here, $p$ determines the order of the approximation, and $r$ is the number of Trotter steps.

The first-order Lie-Trotter approximation is given by
\begin{equation}\label{eq:lie-trotter}
    \tilde U_1(t)=\ee^{-\ii t H_1} \ldots \ee^{-\ii t H_\Gamma}.
\end{equation}
Higher-order Suzuki formulas are defined recursively as
\begin{equation}\label{eq:suzuki}
    \begin{aligned}
        \tilde U_2(t) &= \ee^{-\ii \frac{t}{2} H_\Gamma} \ldots \ee^{-\ii \frac{t}{2} H_1}
                        \ee^{-\ii \frac{t}{2} H_1} \ldots \ee^{-\ii \frac{t}{2} H_\Gamma}, \\
        \tilde U_{2\ell}(t) &= [\tilde U_{2\ell-2}(u_\ell t)]^2
                           \tilde U_{2\ell-2}[(1 - 4u_\ell) t]
                           [\tilde U_{2\ell-2}(u_\ell t)]^2,
    \end{aligned}
\end{equation}
with $\tilde U_{2\ell-1}(t) = \tilde U_{2\ell}(t)$ and $u_\ell = 1/[4 - 4^{1/(2\ell-1)}]$~\cite{suzuki1991general}. We refer to $\tilde U_p(t)$ (for any $p=1,2,4,\ldots$) as a \emph{Trotter step}. Notably, $[\tilde U_{p}(\sfrac{t}{r})]^r$ consists of $k$-body gates, each of which can be compiled into a number of two-body gates independent of $n$. Given $H$ as in \eq{klocal} the circuit $[\tilde U_p(t)]^r$ is tileable (and can hence be described as a patch of a lattice circuit).

In Ref.~\cite{childs2021theory}, it was shown that
\begin{equation} \lVert\tilde U_p(t)-\ee^{-\ii tH}\rVert=O(\lvert\hspace{-.1em}\lvert\hspace{-.1em}\lvert H\rvert\hspace{-.1em}\rvert\hspace{-.1em}\rvert^p_1,\lVert H \rVert_1 t^{p+1}), \end{equation}
where
\begin{equation} \lvert\hspace{-.1em}\lvert\hspace{-.1em}\lvert H\rvert\hspace{-.1em}\rvert\hspace{-.1em}\rvert_1=\max_l\max_{i_l}\sum_{i_1,\ldots,i_{l-1},i_{l+1},\ldots,i_k}\lVert H_{i_1,\ldots,i_k}\rVert, \end{equation}
with $\Vert\cdot\rVert$ denoting the operator norm, and
\begin{equation} \lVert H \rVert_1=\sum_{\gamma}\lVert H_\gamma \rVert. \end{equation}
The sum in $\lvert\hspace{-.1em}\lvert\hspace{-.1em}\lvert H\rvert\hspace{-.1em}\rvert\hspace{-.1em}\rvert_1$ implicitly only includes terms also appearing in \eq{patch}.

We now consider what this implies for lattice circuits. Note that, due to locality, $\lvert\hspace{-.1em}\lvert\hspace{-.1em}\lvert H\rvert\hspace{-.1em}\rvert\hspace{-.1em}\rvert_1=O(1)$ and $\lVert H \rVert_1=O(n)$. Using the inequality
\begin{equation} \lVert [\tilde U_p(t/r)]^r-\ee^{-\ii t H}\rVert\leq r \lVert\tilde U_p(t/r)-\ee^{-\ii \frac{t}{r} H}\rVert, \end{equation}
the error from applying the $p$th-order Suzuki formula becomes
\begin{equation} \lVert [\tilde U_p(t/r)]^r-\ee^{-\ii t H}\rVert=O\left[nt\left(\sfrac{t}{r}\right)^p\right]. \end{equation}
Thus, the error is expected to be reduced with increasing $p$, but note that this also increases the number of gates per Trotter step.

In qudit routing, the temporal periodicity, or cyclicity, of the circuit $[\tilde U_p(t/r)]^r$ can be directly exploited. One first routes the circuit $\tilde U_p(t/r)$ and then repeats the routed version $r$ times. The only requirement is that the logical qudits $q$ return to their initial positions after the final layer of the routed circuit $\tilde U_p(t/r)$. (In our implementation~\cite{kattemolle2024quantile}, this can be achieved by setting the transpiler option \lin{cyclic = True}.)

Higher-order Suzuki formulas provide an advantage in routing. The second-order Suzuki formula consists of two applications of the first-order formula, where the second application has the reversed gate order. We may write this as
\begin{equation}
  \tilde U_2(t) = \tilde U_1^\dagger(-t/2) \tilde U_1(t/2).
\end{equation}
If $C(t)$ is the routed version of $\tilde U_1(t/2)$ (where the logical qudits need not return to their initial positions), then in the circuit $C^\dagger(-t/r)C(t/r)$
the logical qudits do return to their initial positions. Therefore, when routing $[\tilde U_2(t/r)]^r$, it suffices to route $\tilde U_1(t/r)$ without enforcing that the logical qudits return to their initial positions. That is,
\begin{equation}
  [\tilde U_2(t/r)]^r = [C^\dagger(-t/r)C(t/r)]^r.
\end{equation}
More generally, the same alternating structure of $\tilde U_1$ and its reversed circuit persists for any Suzuki order $p$. Thus, to route $r$ repetitions of $p$th order Trotter step, it suffices to route $\tilde U_1$, independent of $r$ and $p$, and without the need to impose that the logical qudits return to their initial positions.

\section{Examples}\label{sec:examples}

Here we treat some example circuits to which QuanTile is later applied. All circuits can also be found explicitly under \lin{circuits/} in the implementation \cite{kattemolle2024quantile}.

\subsection{Arbitrary two-local Hamiltonians}\label{sec:ATL}
Consider the Hamiltonian
\begin{equation}\label{eq:two-local}
 H=\sum_{i\in V(G)} H_{i} + \sum_{(i,j)\in E(G)}H_{ij},
\end{equation}
where $i$ ranges over all vertices $V(G)$ of some (possibly directed) graph $G$, $H_i$ is a Hermitian operator acting on qudit $i$ only, $(i,j)$ ranges over all edges of $G$ ($E(G)$), and where $H_{ij}$ is a Hermitian operator acting on qudits $i$ and $j$ only. For this general two-local Hamiltonian, first-order Trotterization (\sec{trotterization}) amounts to
\begin{align}
  \ee^{-\ii t H}\approx [\tilde U_1(t/r)]^r, && \tilde U_1(t/r)=\left(\prod_{(i,j)\in E(G)}U_{ij}\right)\left(\prod_{i\in V(G)}U_i\right),&& U_{i}\left(\sfrac{t}{r}\right)=\ee^{-\ii \frac{t}{r} H_{i}},&&U_{ij}\left(\sfrac{t}{r}\right)=\ee^{-\ii \frac{t}{r} H_{ij}}\label{eq:general_gate},
\end{align}
and similarly for the higher-order formulas. The single-qudit unitaries $U_i$ are irrelevant for the routing problem as they do not require any qudit connectivity. They can also be absorbed into preceding or subsequent two-qudit gates (except when $G$ contains isolated vertices, which are trivial to simulate).

The unitaries $U_{ij}$ are two-qudit operators. For the routing problem, only the structure of the circuit $\prod_{(i,j)} U_{ij}$ is relevant, treating each $U_{ij}$ as a monolithic gate. This structure depends solely on the graph $G$ and not on the interactions defined by $H_i, H_j, H_{ij}$. Consequently, when solving the routing problem for quantum simulation via Trotterization, it suffices to specify the graph $G$. The physical details of the two-local interactions—beyond the edges on which they act—can be incorporated later, even in postprocessing, by explicitly defining $U_{ij}$ in terms of $H_i, H_j,$ and $H_{ij}$, if needed.

The \emph{interactions} that can be considered include the following.
\begin{itemize}
\item[-] \emph{Ising.}
For the disordered transverse-field Ising model, the Hamiltonian terms are given by
\begin{align}
H_{i} = \la_i X_i, && H_{ij} = \la_{ij} Z_i Z_j,
\end{align}
where $\la_i$ and $\la_{ij}$ are site-dependent coupling parameters, and $X_i,Z_i$ are Pauli operators acting on qubit $i$. The model includes the standard homogeneous Ising model $\la_i=0, \la_{ij}=\mrm{const.}$ as a special case.
\item[-] \emph{XXZ.} For the (disordered) XXZ model,
\begin{align}\label{eq:XXZ}
 H_i=\la_i Z_i,  && H_{ij}=\la_{ij}[(X_iX_j+Y_iY_j)+\Delta_{ij} Z_iZ_j],
\end{align}
with anisotropy parameters $\Delta_{ij}$. It reduces to the Heisenberg model (sometimes called the XXX in this context) at $\la_i=0, \la_{ij}=\mrm{const.}, \Delta_{ij}=1$.

\item[-] \emph{Bose-Hubbard.}
The Bose-Hubbard model is given by
\begin{equation}\label{eq:BH}
  H = - T \sum_{(i,j)} (b_{i}^\dagger b_{j}^\nodagger+b_{j}^\dagger b_{i}^\nodagger)+\frac{U}{2}\sum_in_{i}(n_{i}-1)-\mu
  \sum_in_i,
\end{equation}
        with $T$ the tunneling constant, $b_i^\dagger$ ($b_i$) bosonic creation (annihilation) operators, $U$ the strength of the on-site repulsion, $n_i=b_i^\dagger b_i^\nodagger$ the number operator, and $\mu$ the chemical potential. It is essentially the Fermi-Hubbard model [\eq{FH}], but with bosonic operators $b_i$ instead of fermionic operators, and an added chemical potential, which is commonly omitted for the Fermi-Hubbard model. For all $i$, let us truncate the space that $b_i$ and $b_i^\dagger$ act on to $l$ levels. Then, $H$ becomes a Hamiltonian defined on qudits, where the space of qudit $i$ is spanned by $\{\ket m_i\}_{m=0}^{l-1}$. In this basis, the bosonic operators become matrices whose elements are given by $b_i\ket{m}_i=\sqrt{m}\ket{m-1}_i$, $b_i^\dagger=\sqrt{m+1}\ket{m+1}_i$ and the Hamiltonian becomes a two-local Hamiltonian of the form of \eq{two-local}. For qubits specifically, $b_i\mapsto \si^-_i:=(X_i+\ii Y_i)/2$, $b_i^\dagger\mapsto \si^+_i:=(X_i-\ii Y_i)/2$. Since $(\si_i^+)^2=0$, the resulting model describes hard-core bosons, where, similar to fermions, each bosonic mode cannot be occupied by more than one boson. For qubits, the above mapping gives
\begin{equation}
  H=-\frac{T}{2}\sum_{(i,j)}(X_iX_j+Y_iY_j)-\frac{\mu}{2}\sum_i(\id-Z_i),
\end{equation}
which is equivalent to a XXZ model.
\item[-]\emph{Kitaev.} In the Kitaev honeycomb model, spin-1/2 particles are placed on the vertices of the honeycomb lattice and interact via
\begin{equation}
H=\sum_{\xlink}X_iX_j+\sum_{\ylink}Y_iY_j+\sum_{\zlink}Z_iZ_j,
\end{equation}
where the sums are over all edges in the honeycomb lattice in the indicated direction.

\end{itemize}

The (infinite) \emph{graphs} we consider are the Archimedean lattices, the Laves lattices, and various other lattices, all depicted in \sec{basis_graphs_database}. Worth mentioning here are the J1J2-$G$ models, obtained from graphs $G$ by adding edges to geometrically nearest and second-nearest neighbors. J1J2J3-$G$ models additionally add edges to third-nearest neighbors. The notation J1 refers to the uniform strength of the nearest-neighbor interaction that is commonly assumed, with J2 and J3 following the same convention for the second- and third-nearest neighbors.

\subsection{Rule 54}
The Toffoli-gate model~\cite{gopalakrishnan2018facilitated}, or Rule 54 quantum cellular automaton~\cite{klobas2021exact}, is defined by the unitary gate
\begin{equation}\label{eq:rule54}
  \begin{tikzpicture}[scale=1.000000,x=1pt,y=1pt]
\filldraw[color=white] (0.000000, -7.500000) rectangle (221.000000, 37.500000);
\draw[color=black] (0.000000,30.000000) -- (221.000000,30.000000);
\draw[color=black] (0.000000,15.000000) -- (221.000000,15.000000);
\draw[color=black] (0.000000,0.000000) -- (221.000000,0.000000);
\draw (8.000000,30.000000) -- (8.000000,0.000000);
\begin{scope}
\draw[fill=blue] (8.000000, 15.000000) +(-45.000000:8.485281pt and 29.698485pt) -- +(45.000000:8.485281pt and 29.698485pt) -- +(135.000000:8.485281pt and 29.698485pt) -- +(225.000000:8.485281pt and 29.698485pt) -- cycle;
\clip (8.000000, 15.000000) +(-45.000000:8.485281pt and 29.698485pt) -- +(45.000000:8.485281pt and 29.698485pt) -- +(135.000000:8.485281pt and 29.698485pt) -- +(225.000000:8.485281pt and 29.698485pt) -- cycle;
\draw (8.000000, 15.000000) node {$\ $};
\end{scope}
\draw[fill=white,color=white] (18.000000, -6.000000) rectangle (33.000000, 36.000000);
\draw (25.500000, 15.000000) node {$:=$};
\draw (40.000000,30.000000) -- (40.000000,15.000000);
\filldraw (40.000000, 30.000000) circle(1.500000pt);
\begin{scope}
\draw[fill=white] (40.000000, 15.000000) circle(3.000000pt);
\clip (40.000000, 15.000000) circle(3.000000pt);
\draw (37.000000, 15.000000) -- (43.000000, 15.000000);
\draw (40.000000, 12.000000) -- (40.000000, 18.000000);
\end{scope}
\draw (50.000000,30.000000) -- (50.000000,0.000000);
\filldraw (50.000000, 30.000000) circle(1.500000pt);
\begin{scope}
\draw[fill=white] (50.000000, 15.000000) circle(3.000000pt);
\clip (50.000000, 15.000000) circle(3.000000pt);
\draw (47.000000, 15.000000) -- (53.000000, 15.000000);
\draw (50.000000, 12.000000) -- (50.000000, 18.000000);
\end{scope}
\filldraw (50.000000, 0.000000) circle(1.500000pt);
\draw (60.000000,15.000000) -- (60.000000,0.000000);
\filldraw (60.000000, 0.000000) circle(1.500000pt);
\begin{scope}
\draw[fill=white] (60.000000, 15.000000) circle(3.000000pt);
\clip (60.000000, 15.000000) circle(3.000000pt);
\draw (57.000000, 15.000000) -- (63.000000, 15.000000);
\draw (60.000000, 12.000000) -- (60.000000, 18.000000);
\end{scope}
\draw[fill=white,color=white] (67.000000, -6.000000) rectangle (82.000000, 36.000000);
\draw (74.500000, 15.000000) node {$=$};
\draw (103.500000, 37.500000) node[text width=144pt,above,text centered] {0};
\draw (103.500000,30.000000) -- (103.500000,15.000000);
\begin{scope}
\draw[fill=white] (103.500000, 15.000000) +(-45.000000:24.748737pt and 12.020815pt) -- +(45.000000:24.748737pt and 12.020815pt) -- +(135.000000:24.748737pt and 12.020815pt) -- +(225.000000:24.748737pt and 12.020815pt) -- cycle;
\clip (103.500000, 15.000000) +(-45.000000:24.748737pt and 12.020815pt) -- +(45.000000:24.748737pt and 12.020815pt) -- +(135.000000:24.748737pt and 12.020815pt) -- +(225.000000:24.748737pt and 12.020815pt) -- cycle;
\draw (103.500000, 15.000000) node {\raisebox{-10pt}{$\tilde{\mathrm{RX}}_{3\pi/2}$}};
\end{scope}
\filldraw (103.500000, 30.000000) circle(1.500000pt);
\draw (128.000000, 37.500000) node[text width=144pt,above,text centered] {1};
\draw (128.000000,30.000000) -- (128.000000,0.000000);
\filldraw (128.000000, 0.000000) circle(1.500000pt);
\begin{scope}
\draw[fill=white] (128.000000, 30.000000) circle(3.000000pt);
\clip (128.000000, 30.000000) circle(3.000000pt);
\draw (125.000000, 30.000000) -- (131.000000, 30.000000);
\draw (128.000000, 27.000000) -- (128.000000, 33.000000);
\end{scope}
\draw (152.500000, 37.500000) node[text width=144pt,above,text centered] {2};
\draw (152.500000,30.000000) -- (152.500000,15.000000);
\begin{scope}
\draw[fill=white] (152.500000, 15.000000) +(-45.000000:24.748737pt and 12.020815pt) -- +(45.000000:24.748737pt and 12.020815pt) -- +(135.000000:24.748737pt and 12.020815pt) -- +(225.000000:24.748737pt and 12.020815pt) -- cycle;
\clip (152.500000, 15.000000) +(-45.000000:24.748737pt and 12.020815pt) -- +(45.000000:24.748737pt and 12.020815pt) -- +(135.000000:24.748737pt and 12.020815pt) -- +(225.000000:24.748737pt and 12.020815pt) -- cycle;
\draw (152.500000, 15.000000) node {\raisebox{-10pt}{$\tilde{\mathrm{RX}}_{-\pi/2}$}};
\end{scope}
\filldraw (152.500000, 30.000000) circle(1.500000pt);
\draw (177.000000, 37.500000) node[text width=144pt,above,text centered] {3};
\draw (177.000000,30.000000) -- (177.000000,0.000000);
\filldraw (177.000000, 0.000000) circle(1.500000pt);
\begin{scope}
\draw[fill=white] (177.000000, 30.000000) circle(3.000000pt);
\clip (177.000000, 30.000000) circle(3.000000pt);
\draw (174.000000, 30.000000) -- (180.000000, 30.000000);
\draw (177.000000, 27.000000) -- (177.000000, 33.000000);
\end{scope}
\draw (201.500000, 37.500000) node[text width=144pt,above,text centered] {4};
\draw (201.500000,15.000000) -- (201.500000,0.000000);
\begin{scope}
\draw[fill=white] (201.500000, 15.000000) +(-45.000000:24.748737pt and 12.020815pt) -- +(45.000000:24.748737pt and 12.020815pt) -- +(135.000000:24.748737pt and 12.020815pt) -- +(225.000000:24.748737pt and 12.020815pt) -- cycle;
\clip (201.500000, 15.000000) +(-45.000000:24.748737pt and 12.020815pt) -- +(45.000000:24.748737pt and 12.020815pt) -- +(135.000000:24.748737pt and 12.020815pt) -- +(225.000000:24.748737pt and 12.020815pt) -- cycle;
\draw (201.500000, 15.000000) node {\raisebox{-10pt}{$\tilde{\mathrm{RX}}_{3\pi/2}$}};
\end{scope}
\filldraw (201.500000, 0.000000) circle(1.500000pt);
\end{tikzpicture}.
\end{equation}
To decompose the Toffoli gate, we used the standard decomposition of doubly-controlled gates in \fig{circuitids}c, with $U=X=\tilde{\mathrm{RX}}_{\pi}$ and $V=\sqrt{X}=\tilde{\mathrm{RX}}_{\pi/2}$ (with $U$ and $V$ as defined in  \fig{circuitids}c), where $\tilde{\mathrm{RX}}_\al=\ee^{\ii\al/2}\ee^{-\ii X/2}$. Additionally, in \eq{rule54}, we have merged the leftmost and rightmost CNOTs with the $C[\tilde{\mrm{RX}}_{\pi/2}]$ gates arising in the Toffoli decomposition, considering it a single two-qubit unitary. In the Rule 54 quantum cellular automaton, the gate on the left-hand side of \eq{rule54} constitutes the circuit

\begin{equation}
  \input{figures/rule54circ.tikz}\hspace{2em}.
\end{equation}

Here, the basis circuit that is repeated in space and time is depicted with darker blue gates. In the implementation~\cite{kattemolle2024quantile}, the two-qubit gates in this basis circuit, obtained after decomposition according to \eq{rule54}, are labeled sequentially by simply continuing the labeling in \eq{rule54}.

\subsection{2D Fermi-Hubbard}
The Fermi-Hubbard Hamiltonian is given by
\begin{equation}\label{eq:FH}
  H = - T \sum_{(i,j),\si} (a_{i\si}^\dagger a_{j\si}^\nodagger+a_{j\si}^\dagger a_{i\si}^\nodagger)+U\sum_in_{i\uparrow}n_{i\downarrow},
\end{equation}
where $a_{i\si}^\dagger$ ($a_{i\si}$) creates (annihilates) an electron at site $i$ and where $n_{i\si}=a_{i\si}^\dagger a_{i\si}^\nodagger$ is the number operator of the orbital with spin $\si$ at site $i$. The parameter $T$ sets the tunneling energy and $U$ sets the strength of the on-site repulsion. The first sum runs over all edges $(i,j)$ of the square lattice and the spin labels $\si\in\{\uparrow,\downarrow\}$, the second sum runs over all vertices of the square lattice.

The fermionic operators $a_{i\si}^\dagger$ and $a_{i\si}^\nodagger$ satisfy the fermionic anticommutation relations. The Fermi-Hubbard Hamiltonian can be mapped to a Hamiltonian defined on qubits by the Jordan-Wigner transformation, which, however, destroys the 2D periodic structure of $H$. Therefore, we instead use the hybrid encoding from Ref.~\cite{clinton2024towards} which is derived from the compact encoding~\cite{derby2021compact}.

In the hybrid encoding, one introduces Majorana operators $\gamma_{i\si}=a_{i\si}^\nodagger+a_{i\si}^\dagger$ and $\bar \gamma_{i\si}=(a_{i\si}^\nodagger-a_{i\si}^\dagger)/\ii$, which define edge and vertex operators
\begin{equation}
  E_{i\si,i'\si'}=-\ii\gamma_{i\si}\gamma_{i'\si'}, \quad V_{i\si}=-\ii\gamma_{i\si}\bar\gamma_{i\si}.
\end{equation}
The edge and vertex operators are Hermitian, they anticommute when they share one (and only one) index $i\si$, and commute otherwise. In terms of these operators, the Hamiltonian reads
\begin{equation}\label{eq:FHev}
  H\hat =\frac{\ii T}{2}\sum_{(i,j),\si}(E_{{i\si},j\si}V_{j\si}+V_{i\si}E_{{i\si},{j\si}})+\frac{U}{4}\sum_i(V_{i\uparrow}V_{i\downarrow}-V_{i\downarrow}-V_{i\uparrow}),
\end{equation}
where the hat indicates equality up to a term proportional to the identity. A final map maps the edge and vertex operators to Pauli words (tensor products of Pauli operators) so that these words satisfy the same anticommutation relations as the edge and vertex operators. This leads to the encoding in \fig{hybrid}.

To simulate the time evolution generated by the Fermi-Hubbard Hamiltonian using Trotterization, the remaining task is to construct a circuit cycle that implements $\ee^{- \ii P \frac{t}{r}}$ for some real number $\frac{t}{r}$ for all tensor Pauli words $P$ arising from the encoding (i.e., the Pauli words depicted in \fig{hybrid}). In the cases where $P$ is a single- or two-qubit Pauli word, $\ee^{- \ii P \frac{t}{r}}$ does not need to be compiled further to single- and two-qubit operations and they can hence be added to the circuit cycle directly. In the cases where $P$ is a three-qubit Pauli word, $\ee^{- \ii P \frac{t}{r}}$ can be compiled into two-qubit operations by the XYZ decomposition~\cite{sriluckshmy2023optimal}.

This XYZ decomposition reads
\begin{equation}\label{eq:XYZ}
  \ee^{\ii \al A}=\ee^{\ii \frac{\pi}{4} B}\ee^{\ii \al C}\ee^{-\ii \frac{\pi}{4} B}
\end{equation}
for unitary operators $A,B,C$ satisfying $A^2=\id$ and $[C,B]=2\ii A$~\cite{sriluckshmy2023optimal}.  Consider a three-qubit operator $A=A_0A_1A_2$, where $A_i$ is a Pauli operator acting on qubit $i$. Let $B=A_0B_1$ for some Pauli operator $B_1$ (possibly carrying an extra factor of $\pm$) and similarly $C=C_1A_2$.  Then, the condition $[C,B]=2\ii A$ becomes
\begin{equation}
  [C_1,B_1]=2\ii A _1,
\end{equation}
which can be satisfied for any $A_1$.

Applying the XYZ decomposition to the exponents of the three-qubit Pauli words in \fig{hybrid}, and adding the results to the circuit cycle (in addition to the exponents of the single- and two-qubit Pauli words), while ensuring tileability and while optimizing the order as to cancel as many gates as possible (using $\ee^{\ii \al A}\ee^{\ii \al A'}=\ee^{\ii \frac{\pi}{4} B}\ee^{\ii \al C}\ee^{-\ii \frac{\pi}{4} B}\ee^{\ii \frac{\pi}{4} B'}\ee^{\ii \al' C'}\ee^{-\ii \frac{\pi}{4} B'}=\ee^{\ii \frac{\pi}{4} B}\ee^{\ii \al C}\ee^{\ii \al' C'}\ee^{-\ii \frac{\pi}{4} B'}$ when $B=B'$), leads to the circuit in \fig{fermi-hubbard}.

\begin{figure}
  \centering
\newcommand{\blue}[1]{{\color{blue} #1}}
\newcommand{\orange}[1]{{\color{orange} #1}}
\newcommand{\red}[1]{{\color{red} #1}}
\begin{tikzpicture}[scale=1.5]
    \draw[green] (2,.5) -- (-.5,3);
    \draw[green] (-2,.5) -- (.5,3);
    \draw[green] (-2,.5) -- (.5,3);
    \draw[green] (-2,1.5) -- (.5,-1);
    \draw[green] (-.5,-1) -- (2,1.5);

    \node[outer sep=4pt] (0) at (0,0) {0};
    \node[outer sep=4pt] (1) at (1,1) {1};
    \node[outer sep=4pt] (2) at (0,2) {2};
    \node[outer sep=4pt] (3) at (-1,1) {3};
    \node[outer sep=4pt] (4) at (0,1) {4};

    \node[outer sep=4pt] (010) at (0,3) {0};
    \node[outer sep=4pt] (103) at (2,1) {3};

    \path (0) -- (1) coordinate[midway] (01);
    \path (1) -- (2) coordinate[midway] (12);
    \path (2) -- (3) coordinate[midway] (23);
    \path (3) -- (0) coordinate[midway] (30);

    \draw[->,gray] (0) to node[pos=-.15] {{\tiny \hspace{2pt} $\blue{X}\orange{\ii Y}$}} node[pos=1.15] {{\tiny $\blue{\ii X}\orange{Y}$}} (1);
    \draw[gray] (01) -- (4) node[pos=1.3] {{\tiny $\blue{X}\orange{X}$}};

    \draw[->,gray] (2) to node[pos=-.15] {{\hspace{13pt}\tiny $\blue{\mathsmaller -X}\orange{\mathsmaller -\ii Y}$}} node[pos=1.15] {{\tiny $\blue{\ii X}\orange{Y}$}} (1);
    \draw[gray] (12) -- (4) node[pos=1.3] {{\tiny $\blue{Y}\orange{Y}$}};

    \draw[->,gray] (2) to node[pos=-.15] {{\tiny $\blue{X}\orange{\ii Y}$}} node[pos=1.15] {{\tiny $\blue{\ii X}\orange{Y}$}} (3);
    \draw[gray] (23) -- (4) node[pos=1.3] {{\tiny $\blue{X}\orange{X}$}};

    \draw[->,gray] (0) to node[pos=-.15] {{\tiny \hspace{-7pt} $\blue{X}\!\orange{\ii Y}$}} node[pos=1.15] {{\tiny $\blue{\ii X}\orange{Y}$}} (3);
    \draw[gray] (30) -- (4) node[pos=1.3] {{\tiny $\blue{Y}\orange{Y}$}};

    \draw[->,gray] (1) to node[pos=-.15] {{\tiny $\red{Z}$}} node[pos=1.15] {{\tiny $\red{Z}$}} (103);
    \draw[->,gray] (010) to node[pos=-.15] {{\tiny $\red{Z}$}} node[pos=1.15] {{\tiny $\red{Z}$}} (2);

    \draw[dashed, lightgray] (1.5,-1) -- (1.5,3);
    \draw[dashed, lightgray] (-1.5,-1) -- (-1.5,3);
    \draw[dashed, lightgray] (-2,-.5) -- (2,-.5);
    \draw[dashed, lightgray] (-2,2.5) -- (2,2.5);

\end{tikzpicture}

    \caption{\label{fig:hybrid} Hybrid encoding of the 2D Fermi-Hubbard model. The green lines depict the 2D square lattice on which the original Hamiltonian [\eq{FH}] is defined. The orbitals [labelled $i$ in \eq{FH}] reside on the  intersections of the green lines (one orbital per intersection). Each orbital consists of two spin-orbitals [labelled $\si$ in \eq{FH}]. The spin-up orbitals are associated with qubits placed at nodes 1--4 inside the central unit cell. The spin-down orbitals arise in altered copies of this central cell, as explained later. A Pauli operator close to a qubit indicates it acts on that qubit and operators of the same color connected by a (multi)edge act simultaneously. Operators of the type $E_{{i\si},{j\si}}V_{j\si}$ from \eq{FHev} are mapped to the blue Pauli operators, while those of the type $V_{i\si}E_{{i\si},{j\si}}$ are mapped to the orange operators, and operators of the type $V_{i\uparrow}V_{i\downarrow}$ are mapped to the red operators. Operators of the type $V_{i\si}$ (not shown) map to $Z_{i\si}$, a single-qubit Pauli-$Z$ operator at $i\si$. These operators hit only those qubits already having a $Z$. The full encoding of a patch of $n$ by $m$ orbitals is obtained by repeating the shown structure in $n$ by $m$ unit cells, where the arrows (and the Pauli operators attached to them) are reversed in spatially alternating unit cells (which we call the even and odd cells). The even cells encode the spin-up orbitals, whereas the odd cells encode the spin-down orbitals. Additionally, a minus sign is added to both the reversed bottom left and top reversed top right multi-edge. Finally, the boundaries need to be patched with spin-orbitals so that only complete orbitals arise.
    }
\end{figure}

\begin{figure}
  \centering
    \input{figures/fermi-hubbard.tikz}

  \vspace{2em}

  \raisebox{-1.32em}{\begin{tikzpicture}[scale=1.000000,x=1pt,y=1pt]
\filldraw[color=white] (0.000000, -7.500000) rectangle (24.000000, 22.500000);
\draw[color=black] (0.000000,15.000000) -- (24.000000,15.000000);
\draw[color=black] (0.000000,0.000000) -- (24.000000,0.000000);
\draw (12.000000,15.000000) -- (12.000000,0.000000);
\begin{scope}
\draw[fill=white] (12.000000, 15.000000) +(-45.000000:8.485281pt and 8.485281pt) -- +(45.000000:8.485281pt and 8.485281pt) -- +(135.000000:8.485281pt and 8.485281pt) -- +(225.000000:8.485281pt and 8.485281pt) -- cycle;
\draw[very thick,solid] (12.000000, 15.000000) +(135.000000:8.485281pt and 8.485281pt) -- +(225.000000:8.485281pt and 8.485281pt);
\clip (12.000000, 15.000000) +(-45.000000:8.485281pt and 8.485281pt) -- +(45.000000:8.485281pt and 8.485281pt) -- +(135.000000:8.485281pt and 8.485281pt) -- +(225.000000:8.485281pt and 8.485281pt) -- cycle;
\draw (12.000000, 15.000000) node {$A$};
\end{scope}
\begin{scope}
\draw[fill=white] (12.000000, -0.000000) +(-45.000000:8.485281pt and 8.485281pt) -- +(45.000000:8.485281pt and 8.485281pt) -- +(135.000000:8.485281pt and 8.485281pt) -- +(225.000000:8.485281pt and 8.485281pt) -- cycle;
\draw[very thick,solid] (12.000000, -0.000000) +(135.000000:8.485281pt and 8.485281pt) -- +(225.000000:8.485281pt and 8.485281pt);
\clip (12.000000, -0.000000) +(-45.000000:8.485281pt and 8.485281pt) -- +(45.000000:8.485281pt and 8.485281pt) -- +(135.000000:8.485281pt and 8.485281pt) -- +(225.000000:8.485281pt and 8.485281pt) -- cycle;
\draw (12.000000, -0.000000) node {$B$};
\end{scope}
\end{tikzpicture}}$= \mathrm e^{-\mathrm i \frac{\pi}{4} A\otimes B}$ \hspace{8em} \raisebox{-1.32em}{\begin{tikzpicture}[scale=1.000000,x=1pt,y=1pt]
\filldraw[color=white] (0.000000, -7.500000) rectangle (24.000000, 22.500000);
\draw[color=black] (0.000000,15.000000) -- (24.000000,15.000000);
\draw[color=black] (0.000000,0.000000) -- (24.000000,0.000000);
\draw (12.000000,15.000000) -- (12.000000,0.000000);
\begin{scope}
\draw[fill=white] (12.000000, 15.000000) +(-45.000000:8.485281pt and 8.485281pt) -- +(45.000000:8.485281pt and 8.485281pt) -- +(135.000000:8.485281pt and 8.485281pt) -- +(225.000000:8.485281pt and 8.485281pt) -- cycle;
\draw[very thick,solid] (12.000000, 15.000000) +(-45.000000:8.485281pt and 8.485281pt) -- +(45.000000:8.485281pt and 8.485281pt);
\clip (12.000000, 15.000000) +(-45.000000:8.485281pt and 8.485281pt) -- +(45.000000:8.485281pt and 8.485281pt) -- +(135.000000:8.485281pt and 8.485281pt) -- +(225.000000:8.485281pt and 8.485281pt) -- cycle;
\draw (12.000000, 15.000000) node {$A$};
\end{scope}
\begin{scope}
\draw[fill=white] (12.000000, -0.000000) +(-45.000000:8.485281pt and 8.485281pt) -- +(45.000000:8.485281pt and 8.485281pt) -- +(135.000000:8.485281pt and 8.485281pt) -- +(225.000000:8.485281pt and 8.485281pt) -- cycle;
\draw[very thick,solid] (12.000000, -0.000000) +(-45.000000:8.485281pt and 8.485281pt) -- +(45.000000:8.485281pt and 8.485281pt);
\clip (12.000000, -0.000000) +(-45.000000:8.485281pt and 8.485281pt) -- +(45.000000:8.485281pt and 8.485281pt) -- +(135.000000:8.485281pt and 8.485281pt) -- +(225.000000:8.485281pt and 8.485281pt) -- cycle;
\draw (12.000000, -0.000000) node {$B$};
\end{scope}
\end{tikzpicture}}$= \mathrm e^{ \mathrm i \frac{\pi}{4} A\otimes B}$ \hspace{8em} \raisebox{-1.32em}{\begin{tikzpicture}[scale=1.000000,x=1pt,y=1pt]
\filldraw[color=white] (0.000000, -7.500000) rectangle (32.000000, 22.500000);
\draw[color=black] (0.000000,15.000000) -- (32.000000,15.000000);
\draw[color=black] (0.000000,0.000000) -- (32.000000,0.000000);
\draw (16.000000,15.000000) -- (16.000000,0.000000);
\begin{scope}
\draw[fill=white] (16.000000, 15.000000) +(-45.000000:14.142136pt and 8.485281pt) -- +(45.000000:14.142136pt and 8.485281pt) -- +(135.000000:14.142136pt and 8.485281pt) -- +(225.000000:14.142136pt and 8.485281pt) -- cycle;
\clip (16.000000, 15.000000) +(-45.000000:14.142136pt and 8.485281pt) -- +(45.000000:14.142136pt and 8.485281pt) -- +(135.000000:14.142136pt and 8.485281pt) -- +(225.000000:14.142136pt and 8.485281pt) -- cycle;
\draw (16.000000, 15.000000) node {$A$};
\end{scope}
\begin{scope}
\draw[fill=white] (16.000000, -0.000000) +(-45.000000:14.142136pt and 8.485281pt) -- +(45.000000:14.142136pt and 8.485281pt) -- +(135.000000:14.142136pt and 8.485281pt) -- +(225.000000:14.142136pt and 8.485281pt) -- cycle;
\clip (16.000000, -0.000000) +(-45.000000:14.142136pt and 8.485281pt) -- +(45.000000:14.142136pt and 8.485281pt) -- +(135.000000:14.142136pt and 8.485281pt) -- +(225.000000:14.142136pt and 8.485281pt) -- cycle;
\draw (16.000000, -0.000000) node {$B$};
\end{scope}
\end{tikzpicture}}$= \mathrm e^{\mathrm i A\otimes B}$

    \caption{\label{fig:fermi-hubbard} Basis circuit for the implementation of the time evolution along the Fermi-Hubbard Hamiltonian [\eq{FH}], up to Trotter error, with $\al_T=\frac{T}{2}\frac{t}{r}$ and $\al_U=-\frac{U}{4}\frac{t}{r}$. The circuit comprises 16 general 2-qubit gates and has a depth of 15. The shown circuit is for the even cells, the circuit for the odd cells is obtained by sending $\alpha_T\rightarrow -\alpha_T$ for gates 1,4,8,11. The circuit is tileable.
    }
\end{figure}


\subsection{Rokhsar-Kivelson}\label{sec:RK}

The Rokhsar-Kivelson Hamiltonian is given by
\begin{equation}\label{eq:RK}
  H=-J \sum_{\square}[(\ketbra{\vdimer}{\hdimer}+\ketbra{\hdimer}{\vdimer})-\la(\ketbra{\vdimer}{\vdimer}+\ketbra{\hdimer}{\hdimer})]
\end{equation}
Here, a dimer, $\singledimer$, stands for a singlet state of two spin-1/2 particles located at the vertices and the sum runs over all plaquettes of the square grid.

With each edge between nearest neighbors of the square lattice we associate a two-dimensional Hilbert space spanned by $\ket 0$, indicating there is no dimer along that edge, and $\ket 1$, indicating there is a dimer along that edge. With this identification, the Hamiltonian is equivalent to
\begin{align}\label{eq:HRK}
  H=\sum_{\square}H_\square, &&
  H_\square=-J[P_\square^\nodagger+P_\square^\dagger-\la (P_\square^\nodagger+P_\square^\dagger)^2].
\end{align}
Here,
\begin{align}\label{eq:usquare}
P_\square=\si_b^-\si_r^+\si_t^-\si_l^+, && \si^{+}=(X-\ii Y)/2=\ketbra{1}{0},&& \si^{-}=(X+\ii Y)/2=\ketbra{0}{1},
\end{align}
where we indexed the edges of a plaquette by $b$ (bottom), $r$ (right), $t$ (top), and $l$ (left). This Hamiltonian defines a $U(1)$ lattice gauge theory.


By Trotterization, $
  \ee^{-\ii t H}\approx \left( \prod_\square \ee^{-i \frac{t}{r} H_\square}\right)^r$.
We now show two methods for decomposing $\ee^{-i \frac{t}{r} H_\square}$ exactly using one- and two-qubit gates. The resulting spatial circuit structure, which is the same for both decomposition techniques, is displayed in \fig{checkerboard}.

\begin{figure}
  \centering
\begin{tikzpicture}[scale=1.5, rotate=45]
\def\rows{3}
\def\cols{3}

\definecolor{lightgreen}{rgb}{0.0, 0.5, 0.0}
\draw[lightgreen] (2,0) -- (0,2);
\draw[lightgreen] (3,1) -- (1,3);

\draw[lightgreen] (0,1) -- (2,3);
\draw[lightgreen] (1,0) -- (3,2);

\node[fill=lightgreen, circle, minimum size=2pt, inner sep=0pt] at (0.5,1.5) {};
\node[fill=lightgreen, circle, minimum size=2pt, inner sep=0pt] at (1.5,0.5) {};
\node[fill=lightgreen, circle, minimum size=2pt, inner sep=0pt] at (1.5,2.5) {};
\node[fill=lightgreen, circle, minimum size=2pt, inner sep=0pt] at (2.5,1.5) {};

\foreach \i in {0,...,\rows}
    \foreach \j in {0,...,\cols} {
        \fill (\i,\j) circle (2pt); 

        \ifnum\i=1
            \ifnum\j=1
                \node at (\i - 0.15,\j + 0.15) {$b$};
            \fi
            \ifnum\j=2
                \node at (\i - 0.15,\j - 0.15) {$l$};
            \fi
        \fi
        \ifnum\i=2
            \ifnum\j=1
                \node at (\i + 0.15,\j + 0.15) {$r$};
            \fi
            \ifnum\j=2
                \node at (\i + 0.15,\j - 0.15) {$t$};
            \fi
d        \fi
    }

\draw[dashed,gray] (0.75,0) -- (0.75,3);
\draw[dashed,gray] (2.75,0) -- (2.75,3);
\draw[dashed,gray] (0,0.75) -- (3,0.75);
\draw[dashed,gray] (0,2.75) -- (3,2.75);

\foreach \i in {0,...,\rows}
\draw (\i,0) -- (\i,\cols); 

\foreach \j in {0,...,\cols}
\draw (0,\j) -- (\rows,\j); 

\draw (0,0) -- (1,1);
\draw (1,0) -- (0,1);

\draw (0,2) -- (1,3);
\draw (1,2) -- (0,3);

\draw (2,0) -- (3,1);
\draw (3,0) -- (2,1);

\draw (2,2) -- (3,3);
\draw (3,2) -- (2,3);

\draw[very thick] (2,1) -- (2,3);
\draw[very thick] (1,2) -- (3,2);
\draw[very thick] (1,1) -- (3,3);
\draw[very thick] (2,1) -- (1,2);
\draw[very thick] (1,2) -- (1,0);
\draw[very thick] (0,1) -- (2,1);

\node[draw, circle, minimum size=4pt, inner sep=0pt,fill=white] at (0,0) {};
\node[draw, circle, minimum size=4pt, inner sep=0pt,fill=white] at (1,0) {};
\node[draw, circle, minimum size=4pt, inner sep=0pt,fill=white] at (2,0) {};
\node[draw, circle, minimum size=4pt, inner sep=0pt,fill=white] at (3,0) {};

\node[draw, circle, minimum size=4pt, inner sep=0pt,fill=white] at (0,1) {};
\node[draw, circle, minimum size=4pt, inner sep=0pt,fill=white] at (0,2) {};
\node[draw, circle, minimum size=4pt, inner sep=0pt,fill=white] at (0,3) {};

\node[draw, circle, minimum size=4pt, inner sep=0pt,fill=white] at (1,3) {};
\node[draw, circle, minimum size=4pt, inner sep=0pt,fill=white] at (2,3) {};
\node[draw, circle, minimum size=4pt, inner sep=0pt,fill=white] at (3,3) {};

\node[draw, circle, minimum size=4pt, inner sep=0pt,fill=white] at (3,1) {};
\node[draw, circle, minimum size=4pt, inner sep=0pt,fill=white] at (3,2) {};
\node[draw, circle, minimum size=4pt, inner sep=0pt,fill=white] at (3,3) {};

\end{tikzpicture}
  \caption{\label{fig:checkerboard}
Spatial structure of the tileable circuit for simulating the Rokhsar-Kivelson model, implemented using either the XYZ decomposition or the multi-controlled unitaries technique. Bold edges indicate the basis circuit, dashed lines mark unit cells, and green lines represent the original square lattice on which the Rokhsar-Kivelson model is defined [\eq{RK}]. The time evolution for the central plaquette operator, acting on qubits $b,r,t,l$, follows the circuits in Fig.~\ref{fig:circxyz} or Fig.~\ref{fig:circmcu}, depending on whether the XYZ decomposition or the multi-controlled unitaries technique is used. The circuit for the second plaquette is distributed: the two bold edges at the bottom form part of the circuit for the bottom plaquette, while the three bold edges on the top form part of the circuit for the top plaquette. It is only when the basis circuit is copied and translated up by one unit cell that the entire circuit for the top plaquette is formed. This happens automatically when, e.g., a patch of 3 by 3 basis circuits is formed. We break up the circuit of one plaquette in the basis circuit in this way because otherwise there would be gates in the basis circuit fully outside the central cell, which we consider unnatural and which we assumed, without loss of generality, not to be the case in our implementation \cite{kattemolle2024quantile}. The explicit basis circuit for both decomposition techniques is available under \texttt{circuits/rokhsar-kivelson} in the implementation.}
\end{figure}

\subsubsection{XYZ decomposition}
Using \eq{usquare}, we can express $H$ as a linear combination of Pauli words. Since in $P_\square^\nodagger+P_\square^\dagger$ all $Y$ tensor factors carry a factor of $\ii$, in $P_\square^\nodagger+P_\square^\dagger$ only terms containing an even number of $Y$s remain, giving
\begin{align}\label{eq:uud}
  P_\square^\nodagger+P_\square^\dagger&=\frac{1}{16}(X_b+\ii Y_b)(X_r-\ii Y_r)(X_t+\ii Y_t)(X_l-\ii Y_l)+h.c.\nn\\  &=\frac{1}{8}(X_bX_rX_tX_l+X_bY_rY_tX_l-Y_bX_rY_tX_l+Y_bY_rX_tX_l+Y_bX_rX_tY_l   +Y_bY_rY_tY_l+X_bX_rY_tY_l-X_bY_rX_tY_l).
\end{align}
For the term in $H$ proportional to $\la$, note $\si^\pm\si^\pm=0$ and $\si^\mp\si^\pm=(\id \pm Z)$. Similarly to before, we find that only terms with an even number of $Z$s remain,
\begin{align}\label{eq:uuds}   (P_\square^\nodagger+P_\square^\dagger)^2=\frac{1}{8}(\id_b\id_r\id_t\id_l+ Z_bZ_rZ_tZ_l-Z_bZ_r\id_t\id_l- \id_b \id_rZ_t  Z_l+Z_b\id_rZ_t\id_l+ \id_bZ_r\id_tZ_l-Z_b\id_r\id_tZ_l- \id_b Z_r Z_t  \id_l).
\end{align}
All 18 terms in Eqs. \eqref{eq:uud} and \eqref{eq:uuds} commute. Thus,
\begin{equation}
\ee^{-\ii \frac{t}{r} H_\square}=\prod_{i} \ee^{\ii \al_i P^{(i)}},
\end{equation}
where $\{P^{(i)}\}_i$ are the 4-qubit Pauli words from Eqs. \eqref{eq:uud} and \eqref{eq:uuds}, and where $\alpha_i$ is either $0$, $\pm \frac{J t}{8 r }$ or $\pm \frac{J \la t}{8 r}$, depending on $i$, as determined by Eqs.~\eqref{eq:HRK}, \eqref{eq:uud} and \eqref{eq:uuds}.

Using \eq{XYZ} recursively for $\ee^{\ii \alpha C}$, we have
\begin{equation}
 \ee^{\ii \al A}=\ee^{\ii \frac{\pi}{4} B}\ee^{\ii \frac{\pi}{4} D}\ee^{\ii \al E}\ee^{-\ii \frac{\pi}{4} D}\ee^{-\ii \frac{\pi}{4} B},
\end{equation}
where $[E, D]=2\ii C$. Now, let $A=A_0A_1A_2A_3$ with Pauli operators $A_i$, $B=A_0B_1$ for some Pauli operator $B_1$ (possibly carrying an extra factor of $\pm$), and similarly $D=D_2A_3$ and $E=E_1E_2$. Then, the condition $[C,B]=2\ii A$ becomes
\begin{equation}\label{eq:aleph}
  [E_1,B_1]= 2 \ii A_1, \quad [E_2,D_2]=2\ii A_2.
\end{equation}
That is, we can decompose $\ee^{\ii \alpha A}$ into 5 two-qubit gates, giving a circuit of depth 3, by choosing Pauli operators $\pm B_1$, $\pm E_1$, $\pm D_2$, $\pm E_2$ such that \eq{aleph} holds, which is possible for any 4-qubit Pauli word $A$. This gives the exact circuit in \fig{circxyz} for the time evolution along a single plaquette, up to a global phase of $\ee^{-\ii J t \la / (8r)}$, with $\alpha_J=\frac{J t}{8r}$, and $\al_{J\la}=-\frac{J \la t}{8 r}$. The subcircuits for the factors $\ee^{\ii \al_i P^{(i)}}$ appear in the order of the Pauli words of Eqs. \eqref{eq:uud} and \eqref{eq:uuds}. We have omitted the time evolution along $\id_b\id_r\id_t\id_l$.

\begin{figure}
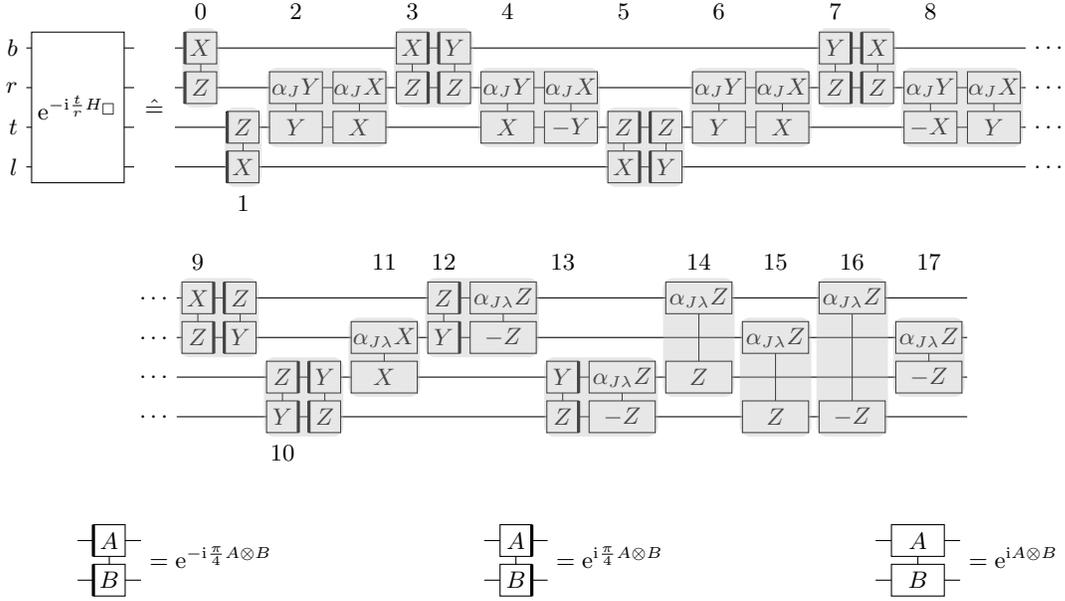

  \input{figures/circ.tikz}

  \vspace{1em}

  \input{figures/circ0-1.tikz}

  \caption{\label{fig:circxyz} Exact circuit for the time evolution along one plaquette operator $H_\square$ of the Rokhsar-Kivelson model, using the XYZ-decomposition, with $\al_J=\frac{J}{8}\frac{t}{r}$ and $\al_{J\la}=-\frac{J\la}{8}\frac{t}{r}$. (See \fig{fermi-hubbard} for the definition of the gate notation.) The circuit comprises 18 two-qubit gates, labeled 0--17. The circuits for two disjoint plaquettes can always be applied simultaneously. For two distinct plaquettes sharing a qubit, gates 0--13 can always be applied simultaneously, but not gates 14--17. The \emph{basis circuit} for one Trotter step of the time evolution along the Rokhsar-Kivelson Hamiltonian consists of the above time evolution operator applied to two plaquettes sharing a qubit, as depicted in \fig{checkerboard}. }
\end{figure}

Gates 0--13 of the first plaquette of the basis circuit (see \fig{checkerboard}) can be applied in parallel with gates 0--13 of the second plaquette. In the implementation, we write these gates as \lin{G_plaquette_0_unitary_x} and \lin{G_plaquette_1_unitary_x} respectively, with $x\in\{0,\ldots,13\}$. Thereafter, the remaining gates 14--17 are applied first for the first plaquette  (\lin{G_plaquette_0_unitary_x}, $x\in\{14,\ldots,17\}$), and then for the second plaquette (\lin{G_plaquette_1_unitary_x}, $x\in\{14,\ldots,17\}$). The resulting circuit, containing the exact circuit for $\ee^{-\ii\frac{t}{r} H_\square}$ for both plaquettes, has depth~18.

\subsubsection{Multi-controlled unitaries}
By \eq{usquare},
\begin{equation}
  H_\square=-J[\mathcal X-\la \mc I].
\end{equation}
Here, in the qubit ordering $b,r,t,l$,
\begin{equation}
  \mc X= \ketbra{0101}{1010}+\ketbra{1010}{0101}
\end{equation}
acts on the subspace spanned by $\ket{0101}$ and $\ket{1010}$ as the Pauli $X$-operator, and
\begin{equation}
  \mc I= \ketbra{0101}{0101}+\ketbra{1010}{1010}
\end{equation}
acts as the identity on that same subspace. Because $\mc X$ and $\mc I$  commute,
\begin{equation}
  \ee^{-\ii \frac{t}{r} H_\square}=\ee^{- \ii \theta \mc X/2}\ee^{\ii \al \mc I},
\end{equation}
with $\theta=-2 J \frac{t}{r}$ and $\alpha=-J\la \frac{t}{r}$.

The unitary $\ee^{- \ii \theta \mc X/2}$ is essentially a Pauli-$X$ rotation within the subspace spanned by $\ket{0101}$ and $\ket{1010}$. A similar gate was implemented in Ref.~\cite{yordanov2020efficient}. Along the lines of Ref.~\cite{yordanov2020efficient}, note the subspace spanned by $\ket{0101}$ and $\ket{1010}$ is the only subspace in the Hilbert space of four qubits where qubits $b,r$ and $t,l$ have uneven parity, respectively, and where, furthermore, qubits $r,t$ have even parity. Thus, $\ee^{- \ii \theta \mc X/2}$ can be implemented by a circuit that computes and stores these parities in qubits $r,t,l$, an RX gate on qubit $b$ controlled by $r,t,l$, and a circuit that undoes the circuit that computed and stored the parities. This gives the circuit in Fig. (\ref{fig:circuitids}a). One implementation of the triple-controlled RX is given by Fig. (\ref{fig:circuitids}b), which is a variation of the circuit for this gate in Ref.~\cite{yordanov2020efficient}.

The unitary $\ee^{\ii \al \mc I}$ exclusively multiplies the space spanned by $\ket{0101}$ and $\ket{1010}$ with a phase $\ee^{\ii \al}$. It can thus be implemented by the circuit on the right-hand side of Fig. (\ref{fig:circuitids}a) after $\mrm{RX}_\theta$ is replaced by $\ee^{\ii \al}\id$. However, we cannot similarly replace $\mrm{RX}_\theta$ by $\ee^{\ii \al}\id$ in Fig. (\ref{fig:circuitids}b) to implement the triply-controlled $\ee^{\ii\al}\id$ gate because the single-qubit $\ee^{\ii\al}\id$ gates occurring in this circuit result only in a global phase. For the implementation of the triply-controlled  $\ee^{\ii \al}\id$ gate we therefore resort to a standard and general method for implementing doubly-controlled unitaries $U$ [Fig. (\ref{fig:circuitids}c)], and use this method with $U$ a controlled $\ee^{\ii \al}\id$ gate, which leads to the circuit in Fig. (\ref{fig:circuitids}d). Putting this all together leads to the exact circuit for $\ee^{-\ii \frac{t}{r} H_\square}$ in \fig{circmcu}.

\newcommand{$\text{)} \hfill \begin{array}{c}\input{figures/circ.tikz}\end{array} \hfill$}[2]{$\text{#1)} \hfill \begin{array}{c}\input{figures/circ#2.tikz}\end{array} \hfill$}

\begin{figure}
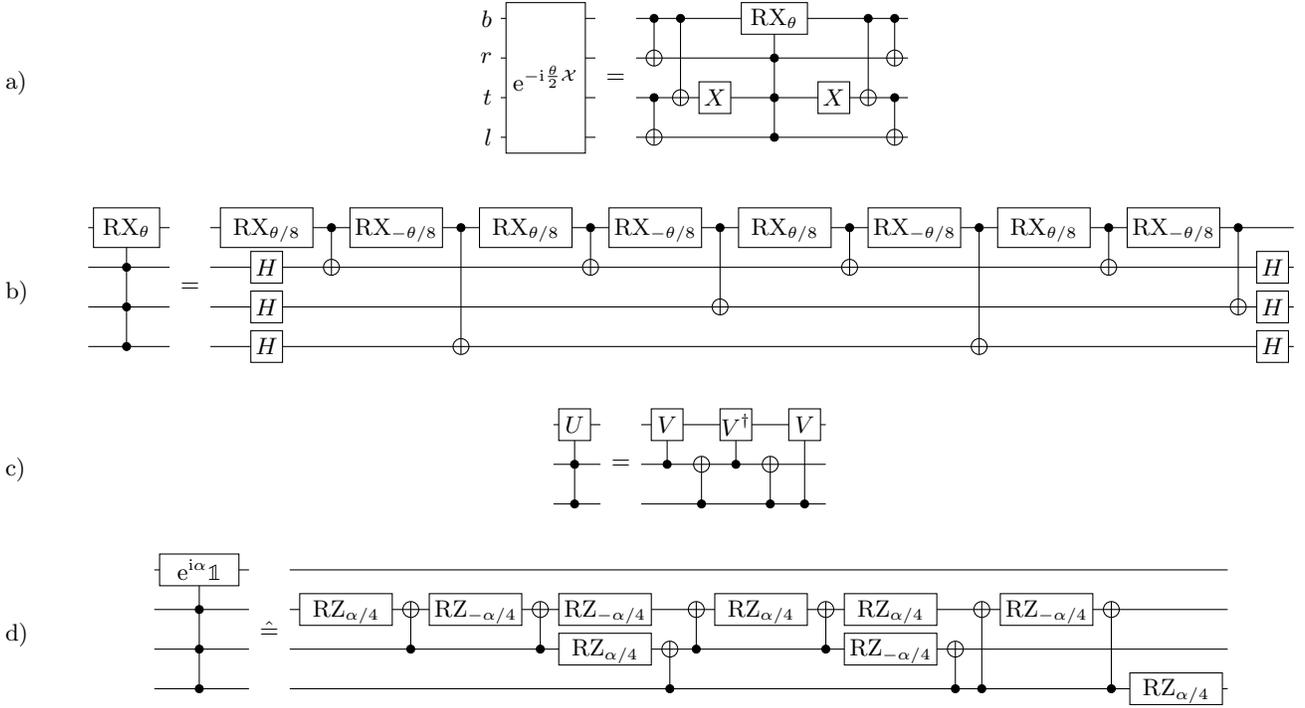

 $\text{a)} \hfill \begin{array}{c}\begin{tikzpicture}[scale=1.000000,x=1pt,y=1pt]
\filldraw[color=white] (0.000000, -7.500000) rectangle (154.000000, 52.500000);
\draw[color=black] (0.000000,45.000000) -- (154.000000,45.000000);
\draw[color=black] (0.000000,45.000000) node[left] {$b$};
\draw[color=black] (0.000000,30.000000) -- (154.000000,30.000000);
\draw[color=black] (0.000000,30.000000) node[left] {$r$};
\draw[color=black] (0.000000,15.000000) -- (154.000000,15.000000);
\draw[color=black] (0.000000,15.000000) node[left] {$t$};
\draw[color=black] (0.000000,0.000000) -- (154.000000,0.000000);
\draw[color=black] (0.000000,0.000000) node[left] {$l$};
\draw (17.000000,45.000000) -- (17.000000,0.000000);
\begin{scope}
\draw[fill=white] (17.000000, 22.500000) +(-45.000000:21.213203pt and 40.305087pt) -- +(45.000000:21.213203pt and 40.305087pt) -- +(135.000000:21.213203pt and 40.305087pt) -- +(225.000000:21.213203pt and 40.305087pt) -- cycle;
\clip (17.000000, 22.500000) +(-45.000000:21.213203pt and 40.305087pt) -- +(45.000000:21.213203pt and 40.305087pt) -- +(135.000000:21.213203pt and 40.305087pt) -- +(225.000000:21.213203pt and 40.305087pt) -- cycle;
\draw (17.000000, 22.500000) node {$\mathrm e^{-\mathrm i \frac{\theta}{2} \mathcal X}$};
\end{scope}
\draw[fill=white,color=white] (36.000000, -6.000000) rectangle (51.000000, 51.000000);
\draw (43.500000, 22.500000) node {$=$};
\draw (58.000000,45.000000) -- (58.000000,30.000000);
\filldraw (58.000000, 45.000000) circle(1.500000pt);
\begin{scope}
\draw[fill=white] (58.000000, 30.000000) circle(3.000000pt);
\clip (58.000000, 30.000000) circle(3.000000pt);
\draw (55.000000, 30.000000) -- (61.000000, 30.000000);
\draw (58.000000, 27.000000) -- (58.000000, 33.000000);
\end{scope}
\draw (58.000000,15.000000) -- (58.000000,0.000000);
\filldraw (58.000000, 15.000000) circle(1.500000pt);
\begin{scope}
\draw[fill=white] (58.000000, 0.000000) circle(3.000000pt);
\clip (58.000000, 0.000000) circle(3.000000pt);
\draw (55.000000, 0.000000) -- (61.000000, 0.000000);
\draw (58.000000, -3.000000) -- (58.000000, 3.000000);
\end{scope}
\draw (68.000000,45.000000) -- (68.000000,15.000000);
\filldraw (68.000000, 45.000000) circle(1.500000pt);
\begin{scope}
\draw[fill=white] (68.000000, 15.000000) circle(3.000000pt);
\clip (68.000000, 15.000000) circle(3.000000pt);
\draw (65.000000, 15.000000) -- (71.000000, 15.000000);
\draw (68.000000, 12.000000) -- (68.000000, 18.000000);
\end{scope}
\begin{scope}
\draw[fill=white] (81.000000, 15.000000) +(-45.000000:8.485281pt and 8.485281pt) -- +(45.000000:8.485281pt and 8.485281pt) -- +(135.000000:8.485281pt and 8.485281pt) -- +(225.000000:8.485281pt and 8.485281pt) -- cycle;
\clip (81.000000, 15.000000) +(-45.000000:8.485281pt and 8.485281pt) -- +(45.000000:8.485281pt and 8.485281pt) -- +(135.000000:8.485281pt and 8.485281pt) -- +(225.000000:8.485281pt and 8.485281pt) -- cycle;
\draw (81.000000, 15.000000) node {$X$};
\end{scope}
\draw (103.500000,45.000000) -- (103.500000,0.000000);
\begin{scope}
\draw[fill=white] (103.500000, 45.000000) +(-45.000000:17.677670pt and 8.485281pt) -- +(45.000000:17.677670pt and 8.485281pt) -- +(135.000000:17.677670pt and 8.485281pt) -- +(225.000000:17.677670pt and 8.485281pt) -- cycle;
\clip (103.500000, 45.000000) +(-45.000000:17.677670pt and 8.485281pt) -- +(45.000000:17.677670pt and 8.485281pt) -- +(135.000000:17.677670pt and 8.485281pt) -- +(225.000000:17.677670pt and 8.485281pt) -- cycle;
\draw (103.500000, 45.000000) node {$\mathrm{RX}_{\theta}$};
\end{scope}
\filldraw (103.500000, 30.000000) circle(1.500000pt);
\filldraw (103.500000, 15.000000) circle(1.500000pt);
\filldraw (103.500000, 0.000000) circle(1.500000pt);
\begin{scope}
\draw[fill=white] (126.000000, 15.000000) +(-45.000000:8.485281pt and 8.485281pt) -- +(45.000000:8.485281pt and 8.485281pt) -- +(135.000000:8.485281pt and 8.485281pt) -- +(225.000000:8.485281pt and 8.485281pt) -- cycle;
\clip (126.000000, 15.000000) +(-45.000000:8.485281pt and 8.485281pt) -- +(45.000000:8.485281pt and 8.485281pt) -- +(135.000000:8.485281pt and 8.485281pt) -- +(225.000000:8.485281pt and 8.485281pt) -- cycle;
\draw (126.000000, 15.000000) node {$X$};
\end{scope}
\draw (139.000000,45.000000) -- (139.000000,15.000000);
\filldraw (139.000000, 45.000000) circle(1.500000pt);
\begin{scope}
\draw[fill=white] (139.000000, 15.000000) circle(3.000000pt);
\clip (139.000000, 15.000000) circle(3.000000pt);
\draw (136.000000, 15.000000) -- (142.000000, 15.000000);
\draw (139.000000, 12.000000) -- (139.000000, 18.000000);
\end{scope}
\draw (149.000000,45.000000) -- (149.000000,30.000000);
\filldraw (149.000000, 45.000000) circle(1.500000pt);
\begin{scope}
\draw[fill=white] (149.000000, 30.000000) circle(3.000000pt);
\clip (149.000000, 30.000000) circle(3.000000pt);
\draw (146.000000, 30.000000) -- (152.000000, 30.000000);
\draw (149.000000, 27.000000) -- (149.000000, 33.000000);
\end{scope}
\draw (149.000000,15.000000) -- (149.000000,0.000000);
\filldraw (149.000000, 15.000000) circle(1.500000pt);
\begin{scope}
\draw[fill=white] (149.000000, 0.000000) circle(3.000000pt);
\clip (149.000000, 0.000000) circle(3.000000pt);
\draw (146.000000, 0.000000) -- (152.000000, 0.000000);
\draw (149.000000, -3.000000) -- (149.000000, 3.000000);
\end{scope}
\end{tikzpicture}\end{array} \hfill$

  \vspace{1.5em}

 $\text{b)} \hfill \begin{array}{c}\input{figures/circ6.tikz}\end{array} \hfill$

  \vspace{1em}

$\text{c)} \hfill \begin{array}{c}\begin{tikzpicture}[scale=1.000000,x=1pt,y=1pt]
\filldraw[color=white] (0.000000, -7.500000) rectangle (103.000000, 37.500000);
\draw[color=black] (0.000000,30.000000) -- (103.000000,30.000000);
\draw[color=black] (0.000000,15.000000) -- (103.000000,15.000000);
\draw[color=black] (0.000000,0.000000) -- (103.000000,0.000000);
\draw (8.000000,30.000000) -- (8.000000,0.000000);
\begin{scope}
\draw[fill=white] (8.000000, 30.000000) +(-45.000000:8.485281pt and 8.485281pt) -- +(45.000000:8.485281pt and 8.485281pt) -- +(135.000000:8.485281pt and 8.485281pt) -- +(225.000000:8.485281pt and 8.485281pt) -- cycle;
\clip (8.000000, 30.000000) +(-45.000000:8.485281pt and 8.485281pt) -- +(45.000000:8.485281pt and 8.485281pt) -- +(135.000000:8.485281pt and 8.485281pt) -- +(225.000000:8.485281pt and 8.485281pt) -- cycle;
\draw (8.000000, 30.000000) node {$U$};
\end{scope}
\filldraw (8.000000, 15.000000) circle(1.500000pt);
\filldraw (8.000000, 0.000000) circle(1.500000pt);
\draw[fill=white,color=white] (18.000000, -6.000000) rectangle (33.000000, 36.000000);
\draw (25.500000, 15.000000) node {$=$};
\draw (43.000000,30.000000) -- (43.000000,15.000000);
\begin{scope}
\draw[fill=white] (43.000000, 30.000000) +(-45.000000:8.485281pt and 8.485281pt) -- +(45.000000:8.485281pt and 8.485281pt) -- +(135.000000:8.485281pt and 8.485281pt) -- +(225.000000:8.485281pt and 8.485281pt) -- cycle;
\clip (43.000000, 30.000000) +(-45.000000:8.485281pt and 8.485281pt) -- +(45.000000:8.485281pt and 8.485281pt) -- +(135.000000:8.485281pt and 8.485281pt) -- +(225.000000:8.485281pt and 8.485281pt) -- cycle;
\draw (43.000000, 30.000000) node {$V$};
\end{scope}
\filldraw (43.000000, 15.000000) circle(1.500000pt);
\draw (56.000000,15.000000) -- (56.000000,0.000000);
\filldraw (56.000000, 0.000000) circle(1.500000pt);
\begin{scope}
\draw[fill=white] (56.000000, 15.000000) circle(3.000000pt);
\clip (56.000000, 15.000000) circle(3.000000pt);
\draw (53.000000, 15.000000) -- (59.000000, 15.000000);
\draw (56.000000, 12.000000) -- (56.000000, 18.000000);
\end{scope}
\draw (69.000000,30.000000) -- (69.000000,15.000000);
\begin{scope}
\draw[fill=white] (69.000000, 30.000000) +(-45.000000:8.485281pt and 8.485281pt) -- +(45.000000:8.485281pt and 8.485281pt) -- +(135.000000:8.485281pt and 8.485281pt) -- +(225.000000:8.485281pt and 8.485281pt) -- cycle;
\clip (69.000000, 30.000000) +(-45.000000:8.485281pt and 8.485281pt) -- +(45.000000:8.485281pt and 8.485281pt) -- +(135.000000:8.485281pt and 8.485281pt) -- +(225.000000:8.485281pt and 8.485281pt) -- cycle;
\draw (69.000000, 30.000000) node {$V^\dagger$};
\end{scope}
\filldraw (69.000000, 15.000000) circle(1.500000pt);
\draw (82.000000,15.000000) -- (82.000000,0.000000);
\filldraw (82.000000, 0.000000) circle(1.500000pt);
\begin{scope}
\draw[fill=white] (82.000000, 15.000000) circle(3.000000pt);
\clip (82.000000, 15.000000) circle(3.000000pt);
\draw (79.000000, 15.000000) -- (85.000000, 15.000000);
\draw (82.000000, 12.000000) -- (82.000000, 18.000000);
\end{scope}
\draw (95.000000,30.000000) -- (95.000000,0.000000);
\begin{scope}
\draw[fill=white] (95.000000, 30.000000) +(-45.000000:8.485281pt and 8.485281pt) -- +(45.000000:8.485281pt and 8.485281pt) -- +(135.000000:8.485281pt and 8.485281pt) -- +(225.000000:8.485281pt and 8.485281pt) -- cycle;
\clip (95.000000, 30.000000) +(-45.000000:8.485281pt and 8.485281pt) -- +(45.000000:8.485281pt and 8.485281pt) -- +(135.000000:8.485281pt and 8.485281pt) -- +(225.000000:8.485281pt and 8.485281pt) -- cycle;
\draw (95.000000, 30.000000) node {$V$};
\end{scope}
\filldraw (95.000000, 0.000000) circle(1.500000pt);
\end{tikzpicture}\end{array} \hfill$

  \vspace{.5em}

$\text{d)} \hfill \begin{array}{c}\input{figures/circ8.tikz}\end{array} \hfill$

  \vspace{1em}
  \caption{\label{fig:circuitids} Quantum circuit identities for decomposing the time evolution operator $\ee^{-\ii \frac{t}{r} H_\square}$ of a single plaquette in the Rokhsar-Kivelson using the multiple-controlled unitaries technique.}

  \end{figure}
  \begin{figure}
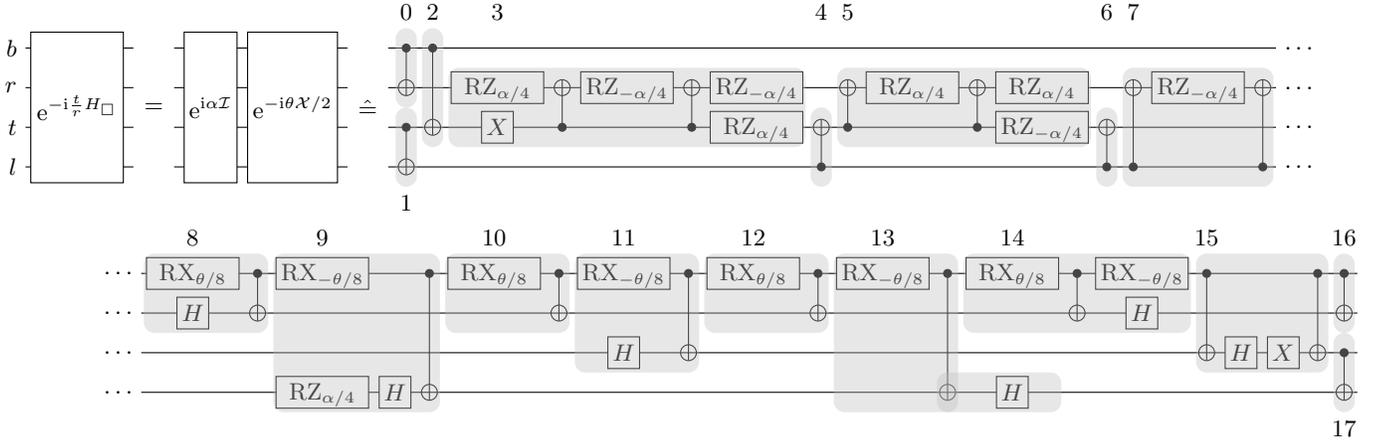

  \input{figures/circ10.tikz}

  \input{figures/circ11.tikz}
  \caption{\label{fig:circmcu}
    Exact circuit for the time evolution operator $\ee^{-\ii \frac{t}{r}H_\square}$ of a single plaquette operator in the Rokhskar-Kivelson model using the multiple-controlled unitaries technique. Here, $\alpha=-J\la \frac{t}{r}$ and $\theta=-2 J \frac{t}{r}$. The circuit comprises 18 two-qubit gates and has a depth of 16.
  }
\end{figure}

\subsubsection{Routing}
The entire, tileable circuit for the quantum simulation of $H$ leads to a circuit on the checkerboard lattice. Its basis circuit is depicted in \fig{checkerboard}.

Given the basis circuit in \fig{checkerboard}, creating a patch of $n$ by $m$ basis circuits will lead to incomplete circuits for plaquette operators at the boundaries. These can be removed in a postprocessing step. (Remove any gate in the patch that contains a qubit for which the $x$-coordinate is not within $0\leq x< n$ or for which the $y$-coordinate is not within $0\leq y< m$.)

\subsection{2D Quantum electrodynamics}
\label{sec:QED}
Here, we consider the Kogut-Susskind formulation of Wilson's gauge theories~\cite{kogut1975hamiltonian} in the case of two-dimensional quantum electrodynamics (2D QED)~\cite{haase2021resource,paulison2021simulating}. Fermions, created (annihilated) by $a_i^\dagger$ ($a_i^\nodagger$), are placed on the vertices of a 2D square grid with alternating chemical potential. A local Hilbert space, the `link space', is defined on the edges between sites $i$ and $j$. The electric field between sites $i$ and $j$ is described by the Hermitian electric-field operator $E_{(i,j)}$ (not to be confused with the edge operators $E_{i\si,i'\si'}$ appearing in the hybrid encoding for the quantum simulation of the Fermi-Hubbard model) that acts on the link space. The eigenstates $\ket{e_{ij}}$ of $E_{(i,j)}$ with eigenvalue $e_{ij}\in \mathbb Z$ span the link space. As in \sec{RK}, we will sometimes label the bottom, right, top, and left edges of a plaquette by $b,r,t,l$, respectively. Plaquette operators $P_\square$ are defined that act on the link spaces around a plaquette $\square$,
\begin{equation}\label{eq:qutritplaquette}
  P_\square=E_{b}^- E_{r}^- E_{t}^+ E_{l}^+.
\end{equation}
The operator $E_{(i,j)}^{+(-)}$ is the raising (lowering) operator associated with  $E_{(i,j)}$; $[E_{(i,j)},E^{\pm}_{(i,j)}]=\pm E^{\pm}_{(i,j)}$ and thus $E_{(i,j)}^{\pm}\ket{e_{(i,j)}}=\ket{e_{(i,j)}\pm 1}$.

With these definitions, the Kogut-Susskind model of 2D QED is a sum of an electric, magnetic, mass, and kinetic term~\cite{kogut1975hamiltonian,haase2021resource,paulison2021simulating},
\begin{equation}
H=H_E+H_B+H_{m}+H_\mrm{kin}.
\end{equation}
Here, the electric term is given by
\begin{equation}\label{eq:HE}
  H_E=\frac{g^2}{2} \sum_{(i,j)} E_{(i,j)}^2,
\end{equation}
with $g$ the bare coupling constant. The magnetic term reads
\begin{equation}\label{eq:magnetic} H_B=-\frac{1}{2g^2d^2}\sum_{\square}(P_\square^\nodagger+P_\square^\dagger),
\end{equation}
with $d$ the lattice spacing. The mass term is
\begin{equation}\label{eq:mass}
  H_m=m\sum_i(-1)^{i}a_i^\dagger a_{i}^\nodagger,
\end{equation}
with $m$ the effective electron mass. The sum is over all vertices $i$ of the 2D square grid. The index $i$ is even or odd according to a bipartition of the 2D square grid. Finally, we have the kinetic term,
\begin{equation}\label{eq:kinetic}
  H_{\mrm{kin}}=\Omega\sum_{\substack{i,\\ (i, j)\in \{\uparrow,\rightarrow\}}} a_i^\dagger E^{-}_{(i,j)}a_j^\nodagger + h.c.,
\end{equation}
with $\Om$ the kinetic strength and where $h.c.$ stands  for the Hermitian conjugate. Here, $(i, j)\in \{\uparrow,\rightarrow\}$ indicates a sum over all edges containing $i$ in the indicated directions. In the literature, the kinetic term appears with various phase differences between different types of edges~\cite{meth2023simulating}. Here, we have taken the form of Ref.~\cite{haase2021resource} for concreteness, where every hopping term has the same phase. For the routing problem of the Trotterized time evolution of $H$, these differences are irrelevant since they do not influence the spatial structure of the circuit that is to be routed.

To map $H$ to a tileable Hamiltonian defined on qudits, we use the compact encoding~\cite{derby2021compact} for the fermions (which maps the fermionic operators to qu\emph{bit} operators). As in ~\cite{haase2021resource,paulison2021simulating}, to map $E_{(i,j)}$ and $E_{(i,j)}^{\pm}$ to finite-dimensional operators on qudits, we truncate the link space. In the following, we consider the minimal nontrivial example where the link space is three dimensional~\cite{paulison2021simulating}, that is, that of a qutrit. Defining the Hermitian operator
\begin{equation}
  S^z:=\left(
    \begin{array}{ccc}
      1 & 0 & 0 \\
      0 & 0 & 0 \\
      0 & 0 & -1
    \end{array}
  \right),
\end{equation}
truncation to the qutrit space entails $E_{(i,j)}\mapsto S^z_{(i,j)}$. This makes mapping the electric part of the Hamiltonian straightforward;
\begin{equation}\label{eq:electric}
  H_E\mapsto \frac{g^2}{2}\sum_{(i,j)}[S^z_{(i,j)}]^2.
\end{equation}

\subsubsection{Magnetic part}
For the plaquette operators in the magnetic term $H_B$, we use a procedure similar to the procedure used to map the plaquette operators to qubit operators in the Rokhsar-Kivelson model [cf. Eqs. \eqref{eq:usquare} and \eqref{eq:uud}].

Truncating $E^{\pm}$ to the qutrit space entails
\begin{align}
 E^+\mapsto S^{+}:=\left(
    \begin{array}{ccc}
      0 & 1 & 0 \\
      0 & 0 & 1 \\
      0 & 0 & 0
    \end{array}
 \right), &&
E^-\mapsto S^{-}:=\left(
    \begin{array}{ccc}
      0 & 0 & 0 \\
      1 & 0 & 0 \\
      0 & 1 & 0
    \end{array}
  \right).
\end{align}
We now go beyond Ref.~\cite{paulison2021simulating} by defining the Hermitian operators
\begin{align}
  S^x:=\left(
    \begin{array}{ccc}
      0 & 1 & 0 \\
      1 & 0 & 1 \\
      0 & 1 & 0
    \end{array}
  \right),
  &&
  S^y:=\left(
    \begin{array}{ccc}
      0 & -\ii & 0 \\
      \ii & 0 & -\ii \\
      0 & \ii & 0
    \end{array}
  \right)
\end{align}
So that
\begin{equation}\label{eq:LinkToQutrit}
E^\pm \mapsto S^{\pm} = (S^x\pm \ii S^y)/2.
\end{equation}
With \eq{qutritplaquette} this gives [cf. \eq{uud}]
\begin{align}
  P_\square^\nodagger+P_\square^\dagger&\mapsto \frac{1}{16}(S^{x}_{b}-\ii S^{y}_b)(S^{x}_r-\ii S^{y}_r)(S^{x}_t+\ii S^{y}_t)(S^{x}_l+\ii S^{y}_l)+h.c.\\
  &=\frac{1}{8}(S_b^xS_r^xS_t^xS_l^x-S_b^yS_r^yS_t^xS_l^x-S_b^xS_r^xS_t^yS_l^y+S_b^yS_r^yS_t^yS_l^y\nn\\
  &\phantom{=\frac{1}{8}(}+S_b^xS_r^yS_t^xS_l^y+S_b^yS_r^xS_t^xS_l^y+S_b^xS_r^yS_t^yS_l^x+S_b^yS_r^xS_t^yS_l^x).\label{eq:QutritPauliPlaqs}
\end{align}
With the correct prefactors, this gives the mapping for the magnetic term $H_B$.

We will now show a quantum circuit for $U^{(zzzz)}(\al):=\ee^{-\ii \al S_b^zS_r^zS_t^zS_l^z/2}$ using only two-qutrit gates and no auxiliary qutrits. The circuits for the exponentiation of the terms in \eq{QutritPauliPlaqs} (and thereby for time evolution along the magnetic part $H_B$) are found afterward by inserting single-qutrit gates for basis transformations.  Denote the eigenvectors of $S^z$ by $\ket{1}, \ket{0}$ and $\ket{-1}$ according to their eigenvalue. We call this the (qutrit) computational basis. We will consider the action of $U^{(zzzz)}(\al)$ on computational basis states only; action on other states follows by linearity. Note that $U^{(zzzz)}(\al)$ acts like $\ee^{-\ii \al Z_bZ_rZ_tZ_l/2}$ on states in $\{\ket 1, \ket{-1}\}^{\otimes{4}}$ and as the identity on the complement (i.e. any computational basis state containing a tensor factor $\ket{0}$). First, assume $U^{(zzzz)}(\al)$ acts on a state in $\{\ket 1, \ket{-1}\}^{\otimes{4}}$. Define the Hermitian and unitary operator
\begin{equation}\label{eq:qutritX}
  X':=\left(
    \begin{array}{ccc}
      0 & 0 & 1 \\
      0 & 1 & 0 \\
      1 & 0 & 0
    \end{array}
  \right).
\end{equation}
It acts like a Pauli-$X$ gate on the subspace $\mrm{span}\{\ket 1, \ket{-1}\}$ (and like the identity on $\ket 0$). Define $\mrm{C}^{(-1)}_i\mrm{NOT}_{j}$ as the qutrit gate that performs an $X'$ gate on qutrit $j$ conditioned on qutrit $i$ being in the state $\ket{-1}$. Then, the circuit $C^\dagger\mrm{RS}^z_\al C$, with $C$ a cascade of $\mrm{C}^{(-1)}_i\mrm{NOT}_{j}$ gates, $C=\mrm{C}^{(-1)}_1\mrm{NOT}_{0}\times\mrm{C}^{(-1)}_2\mrm{NOT}_{1}\times \mrm{C}^{(-1)}_3\mrm{NOT}_{2}$, and $\mrm{RS}^z_\al=\ee^{-\ii \al S^z/2}$, performs $U^{(zzzz)}(\al)$ in the subspace $\mrm{span}\{\ket 1, \ket{-1}\}^{\otimes{4}}$.

If $U^{(zzzz)}(\al)$ acts on a product state not in $\{\ket 1, \ket{-1}\}^{\otimes 4}$ it acts as the identity. Thus, the full $U^{(zzzz)}(\al)$ can be implemented with $U^{(zzzz)}(\al)=C^\dagger D C$, where $D$ performs $\mrm{RS}^z_\al$ conditioned on qutrits 0, 1 and 2 all being in $\{\ket{1},\ket{-1}\}$. For $D$, we can use the technique of Ref.~\cite{yordanov2020efficient} (also see \fig{circuitids}b), replacing the $\CNOT$s by $\mrm{C}^{(1,-1)}_i\mrm{NOT}_{j}$, where the latter gate performs a $X'$ gate on qutrit $j$ conditioned on qutrit $i$ being in the state $\ket{1}$ or $\ket{-1}$. The full circuit is given in \fig{qutritplaquette}.

The circuit for $U^{(zzzz)}(\al)$ gives the circuit for the exponentiation of the terms in \eq{QutritPauliPlaqs} upon an appropriate local change of basis. Define the unitaries
\begin{align}
  A^x:=\frac{1}{2}\left(
  \begin{array}{ccc}
      1 & -\sqrt{2} & 1 \\
      \sqrt{2} & 0  & -\sqrt{2} \\
      1 & \sqrt{2} & 1
    \end{array}
  \right),
  &&
  A^{y}:=\frac{1}{2}\left(
  \begin{array}{ccr}
      -1 & \sqrt{2} & -1 \\
       -\ii \sqrt{2} & 0  & \ii\sqrt{2} \\
      1 & \sqrt{2} & 1
    \end{array}
  \right).
\end{align}
Then
\begin{align}\label{eq:QutritBasisTransfo}
  A^{x} S^z(A^{x})^\dagger=S^x/\sqrt{2}, &&  A^{y} S^z(A^{y})^\dagger =S^y/\sqrt{2},
\end{align}
which can be used straightforwardly to generalize $U^{(zzzz)}$ to $U^{(ijk\ell)}(\al):=\ee^{- \ii \al \nu^{(ijk\ell)}  S_b^iS_r^jS_t^kS_l^\ell/2}$. We remind that $b,r,t,l$ denote the link of the plaquette the operators act on (bottom, right, top, left), and $i,j,k,\ell\in\{x,y,z\}$ determine the types of the operators, as in \eq{QutritPauliPlaqs}. For the general form $U^{(ijk\ell)}(\al)$, the factor $\nu^{(ijk\ell)}$ is determined by the factors  $1/\sqrt{2}$ in [\eq{QutritBasisTransfo}]. That is, $\nu^{(ijk\ell)}=2^{m/2}$, with $m$ the number of indices $i,j,k,\ell$ unequal to $z$. The terms in \eq{QutritPauliPlaqs} contain no $S^z$ factors. For these terms, we note that $U^{(ijk\ell)}(\al):=\ee^{- \ii \al  S_b^iS_r^jS_t^kS_l^\ell/8}$ with $i,j,k,\ell\in\{x,y\}$.

\begin{figure}
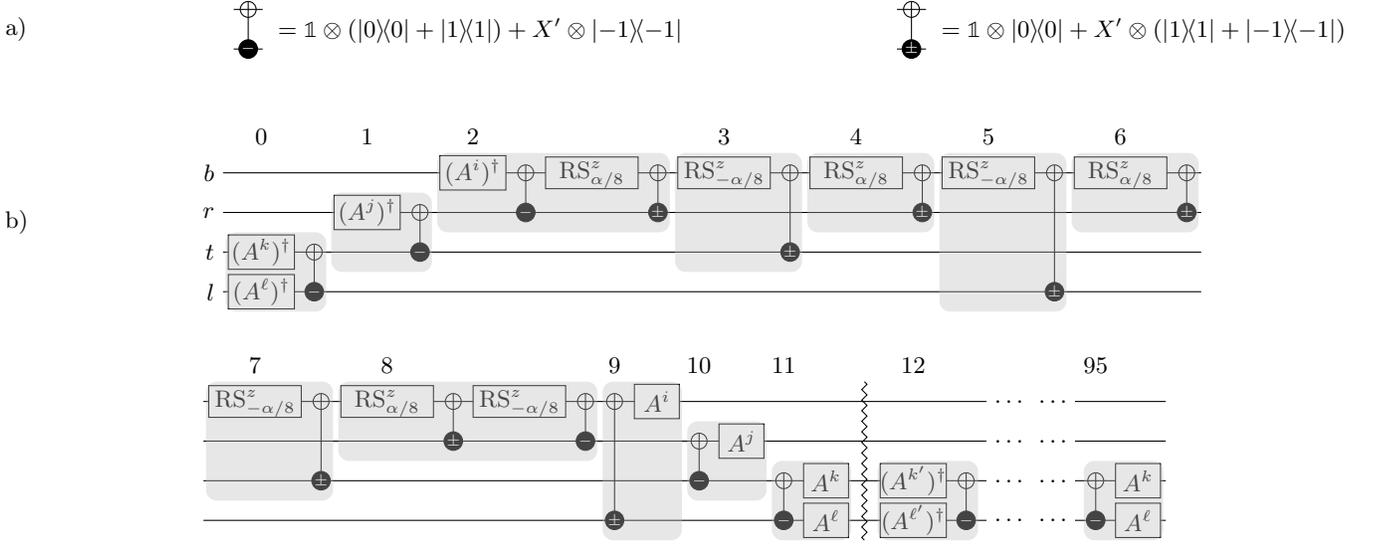

$\text{a)} \hfill
  \begin{array}{c} \raisebox{-1.32em}{\begin{tikzpicture}[scale=1.000000,x=1pt,y=1pt]
\filldraw[color=white] (0.000000, -7.500000) rectangle (11.000000, 22.500000);
\draw[color=black] (0.000000,15.000000) -- (11.000000,15.000000);
\draw[color=black] (0.000000,0.000000) -- (11.000000,0.000000);
\draw (5.500000,15.000000) -- (5.500000,0.000000);
\begin{scope}
\draw[fill=black] (5.500000, 0.000000) circle(3.500000pt);
\clip (5.500000, 0.000000) circle(3.500000pt);
\draw (5.500000, 0.000000) node {{\tiny {\color{white} $-$}}};
\end{scope}
\begin{scope}
\draw[fill=white] (5.500000, 15.000000) circle(3.000000pt);
\clip (5.500000, 15.000000) circle(3.000000pt);
\draw (2.500000, 15.000000) -- (8.500000, 15.000000);
\draw (5.500000, 12.000000) -- (5.500000, 18.000000);
\end{scope}
\end{tikzpicture}}= \id \otimes (\ketbra{0}{0}+\ketbra{1}{1})+  X' \otimes \ketbra{-1}{-1}
  \end{array}
  \hfill
  \begin{array}{c}
  \raisebox{-1.32em}{\begin{tikzpicture}[scale=1.000000,x=1pt,y=1pt]
\filldraw[color=white] (0.000000, -7.500000) rectangle (11.000000, 22.500000);
\draw[color=black] (0.000000,15.000000) -- (11.000000,15.000000);
\draw[color=black] (0.000000,0.000000) -- (11.000000,0.000000);
\draw (5.500000,15.000000) -- (5.500000,0.000000);
\begin{scope}
\draw[fill=black] (5.500000, 0.000000) circle(3.500000pt);
\clip (5.500000, 0.000000) circle(3.500000pt);
\draw (5.500000, 0.000000) node {{\tiny {\color{white} $\pm$}}};
\end{scope}
\begin{scope}
\draw[fill=white] (5.500000, 15.000000) circle(3.000000pt);
\clip (5.500000, 15.000000) circle(3.000000pt);
\draw (2.500000, 15.000000) -- (8.500000, 15.000000);
\draw (5.500000, 12.000000) -- (5.500000, 18.000000);
\end{scope}
\end{tikzpicture}}= \id \otimes\ketbra{0}{0} +  X' \otimes (\ketbra{1}{1} +\ketbra{-1}{-1} )
  \end{array}
$

  \vspace{2em}

  $\text b) \hfill \begin{array}{c}\input{figures/qutritplaquette.tikz}\end{array}\hfill$

  \vspace{1em}

  \input{figures/qutritplaquette2.tikz}

  \caption{\label{fig:qutritplaquette} a) Definition of the CNOTs, $\mrm{C}^{(-1)}\mrm{NOT}$ and $\mrm{C}^{(1,-1)}\mrm{NOT}$, both acting on two qutrits, used in the circuit for the quantum simulation of the Kogut-Susskind model of 2D QED. The first tensor factor corresponds to the upper qutrit wire.  b) Qutrit circuit for approximate time evolution of one plaquette in the magnetic part $H_B$ of the Kogut-Susskind lattice model of 2D QED. The circuit before the zigzag line implements $U^{(ijk\ell)}(\al):=\ee^{- \ii \al \nu^{(ijk\ell)}  S_b^iS_r^jS_t^kS_l^\ell/2}$. The full circuit for approximate time evolution along the magnetic Hamiltonian $H_B$ consists of eight repetitions of the circuit before the zigzag line, each time with different basis transformation matrices, as determined by \eq{magnetic}. The angle $\al$ is determined by Eqs. \eqref{eq:lie-trotter} or \eqref{eq:suzuki}, \eqref{eq:magnetic}, and  \eqref{eq:QutritPauliPlaqs}. The integers and gray areas give an initial definition of the gates $G^{H_B}_i$ used in the implementation~\cite{kattemolle2024quantile}. Implementing the exponents according to the ordering in \eq{QutritPauliPlaqs} will lead to  $G^{H_B}_{12}G^{H_B}_{11}=G^{H_B}_{36}G^{H_B}_{35}=G^{H_B}_{60}G^{H_B}_{59}G^{H_B}_{84}G^{H_B}_{83}=\id$ and hence these can be omitted. Subsequently, the two-qutrit gates around the removed gates can be merged into a single two-qutrit unitary, absorbing $G^{H_B}_{13}$ into $G^{H_B}_{10}$, $G^{H_B}_{37}$ into $G^{H_B}_{34}$, $G^{H_B}_{61}$ into $G^{H_B}_{58}$, and $G^{H_B}_{85}$ into $G^{H_B}_{82}$. Finally, also the gates around the zigzag lines can be merged where there is no cancellation, absorbing $G^{H_B}_{24}$ into $G^{H_B}_{23}$, $G^{H_B}_{48}$ into $G^{H_B}_{47}$ and $G^{H_B}_{72}$ into $G^{H_B}_{71}$. We do not relabel gates when removing or merging gates. The final circuit for the time evolution along one plaquette operator has 81 general two-qutrit gates. The circuit for the second, lower plaquette operator in the basis circuit (see \fig{QEDCircuitLayoutSM}) contains gates $G^{H_B}_{96}$ to $G^{H_B}_{191}$, from which gates are removed and merged as before.}
\end{figure}

\subsubsection{Mass part}
The mass term [\eq{mass}] and the kinetic term [\eq{kinetic}] involve fermionic degrees of freedom, which we encode using the same encoding as for the Fermi-Hubbard Hamiltonian, but now with one spin orbital per orbital site, thus recovering the compact encoding of Ref.~\cite{derby2021compact}. The compact encoding defines Majorana operators $\gamma_i=a_i^\nodagger+a^\dagger_i$ and $\bar \gamma_i=(a_i^\nodagger-a^\dagger_i)/\ii$, where $i$ labels the sites of a 2D square grid. These operators define edge operators $E_{i,i'}=-\ii \gamma_i\gamma_{i'}$ and vertex operators $V_i=-\ii \gamma_i\bar \gamma_i$ which anticommute if they share an index and commute otherwise.

The fermionic number operator is related to these operators by $a_i^\dagger a_i=(\id-V_i)/2$. In turn, in the compact encoding, $V_i$ is mapped to $Z_i$. Thus, the mass term maps to
\begin{equation}
 H_m\mapsto -\frac{m}{2}\sum_i(-1)^{i+1}Z_i,
\end{equation}
where we have left out terms proportional to the identity.

\subsubsection{Kinetic part}
The kinetic term [\eq{kinetic}] involves both fermionic and link operators. Mapping the link operators to qutrit operators [\eq{LinkToQutrit}], and the fermionic operators to edge and vertex operators, we have
\begin{equation}\label{eq:QEDhopping}
  a_i^\dagger E^-_{(i,j)}a_j^\nodagger + h.c. = -\frac{1}{4}[\ii S^x_{(i,j)}(V_i-V_j)+S^y_{(i,j)}(\id-V_iV_j)]E_{ij}.
\end{equation}
The compact encoding further maps the edge operators $E_{ij}$ to three-qubit Pauli words acting on qubits $i$, $j$, and an auxiliary qubit which we denote by $\langle i,j \rangle$ associated with $i$ and $j$. The explicit three-qubit Pauli words can be found in Ref.~\cite{derby2021compact}. After defining alternating even and odd plaquettes, all the qubits around the odd plaquettes share a single auxiliary qubit (I.e., $\langle i,j\rangle=\langle j,k\rangle=\langle k,l\rangle=\langle l,i\rangle$ for the qubits $i,j,k,l$ around an odd plaquette) and there is no auxiliary qubit associated with the even plaquettes. The vertex operators $V_i$ are mapped to the single-qubit Pauli operators $Z_i$. After this final mapping of edge and vertex operators to Pauli words, \eq{QEDhopping} will be a sum of four terms of the form $S^{z}_{(i,j)}Z_iZ_{\langle i,j\rangle}Z_j$ up to local basis transformations and scalar prefactors. For $S^z_{(i,j)}$, these basis transformations are given in \eq{QutritBasisTransfo}. For the qubits, the transformations are given by
\begin{align}
  A^xZ(A^x)^\dagger=X && (A^x=H)\\
  A^yZ(A^y)^\dagger=Y && (A^y=SH),
\end{align}
with $H$ the Hadamard gate and $S=\diag(1,\ii)$ the phase gate. It is clear from the context when $A^{x}$ and $A^y$ refer to the qubit basis transformations instead of the qutrit basis transformations [\eq{QutritBasisTransfo}]. The above gives the circuit for $a_i^\dagger E^-_{(i,j)}a_j + h.c.$, and thereby the kinetic term of the Hamiltonian, $H_\mrm{kin}$, given in \fig{QEDkinetic}.

\begin{figure}
\input{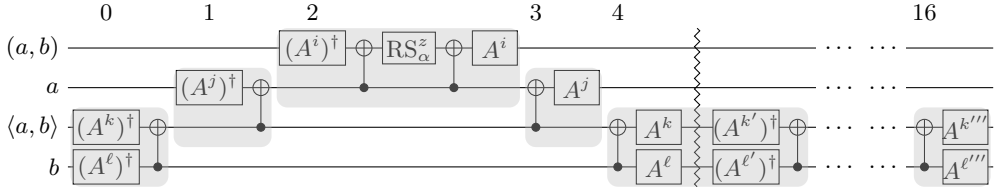}
\caption{\label{fig:QEDkinetic} Quantum circuit for time evolution along a single edge $(a,b)$ in the kinetic term $H_\mrm{kin}$ in the Kogut-Susskind lattice model of 2D QED. Here, the qubit line $(a,b)$ represents the link qubit between $a$ and $b$, and $\langle a,b\rangle$ the auxiliary qubit for the compact encoding in the odd plaquette to which the edge $(a,b)$ is adjacent. The circuit before the zigzag line implements $\exp(-\ii \al \nu^{(ijk\ell)} S^i_{(a,b)}\si^j_a\si^k_{\langle a,b\rangle}\si^l_b/2)$, where $\si^x_a=X_a$, etc., and where $\nu^{(ijk\ell)}$ arises from [\eq{QutritBasisTransfo}] as in \fig{qutritplaquette}. The total circuit consists of four repetitions of the circuit before the zigzag line, each time with different basis transformation matrices $A^\nu$, as determined by \eq{kinetic} and the local encoding. The angle $\alpha$ is determined by Trotterization [\eq{lie-trotter} or \eq{suzuki}], \eq{kinetic}, and \eq{QEDhopping}, and the compact encoding. We define a regular CNOT symbol with its target on a qutrit as $\ketbra{0}{0}\otimes \id+\ketbra{1}{1}\otimes X'$, where the first tensor factor is the control qubit and the second the target qutrit, and with $X'$ as in \eq{qutritX}. In the implementation~\cite{kattemolle2024quantile}, the gates $G^{H_\mrm{kin}}_i$ are defined as follows. Gates $G^{H_\mrm{kin}}_0$ to $G^{H_\mrm{kin}}_{16}$ implement the above circuit for the kinetic term between fermionic sites $a=(0,0,0)$ and $b=(0,0,2)$ (see \fig{QEDCircuitLayoutSM}), gates $G^{H_\mrm{kin}}_{17}$ to $G^{H_\mrm{kin}}_{33}$ for  $a=(0,1,2)$ and $b=(1,0,0)$, gates $G^{H_\mrm{kin}}_{34}$ to $G^{H_\mrm{kin}}_{50}$ for  $a=(0,0,0)$ and $b=(0,1,2)$, and gates $G^{H_\mrm{kin}}_{51}$ to $G^{H_\mrm{kin}}_{67}$ for  $a=(0,0,2)$ and $b=(1,0,0)$.}
\end{figure}

\subsubsection{Total Hamiltonian}
The structure of the total tileable circuit for the simulation of the Kogut-Susskind model of 2D QED is given in \fig{QEDCircuitLayoutSM}. The exponentiation of the electric and mass parts of the Hamiltonian $H_E$ [\eq{electric}] leads to single-qutrit gates only, which can all be absorbed into the subsequent two-qudit gates. The total basis circuit has 230 gates, much more than the other circuits we consider in this work. Out of these gates, 162 are on account of the magnetic term. Although we provide the full circuit, in running the circuit on hardware, as a proof of principle, one may first simulate the model in the nonmagnetic limit $H_B\rightarrow 0$.

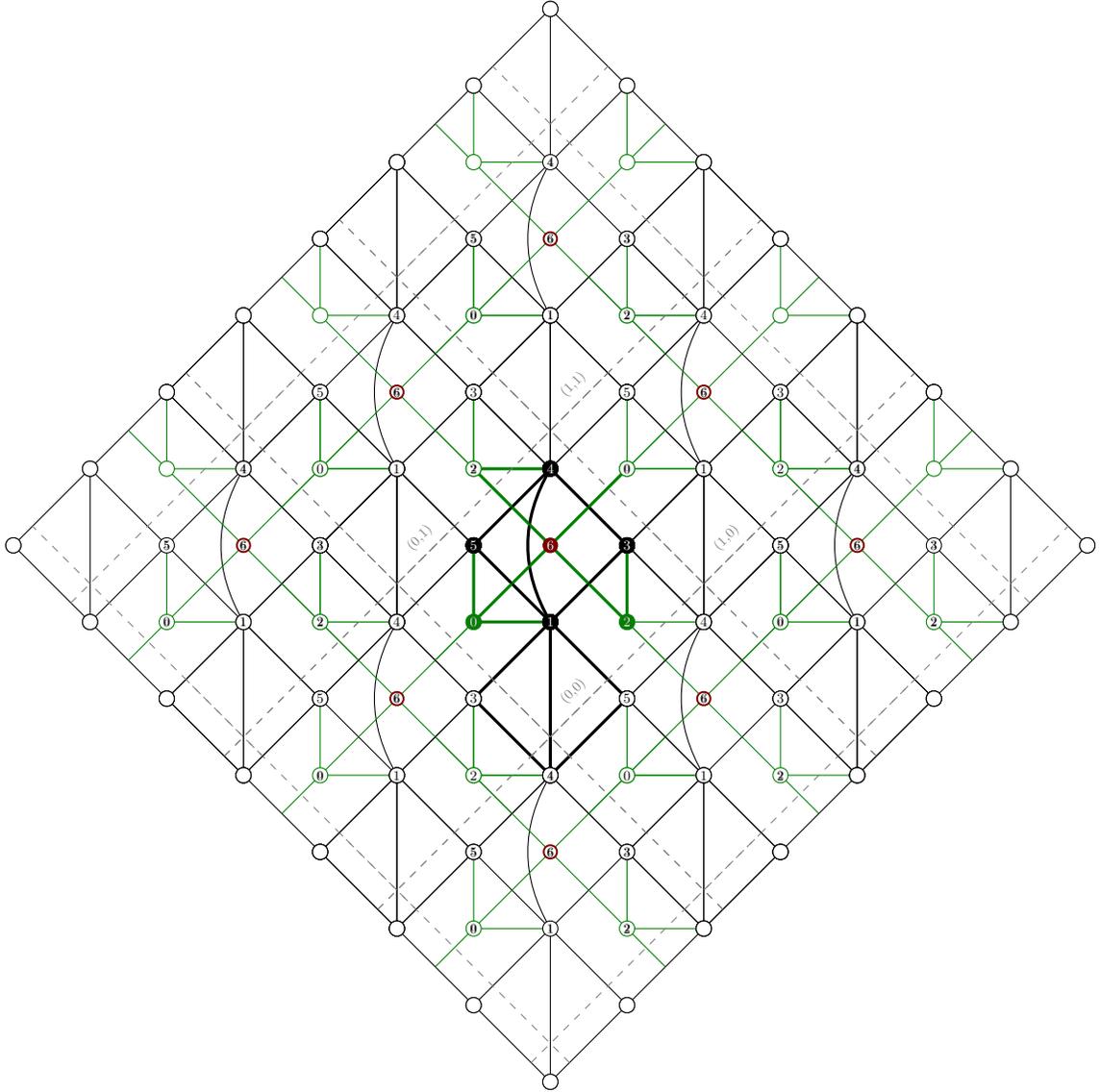
\begin{figure}
  \centering

\begin{tikzpicture}[scale=1.5, rotate=45]
    \definecolor{lightgreen}{rgb}{0.0, 0.5, 0.0}
    \definecolor{lightred}{rgb}{0.5, 0.0, 0.0}
  \newcommand{\drawCircuit}{
    \foreach \i in {0,...,3}
        \draw (\i,0) -- (\i,3);
    \foreach \j in {0,...,3}
        \draw (0,\j) -- (3,\j);

    \draw (1,1) to[bend left=30] (2,2);
    \draw (2,2) -- (3,3);
    \draw (0,0) -- (1,1);
    \draw (2,0) -- (3,1);
    \draw (0,2) -- (1,3);

    \draw[lightgreen] (1.5,0) -- (1.5,3);
    \draw[lightgreen] (0,1.5) -- (3,1.5);
    \draw[lightgreen] (0.5,1.5) -- (1,2);
    \draw[lightgreen] (0.5,1.5) -- (1,1);
    \draw[lightgreen] (1.5,0.5) -- (2,1);
    \draw[lightgreen] (1.5,2.5) -- (2,2);
    \draw[lightgreen] (2.5,1.5) -- (3,2);
    \draw[lightgreen] (2.5,1.5) -- (3,1);
    \draw[lightgreen] (1.5,2.5) -- (2,3);
    \draw[lightgreen] (1.5,0.5) -- (2,0);

    \filldraw[fill=white, draw=lightgreen] (0.5,1.5) circle (2pt);
    \filldraw[fill=white, draw=lightgreen] (1.5,0.5) circle (2pt);
    \filldraw[fill=white, draw=lightgreen] (1.5,2.5) circle (2pt);
    \filldraw[fill=white, draw=lightgreen] (2.5,1.5) circle (2pt);
    \fill[lightred] (1.5,1.5) circle (2pt);
    \fill[white] (1.5,1.5) circle (1.5pt);

    \foreach \i in {0,...,3}
        \foreach \j in {0,...,3} {
            \filldraw[fill=white, draw=black] (\i,\j) circle (2pt);
        }

    \draw[dashed,gray] (0.25,0) -- (0.25,3);
    \draw[dashed,gray] (2.25,0) -- (2.25,3);
    \draw[dashed,gray] (0,0.25) -- (3,0.25);
    \draw[dashed,gray] (0,2.25) -- (3,2.25);

    \foreach \i/\j in {0/0, 1/0, 2/0, 3/0, 0/1, 0/2, 0/3, 1/3, 2/3, 3/1, 3/2, 3/3} {
        \filldraw[draw=black, fill=white] (\i,\j) circle (2pt);
      }

    \node[anchor=center, scale=0.4] at (0.5,1.5) {0};

    \node[anchor=center, scale=0.4] at (1,1) {1};
    \node[anchor=center, scale=0.4] at (1.5,0.5) {2};
    \node[anchor=center, scale=0.4] at (2,1) {3};
    \node[anchor=center, scale=0.4] at (2,2) {4};
    \node[anchor=center, scale=0.4] at (1,2) {5};
    \node[anchor=center, scale=0.4] at (1.5,1.5) {6};

    }

    \draw[very thick] (1,1) to[bend left=30] (2,2);
    \draw[style=very thick] (1,1) -- (1,2);
    \draw[style=very thick] (1,2) -- (2,2);
    \draw[style=very thick] (2,2) -- (2,1);
    \draw[style=very thick] (2,1) -- (1,1);
    \draw[style=very thick] (1,1) -- (0,1);
    \draw[style=very thick] (0,1) -- (0,0);
    \draw[style=very thick] (0,0) -- (1,0);
    \draw[style=very thick] (1,0) -- (1,1);
    \draw[style=very thick] (1,1) -- (0,0);

    \draw[lightgreen, style=very thick] (0.5,1.5) -- (1,2);
    \draw[lightgreen, style=very thick] (0.5,1.5) -- (1,1);
    \draw[lightgreen, style=very thick] (0.5,1.5) -- (1.5,1.5);
    \draw[lightgreen, style=very thick] (1.5,1.5) -- (1.5,2.5);
    \draw[lightgreen, style=very thick] (1.5,1.5) -- (1.5,0.5);
    \draw[lightgreen, style=very thick] (1.5,1.5) -- (2.5,1.5);
    \draw[lightgreen, style=very thick] (1.5,0.5) -- (2,1);
    \draw[lightgreen, style=very thick] (1.5,2.5) -- (2,2);

    \foreach \i in {0,-2,2}{
    \foreach \j in {0,-2,2} {
    \begin{scope}[shift={(\i,\j)}]
      \drawCircuit
    \end{scope}
        }}

    \fill[lightgreen] (0.5,1.5) circle (2pt);
    \fill (1,2) circle (2pt);
    \fill (1,1) circle (2pt);
    \fill (2,2) circle (2pt);
    \fill (2,1) circle (2pt);
    \fill[lightgreen] (1.5,0.5) circle (2pt);
    \fill[lightred] (1.5,1.5) circle (2pt);

  \newcommand{\drawBlackLables}{
    \node[anchor=center, scale=0.6] at (0.5,1.5) {0};

    \node[anchor=center, scale=0.6] at (1,1) {1};
    \node[anchor=center, scale=0.6] at (1.5,0.5) {2};
    \node[anchor=center, scale=0.6] at (2,1) {3};
    \node[anchor=center, scale=0.6] at (2,2) {4};
    \node[anchor=center, scale=0.6] at (1,2) {5};
    \node[anchor=center, scale=0.6] at (1.5,1.5) {6};

    }

    \foreach \i in {0,-2,2}{
    \foreach \j in {0,-2,2} {
    \begin{scope}[shift={(\i,\j)}]
      \drawBlackLables
    \end{scope}
        }}

\node[anchor=center, scale=0.6, color=white] at (0.5,1.5) {0};

\node[anchor=center, scale=0.6, color=white] at (1,1) {1};
\node[anchor=center, scale=0.6, color=white] at (1.5,0.5) {2};
\node[anchor=center, scale=0.6, color=white] at (2,1) {3};
\node[anchor=center, scale=0.6, color=white] at (2,2) {4};
\node[anchor=center, scale=0.6, color=white] at (1,2) {5};
\node[anchor=center, scale=0.6, color=white] at (1.5,1.5) {6};

\node[anchor=center, scale=0.6, color=gray, rotate=45] at (0.7,0.4) {(0,0)};
\node[anchor=center, scale=0.6, color=gray, rotate=45] at (0.7,2.4) {(0,1)};
\node[anchor=center, scale=0.6, color=gray, rotate=45] at (2.7,0.4) {(1,0)};
\node[anchor=center, scale=0.6, color=gray, rotate=45] at (2.7,2.4) {(1,1)};

\end{tikzpicture}
  \caption{\label{fig:QEDCircuitLayoutSM} Spatial structure of the tileable circuit for the simulation of the Kogut-Susskind lattice model of 2D QED. The basis circuit is depicted with bold lines. The green vertices (filled and open) depict qubits associated with the staggered fermionic orbitals on a 2D square lattice. The red vertices (filled and open) depict the auxiliary qubits used in the compact fermion-to-qubit encoding. They define the even plaquettes. The black vertices (filled and open) depict qutrits encoding the electric field degrees of freedom. An edge between two vertices indicates that a two-qudit gate acts between the corresponding qudits during the circuit. Black edges correspond to gates for the magnetic part of the Hamiltonian and green edges for the kinetic part of the Hamiltonian. The other parts of the Hamiltonian, i.e., the mass and electric part, only involve single-qudit gates on the green and black vertices, respectively. Dashed lines delineate the unit cells. Integers inside the vertices show the seed numbers and the tuples show the unit cell coordinates.}
\end{figure}

\section{Merging SWAP gates}\label{sec:merge_swaps}
In this work, we do not assume a specific type of quantum hardware. As in Ref.~\cite{clinton2024towards}, we therefore adopt a costing model in which each general two-qudit gate has unit cost. In particular, this implies that any SWAP gate immediately preceding (following) a two-qudit unitary $U_{ij}$ acting on the same two qudits can be merged into that unitary, yielding a new unitary
$U'_{ij} = U_{ij} \, \SWAP$  ($ U'_{ij} = \SWAP \, U_{ij}$) that has unit cost. We emphasize that in the formulation of the routing problem as an SMT formula, it is entirely possible to go beyond this costing model by incorporating additional details about the hardware, if necessary. For instance, Ref.~\cite{tan2020optimal} considers a scenario where a SWAP gate requires three units of time.

Nevertheless, in certain cases, our current costing model is already faithful. We discuss three such cases here. First, consider a qu\emph{bit}-based quantum computer with native CNOT gates and comparatively fast, high-fidelity single-qubit gates. It is well known that any two-qubit unitary can be decomposed using three CNOTs and four layers of interspersed single-qubit gates~\cite{vidal2004universal}. Assuming the two-qubit unitaries $U_{ij}$ in a given circuit are sufficiently general, they typically require three CNOTs for decomposition. Since a $\SWAP$ gate also decomposes into three CNOTs, both $U_{ij}$ and any $\SWAP_{i'j'}$ require approximately the same execution time. Furthermore, appending or prepending a $\SWAP_{ij}$ gate to $U_{ij}$ typically does not increase the depth of its decomposition, as $\SWAP_{ij}U_{ij}$ and $U_{ij}\SWAP_{ij}$ are just as general as $U_{ij}$. Consequently, SWAP gates can be merged with two-qubit unitaries, treating the entire operation as a single two-qubit unitary with unit cost.

An example of a set of two-qubit unitaries  $U_{ij}$ that require three CNOT gates are those introduced in \eq{general_gate}, with $H_i, H_j$ arbitrary and $H_{ij}=\la_{ij}^x X_iX_j + \la_{ij}^y Y_iY_j + \la_{ij}^z Z_{i}Z_j$, where $\pi/4\geq \frac{t}{r}\la^x_{ij}\geq\frac{t}{r}\la_{ij}^y\geq \lvert \frac{t}{r}\la_{ij}^z\rvert$ and $\frac{t}{r}\la_{ij}^z\neq 0$~\cite{vidal2004universal}. The two-qubit unitaries arising in the quantum simulation of the XXZ model [\eq{XXZ}], form examples of such unitaries.

As a second example in which our costing model is already faithful, we consider spin qubits, which naturally interact via the Heisenberg exchange interaction,
\begin{equation}
    H_{ij} = J \bS_i \cdot \bS_j = \frac{J}{4} (X_i X_j + Y_i Y_j + Z_i Z_j),
\end{equation}
where $J$ is the tunable coupling strength \cite{loss1998quantum, burkard2023semiconductor}. In such systems, the natural two-qubit gate is
\begin{equation}
    V_{ij}^{(\mrm{XXX})}(Jt) := \ee^{-\ii J t (X_i X_j + Y_i Y_j + Z_i Z_j)/4}.
  \end{equation}
  It thus natively implements the simulation gate for the XXZ model at the isotropic point. (For comparison with Eqs.~\eqref{eq:general_gate} and~\eqref{eq:XXZ}, note  $V_{ij}^{(\mrm{XXX})}(Jt)=U_{ij}(t)$ with $\Delta_{ij}=1$ and $J/4=\la_{ij}$, where $\Delta_{ij}$ is the anisotropy and $\la_{ij}$ the interaction strength in the XXZ model.)

  Noting that
\begin{equation}
    \SWAP_{ij} \,\hat =\, \ee^{-\ii \pi (X_i X_j + Y_i Y_j + Z_i Z_j)/4},
\end{equation}
we can naturally merge a SWAP gate into a native gate by
\begin{equation}
    \SWAP\, V_{ij}^{(\mrm{XXX})}(Jt) = V_{ij}^{(\mrm{XXX})}(Jt)\, \SWAP\, \hat =\, V_{ij}^{(\mrm{XXX})}(Jt + \pi).
\end{equation}
Thus, a SWAP gate can be incorporated into the native two-qubit gate by appropriately rescaling the interaction strength or the gate duration (or both) such that $Jt \rightarrow Jt + \pi$. Under the assumption that $J$ can be made
sufficiently large without impacting the fidelity of $V_{ij}^{\mrm{XXX}}$, this does not lead to an increase in gate time.

As a third example, consider a quantum computer with a native continuous two-qubit gate~\cite{arute2019quantum,foxen2020demonstrating}
\begin{equation}\label{eq:fsim}
  \fSim(\theta,\phi)\, \hat=\, \ee^{-\ii \left[\frac{\theta}{2} (X_iX_j+Y_iY_j)+\frac{\phi}{4} Z_iZ_j\right]}\ee^{\ii \frac{\phi}{4}Z_i}\ee^{\ii \frac{\phi}{4}Z_j},
\end{equation}
with arbitrary iSWAP angle $\theta$ and arbitrary conditional phase angle $\phi$, and where the hat denotes equality up to a global phase. This gate was realized natively in Ref.~\cite{foxen2020demonstrating} for arbitrary $\theta$ and $\phi$ (also see Appendix B. of~\cite{kattemolle2022variational}). It is periodic with period $2\pi$ in both $\theta$ and $\phi$.

From \eq{fsim}, we see the fSim gate natively implements the quantum simulation gate $U_{ij}(t)$ [\eq{general_gate}] for the XXZ model [\eq{XXZ}] with interaction strength $\la_{ij}$ and anisotropy $\Delta_{ij}$, up to single-qubit $Z$-rotations;
\begin{equation}
 U_{ij}^{(\mrm{XXZ})}\left(t\right)=\fSim\!\left(2 t \la_{ij},4t\la_{ij}\Delta_{ij}\right)\ee^{-\ii t(\la_{ij}\Delta_{ij}+\la_i)Z_i}\ee^{-\ii t(\la_{ij}\Delta_{ij}+\la_j)Z_j}.
\end{equation}
Noting that $\SWAP\,\hat{=}\,\fSim(\pi/2,\pi)\ee^{-\ii \frac{\pi}{4} Z_i}\ee^{-\ii \frac{\pi}{4} Z_j}$ and that $\ee^{-\ii \frac{\phi}{4}Z_i}\ee^{-\ii \frac{\phi}{4}Z_j}$ commutes with the fSim gate for all $\theta,\phi$, we have
\begin{equation}
  \SWAP\,U_{ij}^{(\mrm{XXZ})}\left(t\right)\hat=\,\fSim\!\left(2 t \la_{ij}+\pi/2,4t\la_{ij}\Delta_{ij}+\pi\right)\ee^{-\ii \left[t(\la_{ij}\Delta_{ij}+\la_i)+\frac{\pi}{4}\right]Z_i}\ee^{-\ii \left[t(\la_{ij}\Delta_{ij}+\la_j)+\frac{\pi}{4}\right]Z_j}.
\end{equation}
For small arguments $2t\la_{ij}$ and $4t\la_{ij}\Delta_{ij}$ of the $\fSim$ gate ($2t\la_{ij},4t\la_{ij}\Delta_{ij}\sim 0.3\,\mrm{rad}$), the shift of these arguments by $\pi/2$ and $\pi$ does not significantly change the fidelity of the $\fSim$ gate (see Ref.~\cite{foxen2020demonstrating}, Fig. 4). The same holds when $\theta=0.4\pi, \phi=0.8\pi$ (Ref.~\cite{foxen2020demonstrating}, Fig. 4). These angles are used in Ref.~\cite{rosenberg2024dynamics} for the Floquet quantum simulation of an XXZ model. So, on a quantum computer with the parameterized $\fSim$ as a native gate, and when simulating the XXZ model, SWAP gates can be merged with $\fSim$ gates at no increase in infidelity of that gate.

\section{Transpiler}\label{sec:transpiler}
Here, we explain our transpiler in detail, found at \lin{quantile/transpiler.py} in the implementation~\cite{kattemolle2024quantile}. As input, the transpiler receives a basis circuit, describing the circuit to be routed. This can be a reseeded basis circuit (\sec{reseeding}). We call this circuit the \emph{logical circuit} and it consists of \emph{logical gates} acting on \emph{logical qudits}. It also receives a basis graph (which likewise can be reseeded), describing the connectivity of the hardware that the circuit is to be routed to, and some transpilation options.

Without loss of generality, we assume that all logical qudits from the input basis circuit are in a patch of 3 by 3 unit cells centered around the logical cell $(0,0)$. This is without loss of generality because a basis circuit can always be reseeded so that the assumption becomes true.

We initialize an integer depth $D$ to the lower bound given by the longest critical path in the directed acyclic graph (DAG) representation of the circuit (\sec{gate_dependencies}). Using this $D$, an SMT formula encoding the routing problem is constructed and passed to the solver Z3~\cite{moura2008z3}. The integer $D$ is increased until the formula becomes satisfiable. The first $D$ for which the SMT formula is satisfiable is the minimum depth in which the circuit can be routed.

The output of the transpiler is a minimal-depth routed basis circuit that respects the connectivity of the target hardware, as defined by the basis graph of the hardware which  was received as input. We call this routed circuit the \emph{physical circuit}, consisting of \emph{physical gates} acting on \emph{physical qudits}.

In the physical basis circuit, all logical qudits need to be assigned to a physical qudit before the first gates act. We say the logical qudits reside at their assigned physical qudits. This assignment gives the initial logical-to-physical qudit map, or initial \emph{qudit map} for short. If a logical qudit initially resides on physical qudit $i$ and a SWAP gate acts on physical qudits $i$ and $j$, it is convenient to describe the result as the logical qudit moving from $i$ to $j$, even though, strictly speaking, the SWAP gate exchanges states. After each SWAP gate, the qudit map is updated accordingly, resulting in the final qudit map at the end of the circuit. Also the initial and final qudit maps are provided as output of the transpiler.

For very deep circuits, we offer the option to slice the input logical circuit into subcircuits (slices) of a given depth. In that case, we run a transpiler for each of those subcircuits in sequence and stitch together the solutions in postprocessing.

\subsubsection{Assumptions}

In constructing the SMT formula, we make the following assumptions.
\begin{itemize}
  \item[-] \emph{Cell-to-same-cell and seed-to-seed.} We assume the initial map is cell-to-same-cell. This means that each logical qudit $(i,j,s)$ resides in cell $(I,J)$ of the hardware basis graph, with $I=i,J=j$. Furthermore, we assume the initial map is seed-to-seed, which means that if a logical qudit $(i,j,s)$ is mapped to $(I,J,S)$, then all logical qudits of the form $(i',j',s)$ (with $s$ as before) are mapped to $(I',J',S)$ (with $S$ as before).

\end{itemize}

Additionally, for simplicity of formulation of the SMT formula, we make the following assumptions that are without loss of generality.
\begin{itemize}

\item[-] \emph{Reach.} The input basis circuit has a reach of one basis circuit cell. That is, any logical qudit that the basis circuit acts on must be in the unit cell of the basis circuit or in any of the adjacent cells (including diagonally adjacent cells).

\item[-] \emph{Attachment.} Any edges of the basis graph that are fully in one cell are fully in the central cell $(0,0)$. The same holds for basis circuits.

\item[-] \emph{Congruent seeds.} The input basis circuit and the input basis graph have congruent seeds. That is, the set of seed numbers of the qudits that the input basis circuit acts on is $\{0,\ldots,N_\mrm{seeds}\}$, and similarly for the set of seed numbers of the vertices of the basis graph (with a different $N_\mrm{seeds}$).

\item[-] \emph{Minimality.} The hardware basis graph does not have redundant edges. An edge in a basis graph is redundant if it can be removed without altering the induced lattice graph.

\item[-] \emph{Mobility zone.} After the initial time, the logical qudits can move in a \emph{mobility zone}, consisting of the physical qudits in $(2\delta+1§)\times(2\delta+1)$ basis graph cells centered around the cell $(0,0)$. That is, during the physical circuit, logical qudits can move to physical qudits in other unit cells than their initial `home' cell, but they may not leave the mobility zone.
\end{itemize}

\subsubsection{Transpiler options}

The various ways in which our transpiler can be used are best described by the available transpiler options, which are as follows. These options are fixed before the construction of the SMT formula.

\begin{itemize}
\item[-] \lin{cyclic} (boolean) : If set to true, the initial qudit map is imposed to be equal to the final qudit map.

\item[-] \lin{gate_dependencies} (boolean) : If set to true, it is imposed that if a gate $A$ is performed before a gate $B$ in the logical circuit, then also $A$ is performed before $B$ in the physical circuit.

\item[-] \lin{merge_swaps} (boolean) : If set to true, consider $\SWAP\,G$, with the $\SWAP$ gate and the two-qudit gate $G$ acting on the same physical qudits, as a single gate.

\item[-] \lin{slice_depth} (integer or None) : Reschedule the input circuit and slice it into subcircuits of depth \lin{slice_depth}. Route each subcircuit optimally. If no integer is given, the input circuit is not rescheduled and sliced.

\item[-] \lin{minimize_swaps} (boolean) : After minimizing the depth $D$, minimize the number of SWAPs. If \lin{merge_swaps == True}, after minimizing $D$, at that $D$, first minimize the number of naked SWAPs, and at that number of naked SWAPs, minimize the total number of SWAPs. A naked SWAP is a SWAP that is not merged with another two-qudit gate.

\item[-] \lin{fixed_naked_swaps} (boolean or int) : If set to an integer and \lin{merge_swaps == True}, the SMT formula will fix the number of SWAPs that are not merged with a two-qudit gate to the given value.

\end{itemize}
\subsection{Notation}

To specify the SMT variables and constraints, we introduce the following notation. Throughout this section, $A.\tilde b$ denotes the \emph{property} $\tilde b$ of $A$. We stress that $b$ is a property by the tilde. On the other hand, $A_b$ and $A^{(b)}$ are values of $A$ indexed by $b$. The number of logical gates $N_g$, the number of logical qudits $N_q$, and the number of SWAPs $N_S$ are exceptions to these rules. It is stressed by a bar that a symbol refers to an SMT variable. These are only fixed after a satisfying assignment of the variables is found. We generally use uppercase letters to refer to physical objects (such as physical qudits) and lowercase letters to refer to logical objects (such as logical qudits). To summarize notation by an example, consider $\overline{A_b.\tilde C}$. It is an SMT variable corresponding to the property $\tilde C$ of the physical object $A$ indexed by the logical object $b$.

In this notation, a logical basis circuit $c$ consists of a sequence of quantum gates $c=\left(g^{(0)},\ldots,g^{(N_g-1)}\right)$, with $N_g$ the number of logical gates. Each gate $g$ (dropping the gate index for brevity) acts on a logical qudit $g.\tilde q$ if $g$ is a single-qudit gate and on logical qudits $g.\tilde q,g.\tilde q'$ if it is a two-qudit gate. Each qudit in each quantum gate has a cell $x$-coordinate $g.\tilde q.\tilde x$, a cell $y$ coordinate $g.\tilde q.\tilde y$, and a seed number $g.\tilde q.\tilde s$.

This circuit is to be routed, by inserting SWAP gates, to hardware whose connectivity graph is generated by a basis graph.  A basis graph is a finite simple graph with vertices of the form $v=(v_x,v_y,v_s)$. For (basis) graphs and their edges, we do not use the convention that lowercase letters refer to logical objects and uppercase letters refer to physical objects. As for qudits, the integers $v_x,v_y$ denote the cell a vertex is in and the seed number $v_s$ indicates the vertex in that cell. To generate a patch $P$ (with a size of $n$ by $m$ basis graphs) of a hardware connectivity graph, proceed as follows: for all $i\in\{0,\ldots,n-1\}, j\in\{0,\ldots,m-1\}$, copy the hardware's basis graph, translate all coordinates $(v_x,v_y)$ of all vertices $v$ by $(i,j)$ and add the vertices and edges of the translated graph to $P$. We will often consider patches $P$ of size $(n,m)$ centered around cell $(0,0)$. These are patches with $n,m$ odd and where $i,j$ range over $i\in\{-\lfloor n/2\rfloor,\ldots ,\lfloor n/2 \rfloor\}, j\in\{-\lfloor m/2\rfloor,\ldots ,\lfloor m/2 \rfloor\}$.

The routed circuit $C$ consists of a set of physical gates $C=\{G_{g^{(0)}},\ldots,G_{g^{(N_g-1)}}\} \cup \{\SWAP^{(0)},\ldots,\SWAP^{(N_S-1)}\}$, with  $G_g$ the physical gate corresponding to the logical gate $g$, $\SWAP^{(i)}$ the $i$th SWAP gate, and $N_S$ the number of SWAPs. Each physical gate $G_g$ acts at a time $G_{g}.\tilde t$ and acts on physical qudit $G_{g}.{\tilde Q}$ if it is a single-qudit gate, and on physical qudits $G_{g}.{\tilde Q},G_{g}.{\tilde Q'}$ if it is a two-qudit gate. Each physical qudit has a cell $x$-coordinate $(G_g.\tilde Q).\tilde x=G_g.\tilde Q.\tilde x$, cell $y$-coordinate $G_g.\tilde Q.\tilde y$, and a seed number $G_g.\tilde Q.\tilde s$. Each SWAP gate $\SWAP^{(i)}$ acts at time $\SWAP^{(i)}.\tilde t$ along an edge $\SWAP^{(i)}.\tilde e=\{\SWAP^{(i)}.\tilde e_0,\SWAP^{(i)}.\tilde e_1\}$ of the basis graph.

\subsection{Variables}\label{sec:variables}
Here, we list the variables that appear in the SMT formula. In this, we define a depth $\mc D$ that depends on the transpiling options. If the final qudit map is arbitrary, there is no need for SWAP gates in the last layer of the routed circuit. This situation occurs if the routed circuit need not be cyclic (\lin{cyclic == False}). In the definition of the variables below, we therefore find it practical to define the integer $\mc D$ as follows: $\mc D=D-1$ if the final map is arbitrary, and $\mc D=D$ otherwise. The integer $\mc D$ can be thought of as the depth of the circuit in which the SMT formula should account for possible SWAP gates. Note that $D$ and hence $\mc D$ are not SMT variables.

Since, if the final map is arbitrary, there are no SWAPs in the last circuit layer $D-1$, there is also no need to define the qudit map at $D$. We will therefore also use $\mc D$ in the definition of the variables for the qudit map.

\subsubsection{Physical-qudit coordinates of the logical-to-physical-qudit mapping}
For all integer times $0\leq T < \mc D+1$, and all logical qudits $q$, the SMT formula needs to specify at which physical qudit $Q$ the logical qudit resides. For this purpose, we have the $(\mc D+1)\times N_{q} \times 3 $ integer SMT variables
\begin{equation}
\{\overline{Q_{T, q}.\tilde I} \mid  0\leq T < \mc D+1,  q \in \text{logical qudits},\tilde I\in\{\tilde X,\tilde Y,\tilde S\}\}.
\end{equation}
The set of logical qudits is given by $\{g.\tilde q \mid g\in c\} \cup \{g.\tilde q' \mid g\in c \wedge \text{$g$ is a two-qudit gate}\}$. Note in the definition of the variables above, $0\leq T < \mc D+1$ because we have to represent the location of the qudits after the last SWAP gates, which can occur at $T=\mc D-1$ the latest.

At $T=0$, we assume the initial qudit mapping is cell-to-same-cell and seed-to-seed. That is, if a logical qudit $(i,j,s)$ is initially at physical qudit $(I,J,S)$, we enforce $(i',j',s)$ initially resides at $(I',J',S)$, with $I'=i', J'=j'$ for all logical seed numbers $s$ and all possible $i',j'$. The cell-to-same-cell and seed-to-seed property of the initial qudit map can be enforced with constraints on the variables. Instead, in the implementation we achieve this by letting all $\{\overline{Q_{0,q}.\tilde S} \mid q.\tilde s=s_0 \}$ point to the same SMT variable instance for each $s_0$ separately. This is equivalent and leads to fewer variables and constraints.

\subsubsection{Physical-qudit coordinates of physical gates}
With each logical gate $g$, we associate a physical gate $G_g$, acting on qudit $G_g.Q$ if $g$ is a single-qudit gate and on qudits $G_g.Q,G_g.Q'$ if $g$ is a two-qudit gate. So, we have the integer SMT variables
\begin{equation}
  \begin{aligned}
  &\{\overline{G_g.Q.\tilde I} \mid g \in \text{single-qudit logical gates}, \tilde I\in\{\tilde X,\tilde Y,\tilde S\}\}\\
  &\cup\ \{\overline{G_g.A.\tilde I} \mid g \in \text{two-qudit logical gates}, A\in\{Q,Q'\},\tilde I\in \{\tilde X,\tilde Y,\tilde S\}\}.
\end{aligned}
\end{equation}

\subsubsection{Gate times of physical gates}
Each physical gate acts at a specific time, leading to the variables
\begin{equation}
 \{\overline{G_g.\tilde T}\mid G_g\in \text{ physical gates}\}.
\end{equation}

\subsubsection{Activity of SWAP gates}\label{sec:SWAP_variables}

For all times $0\leq T < \mc D$, we consider SWAP gates acting along each edge $\tilde e=\{\tilde e_0,\tilde e_1\}$ of a patch $P_{2\delta+3,2\delta+3}$ of the target hardware, centered around the target hardware unit cell $(0,0)$. Given the mobility zone of $2\delta+1$ by $2\delta+1$ hardware cells in which the logical qudits can move, considering a patch size of $2\delta+3$ by $2\delta+3$ basis graphs for the physical SWAP gates is necessary (and sufficient) since, e.g., in the case $\delta=1$, the hardware basis graph may have an edge $((0,0,0),(-1,0,1))$. If a SWAP gate is placed along this edge in the routed circuit, then in a large enough patch of the routed circuit, a SWAP gate will also be placed along $((2,0,0),(1,0,1))$ by translation of the physical basis circuit by two unit cells in the positive $x$-direction, thus acting on the physical qudit $(1,0,1)$ on which a logical qudit possibly resides.

However, considering a patch $P_{2\delta+3,2\delta+3}$ for the SWAP variables also creates needlessly many SMT variables. Again in the example where $\delta=1$, in $P_{5,5}$ there can also be edges like $((2,0,0), (2,0,1))$, which will never act on a physical qudit of the physical basis circuit that houses a logical qudit. This is because a SWAP on the edge $((2,0,0), (2,0,1))$ fully acts outside the mobility zone of the logical qudits. We therefore exclude any SWAP variables that correspond to SWAP gates acting fully outside the mobility zone. That is, we consider the patch $P_{\mrm{SWAPs}}$ defined by
\begin{equation}\label{eq:SWAP_edges}
  E(P_{\mrm{SWAPs}})=\{e\in E(P_{2\delta+3,2\delta+3}) \mid  (-\delta\leq e_{0_x} \leq \delta \wedge -\delta\leq e_{0_y} \leq \delta ) \vee (-\delta\leq e_{1_x} \leq \delta \wedge -\delta\leq e_{1_y} \leq \delta )\}.
\end{equation}
Here, $e_{i_j}$ is the $j$-coordinate of the $i$th vertex of $e$.

At each time $T$, each SWAP gate can be ``on'' (i.e., it is placed in the routed circuit along edge $e$) or ``off'' (it is not placed in the routed circuit). So, we have the boolean SMT variables
\begin{equation}\label{eq:SWAPvars}
\{\overline{\SWAP_{T,e}} \mid 0 \leq T < \mc D, e \in E(P_{\mrm{SWAPs}})\}.
\end{equation}
The gates $\SWAP^{(i)}$ in the routed circuit are obtained from the above SWAP variables in postprocessing after a satisfying assignment is found.

Presume that in the routed and postprocessed physical basis circuit $C$, at time $T$, a SWAP gate is placed along the edge $e$. When a circuit patch $P$ of size $(n,m)$ is created from this physical basis circuit, then at time $T$, it contains a SWAP gate along all edges $e'=((e_{0_x}+i,e_{0_y}+j,e_{0_s}),(e_{1_x}+i,e_{1_y}+j,e_{1_s}))$ for all $i\in\{0,\ldots,n-1\},j\in\{0,\ldots,m-1\}$. This creates SWAP gates that are not in $C$, but that act on physical qudits from $C$ holding logical qudits nevertheless. For this reason, even if $C$ faithfully implements the basis circuit $c$, the patch $P$ may not faithfully implement the patch $p$ of the same size created by copying and translating the basis circuit $c$.

This effect has to be accounted for. We could do so by adding a constraint to the SMT formula that says that if at time $T$ a SWAP gate acts along an edge $e$ of $P$ (i.e., $\overline{\SWAP_{T,e}}$ is ``on''), then a SWAP gate must act along all edges $e'$ in $P_{\mrm{SWAPs}}$ related to $e$ by translation. The same can be achieved by, for each $e$, letting all variables $\overline{\SWAP_{T,e'}}$ in \eq{SWAPvars} so that $e'$ is related to $e$ by a translation, point to the same SMT boolean variable instance. This is equivalent to adding the constraints but leads to fewer variables and constraints.

Thus, if for some $T$, $\SWAP_{T,e}$ with, e.g., $e=\{(0,0,0),(0,0,1)\}$  is ``on'', then also $\overline{\SWAP_{T,e'}}$ with $e'=\{(1,0,0),(1,0,1)\}$ is ``on''. In a naive postprocessing step, a $\SWAP$ gate will be added to the physical basis circuit along both edges, leading to a SWAP-SWAP collision. We discuss the solution to this problem in section \ref{sec:postprocessing}.

\subsection{Constraints}\label{sec:constraints}

Constraints are put on the SMT variables and are combined into a single SMT formula. In the implementation, the constraints are found in \lin{quantile/constraints.py}.

\subsubsection{Mapping}

At the initial time $T=0$, each logical qudit resides on a physical qudit with the same cell coordinates (i.e., the initial mapping is cell-to-same-cell). Also, naturally, logical qudits have to be mapped to existing physical qudits. Because the basis graph has congruent seeds, with seed numbers ranging from $0$ to $N_\mrm{seeds}$ (exclusive), we can ensure logical qudits are mapped to existing physical qudits by constraining the seed number of the physical qudits that the logical qudits map to. That is, for all logical qudits $q$,
\begin{equation}
  \overline{Q_{0,q}.\tilde X}=q.\tilde x\ \wedge\ \overline{Q_{0,q}.\tilde Y}=q.\tilde y\ \wedge\ 0\leq \overline{Q_{0,q}.\tilde S}< N_{\mrm{seeds}}.
\end{equation}

At later times $T>0$, we constrain the qudits to stay in the mobility zone measuring $(2\delta+1)\times(2\delta+1)$ basis graph cells centered around the central basis graph cell. That is, for all $1 \leq T < \mc D+1$ and all logical qudits $q$,
\begin{equation}
    -\delta\leq \overline{Q_{T,q}.\tilde X} \leq \delta \ \wedge\ -\delta\leq \overline{Q_{T,q}.\tilde Y}\leq \delta\ \wedge\ 0\leq \overline{Q_{T,q}.\tilde S}< N_{\mrm{seeds}}.
\end{equation}
In our software implementation, we assume $\delta=1$ for simplicity. One can always increase the size of the logical circuit and the basis graph (by creating patches and reseeding) to effectively enlarge the mobility zone.

At any time, the qudit map has to be injective. That is, for all $0\leq T< \mc D+1$ and all pairs $\{q,q'\}$ $(q\neq q')$ of logical qudits,
\begin{equation}
  \overline{Q_{T,q}}\neq\overline{Q_{T,q'}},
\end{equation}
which is shorthand notation for $\bigwedge_{\tilde I\in\{\tilde X,\tilde Y,\tilde S\}} \overline{Q_{T,q}.\tilde I}\neq  \overline{Q_{T,q'}.\tilde I}$.

If a slice depth is given (\lin{slice_depth != None}), the transpiler transpiles logical subcircuits of depth \lin{slice_depth} in series. When transpiling one such subcircuit, other than the first, we have to enforce a specific initial qudit map $\mc Q_{0,q}$. That is, for all logical qudits $q$,
\begin{equation}
  \overline{Q_{0,q}}=\mc Q_{0,q}.
\end{equation}

\subsubsection{Time}

All physical gates must be performed in the allotted circuit depth. That is, for all physical gates $G_g$,
\begin{equation}
  0\leq \overline{G_g.\tilde T} < D.
\end{equation}
Note that here, we use $D$ instead of $\mc D$ because there can be gates in the last layer $D-1$, irrespective of whether the final qudit map is arbitrary or not.

\subsubsection{Consistency}

Physical gates have to act on the physical qudit(s) holding the correct logical qudit(s). That is, for all single-qudit gates $G_g$ and $0\leq T < D$,
\begin{equation}
  \overline{G_g.\tilde T}=T\Rightarrow \overline{Q_{T,g.q}}=\overline{G_g.\tilde Q}.
\end{equation}
Here, $\overline{Q_{T,g.q}}=\overline{G_g.Q}$ is shorthand for $\bigwedge_{\tilde I\in\{\tilde X,\tilde Y,\tilde S\}} \overline{Q_{T,g.q}.\tilde I}=\overline{G_g.Q.\tilde I}$. Additionally, for all two-qudit gates $G_g$ and $0\leq T < D$,
\begin{equation}
  \overline{G_g.\tilde T}=T\Rightarrow \overline{Q_{T,g.q}}=\overline{G_g.\tilde Q}\wedge \overline{Q_{T,g.q'}}=\overline{G_g.\tilde Q'}
\end{equation}
in the same notation.

\subsubsection{Connectivity}

Physical gates need to respect the connectivity constraints of the target hardware. That is, physical gates may only be performed along edges of the graph $\mc G$ induced by tiling the hardware basis graph. Since logical qudits may only move within the mobility zone of $(2\delta+1)\times(2\delta+1)$ cells, we can equivalently demand that physical gates may only be performed along those edges in $\mc G$ that are fully within the mobility zone. That is, the set of allowed edges is
\begin{equation}
 E_\mrm{allowed}=\{e\in E(P_{2\delta+1,2\delta+1}) \mid  (-\delta\leq e_{0_x} \leq \delta \wedge -\delta\leq e_{0_y} \leq \delta) \wedge (-\delta\leq e_{1_x} \leq \delta \wedge -\delta\leq e_{1_y} \leq \delta )\},
\end{equation}
with $e_{i_j}$ the $j$-coordinate of the $i$th vertex of $e$, and $P_{2\delta+1,2\delta+1}$ a patch of $(2\delta+1)\times(2\delta+1)$ basis graphs centered around $(0,0)$.

For all two-qudit logical gates $g$,
\begin{equation}
  \bigvee_{e\in E_{\mrm{allowed}}}  (\overline{G_g.\tilde Q}=e_0 \wedge \overline{G_g.\tilde Q'}=e_1) \vee (\overline{G_g.\tilde Q}=e_1 \wedge \overline{G_g.\tilde Q'}=e_0).
\end{equation}
Here, $\overline{G_g.\tilde Q}=e_i$ is shorthand for $\overline{G_g.\tilde Q.\tilde X}=e_{i_x} \wedge \overline{G_g.\tilde Q.\tilde Y}=e_{i_y} \wedge \overline{G_g.\tilde Q.\tilde S}=e_{i_s}$.

\subsubsection{SWAP effect}

If no SWAP gates act on a physical qudit holding a logical qudit, then at the next time step, the physical qudit holds the same logical qudit. Let $I_v$ be the edges in $P_{\mrm{SWAPs}}$ [\eq{SWAP_edges}] incident on a vertex $v$. Then, for all $0\leq T<\mc D$, logical qudits $q$ and vertices $v$ in the mobility zone, we have
\begin{equation}
 \overline{Q_{T,q}}=v \wedge \left(\bigwedge_{e\in I_v} \neg \overline{\SWAP_{T,e}}\right)\Rightarrow \overline{Q_{T+1,q}}=v.
\end{equation}
Similar to earlier, $\overline{Q_{T,q}}=v$ is shorthand for $\overline{Q_{T,q}.\tilde X}=v_x\ \wedge\  \overline{Q_{T,q}.\tilde Y}=v_y\ \wedge \ \overline{Q_{T,q}.\tilde S}=v_s$.

If a SWAP acts along an edge, any logical qudit on a vertex of that edge changes position to the other vertex of the edge. That is, for all $0\leq T<\mc D$, edges $e\in P_{\mrm{SWAPs}}$ [\eq{SWAP_edges}], $i\in\{0,1\}$ and logical qudits $q$,
\begin{equation}
  \left(\overline{Q_{T,q}}=e_{i}\ \wedge\  \overline{\SWAP_{T,e}} \right) \Rightarrow \overline{Q_{T+1,q}}=e_{i\oplus 1},
\end{equation}
with `$\oplus$' addition modulo 2.
\subsubsection{Gate dependencies}

If gate dependencies are to be respected (\lin{gate_dependencies == True}), the gate dependency list is extracted from the basis circuit, as described in \sec{gate_dependencies}. If gate $g$ has to be performed before gate $g'$ in the logical basis circuit, then gate $G_{g}$ has to be performed before gate $G_{g'}$ in the physical basis circuit. That is, for all dependencies $(g,g')$ in the gate dependency list,
\begin{equation}
  G_{g}.\tilde T < G_{g'}.\tilde T.
\end{equation}

\subsubsection{No SWAP-SWAP collisions}

Two active SWAP gates cannot act at the same time on the same qudit, even when the basis circuit is used to generate an arbitrarily large routed circuit patch. That is, for all $0\leq T <\mc D$, all pairs of edges $\{e,e'\}$ ($e\neq e'$) in the hardware basis graph, and $(i,j)\in\{0,1\}^2$,
\begin{equation}
  e_{i_s}=e'_{j_s} \Rightarrow \neg(\overline{\SWAP_{T,e}} \wedge \overline{\SWAP_{T,e'}})
\end{equation}

A condition preventing SWAP-SWAP collisions is omitted in~\cite{tan2020optimal} although we consider it necessary even when routing standard, nontileable circuits. It is omitted presumably because if at $T$, one SWAP gate acts on qudits $Q,Q'$ (here we denote the physical qudits instead of the vertices they are associated with), and, simultaneously, another SWAP gate on $Q',Q''$, and, if furthermore, $Q$ and $Q''$ hold logical qudits, then at $T+1$ the qudit map is not injective. This is already ruled out by the mapping constraints. This situation always arises when there are as many physical qudits as logical qudits.

However, when there are at least two more physical qudits than there are logical qudits, the situation may arise where, e.g., only $Q$ holds a logical qudit. In that case, SWAPs acting along $Q,Q'$ and $Q',Q''$ simultaneously do not cause the qudit map to be two-to-one. Here, the former SWAP gate does not act on any logical qudits, but such SWAPs may be inserted by the SMT solver even in depth-optimal circuits when the number of SWAP gates is not minimized directly. In the mobility zone, there are typically more physical qudits than logical qudits, so we always enforce that there are no SWAP-SWAP collisions.

\subsubsection{No gate-gate collisions}

If gate dependencies are to be respected (\lin{gate_dependencies == True}), gate-gate collisions are automatically excluded. This is because if there exists a dependency between $g^{(i)}$ and $g^{(j)}$ (this dependency can be indirect; by transitivity of gate dependency, also indirect dependencies are respected), then $\overline{G_{g^{(i)}}.\tilde T}\neq\overline{ G_{g^{(j)}}.\tilde T}$. If there is no dependency, then by construction of the dependency list,  $g^{(i)}$ and $g^{(j)}$ do not act on the same seed number (i.e., all $g^{(a)}.\tilde b.\tilde s$ are unequal for all $a\in\{i,j\}$, and for all $\tilde b\in\{\tilde q,\tilde q'\}$ if $g^{(a)}$ is a two-qudit gate or $\tilde b=\tilde q$ if $g^{(a)}$ is a single-qudit gate). It follows $G_{g^{(i)}}$ and $G_{g^{(j)}}$ do not act on the same physical seed number, because the initial map is seed-to-seed and because this property is conserved throughout the routed circuit by construction of the SWAP variables.

If gate dependencies are not enforced (\lin{gate_dependencies == False}), gate-gate collisions have to be excluded. That is, for all pairs of logical gates $\{g,g'\}$ ($g\neq g'$), and all $\tilde A\in\{\tilde Q,\tilde Q'\}$ if $g$ is a two-qudit gate ($\tilde A=\tilde Q$ if $g$ is a single-qudit gate), and all $\tilde B\in\{\tilde Q,\tilde Q'\}$ if $g'$ is a two-qudit gate ($\tilde B=\tilde Q$ if $g'$ is a single-qudit gate),
\begin{equation}
  \overline{G_{g}.\tilde T}=\overline{G_{g'}.\tilde T}\Rightarrow \overline{G_{g}.\tilde A.\tilde S}\neq\overline{G_{g'}.\tilde B.\tilde S}.
\end{equation}

Irrespective of whether gate dependencies are enforced, a collision between a physical gate and a translated version of itself, arising from another copy of the physical basis circuit, are already excluded. This is because if a physical qudit acts on two physical qudits with the same seed, $G_{g}.\tilde Q.\tilde S=G_{g}.\tilde Q'.\tilde S$, then the corresponding logical gate must act on two logical qudits with the same seed, which is impossible if the input basis circuit is a valid basis circuit.  Note that the initial qudit map is seed-to-seed and that this property is conserved throughout the physical circuit by construction of the SWAP variables.

\subsubsection{No gate-SWAP collisions}

If SWAP gates cannot be merged with gates (\lin{merge_swaps == False}), a gate and a SWAP may not act on the same qudit at the same time. Considering tileability, the gate and the SWAP may together not act on two qudits with the same seed number at the same time. That is, for all $0\leq T<\mc D$, logical gates $g$, edges $e$ of the basis graph, qudits $\tilde A\in\{Q,Q'\}$ ($\tilde A=\tilde Q$ if $g$ is a single-qudit gate), and vertices $v$ of $e$,
\begin{equation}\label{eq:no_gate_SWAP}
  \neg (\overline{G_{g}.\tilde T} = T\,\wedge\,(\overline{G_{g}.\tilde A.\tilde S} = v_s)\,\wedge\,\overline{\SWAP_{T,e}}    ).
\end{equation}

If SWAP gates can be merged with two-qudit gates (\lin{merge_swaps == True}), a SWAP and a gate may together still not act on two qudits with the same seed number at the same time \emph{unless} the SWAP and the gate overlap fully. Full spatial overlap of a SWAP along edge $e$ and a physical two-qudit gate $G_g$ is expressed by
\begin{equation}
 h(g,e)=\bigvee_{i\in\{0,1\}}(\overline{G_g.\tilde Q}=e_{0\oplus i} \wedge \overline{G_g.\tilde Q'}=e_{1\oplus i}).
\end{equation}
Let $h$ return `False' if $g$ is a single-qudit gate. The expression $h$ is not yet a condition added to the SMT formula. We do add the condition to the formula that for all $0\leq T < \mc D$ and logical gates $g$,
\begin{equation}
                       \bigwedge_{\substack{\tilde A\\ e\in E(P)\\v\in e}} \left[f(g,T,\tilde A,v)\right] \vee\ \bigvee_{e\in E(P_{\mrm{SWAPs}})}\left[\overline{G_g.\tilde T}=T \wedge h(g,e) \wedge \overline{\SWAP_{T,e}}\right],
\end{equation}
with $B$ the basis graph and $f(g,T,\tilde A,v)$ the condition in \eq{no_gate_SWAP}. Here, as before, $\tilde A=Q$ if $g$ is a single-qudit gate and $\tilde A\in\{Q,Q'\}$ if $g$ is a two-qudit gate.

\subsubsection{Cyclicity}
If the final qudit map is demanded to be equal to the initial qudit map (\lin{cyclic == True}), we have that for all logical qudits $q$,
\begin{equation}
 \overline{Q_{\mc D,q}}=\overline{Q_{0,q}}.
\end{equation}

However, when a slice depth is set (\lin{slice_depth = d}) and the transpiler is transpiling the last subcircuit, instead, the final qudit map should equal the initial map $\mc Q_{0,q}$ of the first circuit slice. That is, for all logical qudits $q$,
\begin{equation}
 \overline{Q_{\mc D,q}}=\mc Q_{0,q}.
\end{equation}

\subsubsection{Minimize SWAPs}

If the number of SWAPs is to be minimized (\lin{minimize_swaps == True}), add
\begin{equation}
    \sum_{T,e}  \overline{\SWAP_{T,e}},
\end{equation}
as a minimization objective, where $T$ runs over all $0\leq T< \mc D$ and $e$ over all edges of the hardware basis graph. Here, $\overline{\SWAP_{T,e}}$ is interpreted as 0 (1) if  $\overline{\SWAP_{T,e}}$ is ``off'' (``on'').

If we allow SWAPs to be merged with overlapping two-qudit gates (\lin{merge_swaps == True}), we favor SWAP gates that have been merged with a two-qudit gate over `naked' SWAP gates, i.e., SWAP gates that have not been merged with a two-qudit gate. In that case, we first minimize the number of naked SWAPs, and at that number of naked SWAPs, we minimize the total number of SWAPs. Formally, let $n(T,e)$ describe whether a SWAP at time $T$ and edge $e$ is naked, even after creating an arbitrarily large patch of the routed circuit,
\begin{equation}
n(T,e)=\overline{\SWAP_{T,e}} \wedge \neg \left\{  \bigvee_{g} [\overline{G_g.\tilde T}=T \wedge h(g,e)]\right\},
\end{equation}
where $g$ ranges over all logical gates in a patch of 3 by 3 logical basis circuits centered around the central cell $(0,0)$. This is not a condition added to the SMT formula.

If the number of SWAPs is to be minimized (\lin{minimize_swaps == True}) and we allow SWAPs to be merged with overlapping two-qudit gates (\lin{merge_swaps == True}), we do first add
\begin{equation} \label{eq:objective1}
 \sum_{T,e}n(T,e)
\end{equation}
as a minimization objective. [If \lin{minimize_swaps == True}, \lin{merge_swaps == True} and the number of naked SWAPs is fixed manually (by setting \lin{fixed_naked_swaps} to an integer $\eta$), then instead we add $\sum_{T,e}n(T,e)=\eta$ as a constraint.] Then, (again if \lin{minimize_swaps == True}, \lin{merge_swaps == True}), we add
\begin{equation} \label{eq:objective2}
    \sum_{T,e}  \overline{\SWAP_{T,e}}
  \end{equation}
as a minimization objective. In both objectives, $T$ runs over all $0\leq T< \mc D$ and $e$ over all edges of the hardware basis graph. In the implementation, the order of adding the objectives, first \eq{objective1} and then \eq{objective2}, implies that first the number of naked SWAPs is minimized, and at that number of naked SWAPs, the total number of SWAPs.

\subsection{Boundary effects}\label{sec:postprocessing}
After a solution of the SMT formula is found, under the (optional) minimization objectives, some postprocessing is necessary. If a solution is found, the satisfying assignments of the variables have to be translated to a standard circuit representation. From this, some final conceptual steps have to be performed to obtain a valid basis circuit and valid routed circuit patches.

First, in routing a single logical (reseeded) basis circuit, the effects of SWAP gates arising from routing other copies of the circuit have to be taken into account. We did so in \sec{SWAP_variables} by ensuring that, whenever a SWAP gate is placed along an edge $e$ of the (reseeded) hardware basis graph, a SWAP gate is also placed along all edges $e'$ so that ($i$) both $e'$ is related to $e$ by translation of the $x$ and $y$ coordinates of its vertices, and ($ii$) $e'$ does not act fully outside the qudit mobility zone. Naively including all those SWAP gates in the physical basis circuit leads to an invalid physical basis circuit since SWAP-SWAP collisions will arise once one repeats the physical basis circuit to generate a physical circuit patch.

One method for solving this issue is to simply accept physical basis circuits that contain SWAP gates that lead to full SWAP-SWAP collisions. After such a physical basis circuit is used to generate a physical circuit patch, one can simply remove any duplicate SWAP gates. There is a complexity of $O(N_G)$ for copying and translating all the basis circuits to form the patch, and also $O(N_G)$ for removing duplicate gates (e.g., by using the bit-vector implementation of sets~\cite{aho1983data}). Thus, generating a circuit patch from the output of the transpiler has complexity $O(N_G)$. This method is implemented as the method \lin{get_patch_fast}. The method has the downside that the basis graph thus output by the transpiler is formally not a valid basis circuit.

We implemented another method, where we insist that also physical basis circuits are valid basis circuits as-is. From all SWAP gates related to one another by translation (at a given time), we only keep a single representative in the physical basis circuit,  namely the one that acts along an edge $e$ in the hardware basis graph. (This edge is unique, for otherwise there would be redundant edges in the hardware basis graph.) This assures the resulting physical basis circuit is a valid basis circuit and it can be treated on an equal footing with any other basis circuit. In generating a physical circuit patch from this physical basis circuit, however, we now need to ensure the correct SWAP gates are included at the boundaries of that patch. After a physical circuit patch $P_{n,m,l}$ is generated of $n$ by $m$ physical basis circuits (of which only single SWAP-representatives were kept), we add back all SWAP gates from a patch of $n+2\delta$ by $m+2\delta$ ($\delta=1$ in the current implementation) physical basis circuits (of which only single SWAP-representatives were kept), centered around $P_{n,m,l}$. Additionally, in this method, we remove all boundary SWAP gates that do not act on a physical qudit holding a logical qudit. In the implementation, this method is slower because it can be optimized further. However, its execution time may still not be significant in practice. The method is implemented as \lin{get_patch}.

If a slice depth was given, the transpiler is run as described above for the subcircuits representing the slices separately. Thereafter, the routed physical solutions are put one after another, thus creating a single physical circuit, which is then rescheduled. In creating patches, this physical basis circuit can be treated as any other physical basis circuit.

\section{Results}\label{sec:results}
We ran the transpiler for the circuits in \sec{examples}. The hardware lattice  graphs these circuits were routed to are the chain (`line'), the ladder, and the square lattice. (Only 1D basis circuits were routed to the chain and the ladder, which are 1D hardware basis graphs). We make tables sorted by the depth overhead that was achieved (in ascending order). The main point of these tables is to identify circuits that can be run at no or little overhead at the given hardware connectivity available, which here we assumed to be a chain, a ladder, or a square lattice. The fact that our routing method is locally optimal indicates that the order of the basis circuits in the tables is not an artifact of the routing method and therefore truly indicates which basis circuits are easiest to perform on the given hardware.

Out of all runs, which can be inspected at the repository~\cite{kattemolle2024quantile}, here, we report the ones we deem the most interesting. For example, if for a given logical basis circuit, basis graph, and the transpiler options \lin{merge_swaps} and \lin{cyclic}, a routing solution with depth $d$ was found, here, we do not report the lines with the same logical basis circuit, basis graph, and the transpiler options \lin{merge_swaps} and \lin{cyclic} that also attain depth $d$, but with a larger basis graph size.

Concretely, out of all runs, for each fixed combination of logical basis circuit, basis graph, and the transpiler options \lin{merge_swaps} and \lin{cyclic}, we select the run that is listed in the table as follows.

\begin{itemize}
    \item[-] \emph{Minimize depth overhead.}
    We retain only the runs with the least depth overhead. The minimum possible depth overhead, which the transpiler is guaranteed to find, may depend on the sizes of the logical basis circuit and basis graph, as these are not optimized by the transpiler.

    \item[-] \emph{Minimize naked SWAPs.}
    From the remaining runs, we keep the ones with the fewest naked SWAPs. The number of naked SWAPs can vary depending on \lin{minimize_swaps}, \lin{fixed_nkd_swaps}, and whether certain runs timed out.

    \item[-] \emph{Minimize basis graph size.}
    Among the remaining runs, we select the ones with the smallest basis graph, measured by the number of its seeds.

    \item[-] \emph{Prioritize \lin{minimize_swaps}.}
    If runs with \lin{minimize_swaps == True} exists, we select them; otherwise, we take the runs with \lin{minimize_swaps == False}.

    \item[-] \emph{Maximize squareness.}
    We retain the runs with the `squarest' logical basis circuit and basis graph, where squareness is quantified as
    $\mathfrak{s} = \lvert n - m \rvert + \lvert n' - m' \rvert$, with $(n,m)$ representing the logical basis circuit size (in terms of unreseeded basis circuits) and $(n',m')$ the basis graph size. A lower $\mathfrak{s}$ indicates greater squareness.

    \item[-] \emph{Minimize wall clock time.}
    Finally, we select the run with the lowest wall clock time. Before this stage, typically only one run remains, so it has an insignificant effect on the reported wall-clock times.
\end{itemize}

The tables report the following in sequence.
\begin{itemize}
\item[-] \emph{Basis circuit.} The name of the logical basis circuit. In the case of the circuits for the simulation of two-local Hamiltonians, the circuit is named after the basis graph on which the circuit is defined. These basis graphs include all Archimedean and Laves lattices. All basis graphs are depicted in \sec{basis_graphs_database}.

\item[-] \emph{Basis circuit size.} Size of the reseeded logical basis circuit, in terms of unreseeded basis circuits. The reseeded basis circuit is input to the transpiler.

\item[-] \emph{Basis graph.} The target hardware basis graph inducing the target hardware connectivity. The chain (`line'|), ladder, and square lattice are considered. For the square lattice we consider three different unreseeded basis graphs (all inducing the square lattice), which are also depicted in \sec{basis_graphs_database}.

\item[-] \emph{Basis graph size.} Size of the reseeded basis graph to which the circuit was routed, in terms of unreseeded basis graphs. The reseeded patch is input to the transpiler. Note that in the physical (i.e., routed) basis circuit, qudits and gates may be placed outside this reseeded basis graph, but not outside the mobility zone of $2\delta+1$ by $2\delta+1$ (with $\delta=1$ in the software implementation) cells of the reseeded basis graph centered around $(0,0)$.

\item[-] \emph{Depth overhead.} The depth of the physical (i.e., routed) basis circuit minus the minimal depth of the input reseeded logical basis circuit. The latter is set by the longest critical path of the DAG representation of the wrapped reseeded circuit (\sec{gate_dependencies}). This is equal to the depth of the reseeded circuit after wrapping and rescheduling. For the simulation of two-local Hamiltonians, the order of the gates is not fixed. This raises the question of what order achieves the minimal depth.  This is answered in Ref.~\cite{kattemolle2024edge}; for all logical basis circuits listed in the two-local table, the minimal depth is the maximum degree of the infinite lattice graph that is generated by the basis graph, which, as it is shown, is equal to the maximum degree of a 3 by 3 patch (in terms of reseeded basis circuits) after it is wrapped. The depth overhead relative to the minimal depth is shown in parentheses.

\item[-] \emph{Merge SWAPs.} Whether or not it is assumed that SWAP gates can be merged into directly preceding two-qudit gates acting on the same two qudits. (Transpiler option \lin{merge_swaps}.) This effectively also means a SWAP gate can be merged into a subsequent two-qudit gate by interchanging the `control' and `target' physical qubits of the gate.

\item[-] \emph{Cyclic.} Whether or not, at the end of the circuit, the logical qudits must return to their initial position. This is needed, for example, for routing circuits arising from first-order Trotter formulas, but not, for example, when routing second-order Trotter formulas (\sec{trotterization}). (Transpiler option \lin{cyclic}.)

\item[-] \emph{Qudit overh.} The number of qudits in one cell of the reseeded input logical basis circuit minus the number of vertices in one cell of the reseeded hardware basis graph.

\item[-] \emph{Naked SWAPs.} The number of SWAP gates (per physical reseeded basis circuit) that were not merged into preceding two-qudit gates. (If SWAP gates cannot be merged, \lin{merge_swaps == False}, all SWAP gates inserted by routing are naked.) An asterisk denotes that for the relevant data line, the number of naked SWAPs was not minimized (i.e., that the transpiler option \lin{minimize_swaps} was set to \lin{False}) because it was considered computationally too expensive with the current inefficient implementation of SWAP minimization. That is, an asterisk denotes that the shown number of naked SWAPs may not be minimal.

\item[-] \emph{Slice depth.} (Only for Kogut-Susskind.) The slice depth used by the transpiler (\sec{constraints}). The Kogut-Susskind circuits are much deeper than any other circuit considered in this work and are the only circuits that were sliced before being routed.

\item[-] \emph{Wall-clock time}. The wall-clock time of the run, reported in the format days:hours:minutes:seconds. Computations were performed on the scientific compute cluster of the University of Konstanz. All runs were executed using a single thread on an Intel Xeon Processor E5-2680 v2 (25M Cache, 2.80 GHz). An asterisk denotes the run was obtained with a fixed number of naked SWAPs (\lin{fixed_nkd_swaps == True}), which may affect the wall-clock time compared to an unfixed but minimized number of naked SWAPs.

In any case, the reported wall-clock times should be interpreted with great caution for the following reasons.

        \begin{itemize}

        \item[-] \emph{Large variations.} We observed fluctuations of up to a factor of 10 in wall-clock time for identical transpilation problems.

        \item[-] \emph{Unoptimized implementation.} The primary goal of this work is to construct optimal routing solutions that can be repeated spatially and temporally at negligible cost, with the main point being the scaling behavior of the run-time as patch sizes are increased. Consequently, we did not focus on optimizing the implementation of the routing solution.

        Significant improvements in wall-clock time have been reported using highly optimized implementations. Our implementation (leveraging periodicity) is based on the method of OLSQ  (not leveraging periodicity, introduced in Ref.~\cite{tan2020optimal}). A highly optimized version of OLSQ was presented in Ref.~\cite{lin2023scalable}, achieving a 692× speed up for depth optimization and a 6\,957× speed up for SWAP optimization.

        \item[-] \emph{Prudent constraints.} Compared to OLSQ~\cite{tan2020optimal}, we impose an additional constraint that eliminates SWAP-SWAP collisions. While such collisions are rare, we found that preventing them may slow down computation.

        \item[-] \emph{Minimized SWAPs.} Typically not only the depth is minimized, but also the number of SWAPs, which in some cases we found can greatly increase the wall-clock time. For SWAP minimization, we use the optimizer in Z3, which is known to be unnecessarily slow~\cite{lin2023scalable}, and which can be replaced with a custom, more efficient minimization strategy.

        \item[-] \emph{Gate order.} For the circuits for the simulation of two-local Hamiltonians, the order of the gates is also optimized over, as opposed to other routing methods.
        \end{itemize}
\end{itemize}

The lines are sorted in the following order of properties: depth overhead, naked SWAPs, qudit overhead, merge SWAPs, and cyclic. Except for ``merge SWAPs'' (sorted in descending order), all lines are sorted in ascending order. For sorting boolean variables, we use $\texttt{True}>\texttt{False}$.

\clearpage
\subsection{Arbitrary two-local}\label{sec:results_two-local}
\rowcolors{5}{}{gray!20}
\addtolength{\tabcolsep}{-0.4em}
 \begin{longtable}{lclccccccl}
 \hline
 $\ $                   & Basis   & $\ $      & Basis        & $\ $       & $\ $    & $\ $              & $\ $  & $\ $            & Wall-clock       \\
 Basis                  & circuit & Basis     & graph        & Depth      & Naked   & Qudit             & Merge & $\ $            & time       \\
 circuit                & Size    & graph     & size         & overhead   & SWAPs   & overh.            & SWAPs & Cyclic          & \texttt{d:hh:mm:ss}  \\
 \hline
 \endhead
 J1J2-ladder            & (2,1)  & ladder         & (2,1)  & 0 (0 \%)   & 0   & 0 & True  & False    & \texttt{0:00:00:23*} \\
 J1J2-ladder            & (2,1)  & line           & (4,1)  & 0 (0 \%)   & 0   & 0 & True  & False    & \texttt{0:00:00:20*} \\
 J1J2-ladder            & (2,1)  & square         & (2,2)  & 0 (0 \%)   & 0   & 0 & True  & False    & \texttt{0:00:00:39}  \\
 J1J2-line              & (4,1)  & ladder         & (2,1)  & 0 (0 \%)   & 0   & 0 & True  & False    & \texttt{0:00:00:09}  \\
 J1J2-square            & (2,2)  & square         & (2,2)  & 0 (0 \%)   & 0   & 0 & True  & False    & \texttt{0:00:48:48}  \\
 dice                   & (2,1)  & square         & (3,2)  & 0 (0 \%)   & 0   & 0 & True  & False    & \texttt{0:01:06:18*} \\
 ladder                 & (2,1)  & line           & (4,1)  & 0 (0 \%)   & 0   & 0 & True  & False    & \texttt{0:00:00:03}  \\
 trellis                & (2,1)  & square         & (2,2)  & 0 (0 \%)   & 0   & 0 & True  & False    & \texttt{0:00:01:01}  \\
 triangular             & (2,2)  & square         & (2,2)  & 0 (0 \%)   & 0   & 0 & True  & False    & \texttt{0:00:07:58}  \\
 tetrille               & (1,1)  & square         & (3,2)  & 0 (0 \%)   & 1   & 0 & True  & False    & \texttt{0:02:16:55}  \\
 dice                   & (1,1)  & diamond-square & (2,1)  & 0 (0 \%)   & 2   & 1 & True  & False    & \texttt{0:00:08:02}  \\
 floret-pentagonal      & (1,1)  & square         & (3,3)  & 0 (0 \%)   & 4   & 0 & False & False    & \texttt{0:09:02:31}  \\
 union-jack             & (2,2)  & diamond-square & (2,2)  & 0 (0 \%)   & 4*  & 0 & True  & True     & \texttt{2:12:12:38}  \\
 dice                   & (1,1)  & diamond-square & (2,1)  & 0 (0 \%)   & 4   & 1 & True  & True     & \texttt{0:00:03:30}  \\
 dice                   & (1,1)  & square         & (2,2)  & 0 (0 \%)   & 4   & 1 & True  & True     & \texttt{0:00:10:02}  \\
 dice                   & (1,1)  & square         & (2,2)  & 0 (0 \%)   & 5   & 1 & False & False    & \texttt{0:00:03:58}  \\
 floret-pentagonal      & (1,1)  & square         & (3,3)  & 0 (0 \%)   & 6*  & 0 & True  & False    & \texttt{0:02:29:22}  \\
 kisrhombille           & (1,1)  & square         & (3,2)  & 0 (0 \%)   & 6*  & 0 & True  & False    & \texttt{6:13:34:06}  \\
 union-jack             & (2,2)  & square         & (3,3)  & 0 (0 \%)   & 8*  & 1 & True  & True     & \texttt{5:02:36:56}  \\
 kisrhombille           & (1,1)  & square         & (3,2)  & 0 (0 \%)   & 12* & 0 & False & False    & \texttt{0:17:27:51}  \\
 J1J2-line              & (4,1)  & line           & (4,1)  & 1 (25 \%)  & 0   & 0 & True  & False    & \texttt{0:00:00:17*} \\
 bridge                 & (1,1)  & square         & (3,2)  & 1 (20 \%)  & 0   & 0 & True  & False    & \texttt{0:04:18:20*} \\
 kagome                 & (2,2)  & square         & (4,3)  & 1 (25 \%)  & 0   & 0 & True  & False    & \texttt{2:12:17:46*} \\
 prismatic-pentagonal   & (2,1)  & square         & (3,2)  & 1 (25 \%)  & 0   & 0 & True  & False    & \texttt{0:00:15:04*} \\
 shuriken               & (1,1)  & square         & (3,2)  & 1 (25 \%)  & 0   & 0 & True  & False    & \texttt{0:00:26:03*} \\
 snub-square            & (1,1)  & square         & (2,2)  & 1 (20 \%)  & 0   & 0 & True  & False    & \texttt{0:00:13:41}  \\
 star                   & (1,1)  & square         & (3,2)  & 1 (33 \%)  & 0   & 0 & True  & False    & \texttt{0:00:03:34*} \\
 J1J2-ladder            & (2,1)  & ladder         & (2,1)  & 1 (20 \%)  & 1   & 0 & False & False    & \texttt{0:00:00:45}  \\
 J1J2-ladder            & (2,1)  & square         & (2,2)  & 1 (20 \%)  & 1   & 0 & False & False    & \texttt{0:00:01:45}  \\
 J1J2-line              & (4,1)  & ladder         & (2,1)  & 1 (25 \%)  & 1   & 0 & False & False    & \texttt{0:00:00:14}  \\
 trellis                & (2,1)  & square         & (2,2)  & 1 (20 \%)  & 1   & 0 & False & False    & \texttt{0:00:02:14}  \\
 J1J2-ladder            & (2,1)  & ladder         & (2,1)  & 1 (20 \%)  & 1   & 0 & True  & True     & \texttt{0:00:01:24*} \\
 J1J2-ladder            & (2,1)  & square         & (2,2)  & 1 (20 \%)  & 1   & 0 & True  & True     & \texttt{0:00:02:49*} \\
 J1J2-line              & (4,1)  & ladder         & (2,1)  & 1 (25 \%)  & 1   & 0 & True  & True     & \texttt{0:00:00:31*} \\
 trellis                & (2,1)  & square         & (2,2)  & 1 (20 \%)  & 1   & 0 & True  & True     & \texttt{0:00:14:10}  \\
 cairo-pentagonal       & (1,1)  & square         & (3,2)  & 1 (25 \%)  & 1   & 0 & True  & False    & \texttt{0:02:43:34}  \\
 cross                  & (1,1)  & square         & (3,4)  & 1 (33 \%)  & 1   & 0 & True  & False    & \texttt{1:02:06:17}  \\
 J1J2-line              & (4,1)  & line           & (4,1)  & 1 (25 \%)  & 2   & 0 & False & False    & \texttt{0:00:00:10}  \\
 J1J2-square            & (2,2)  & square         & (2,2)  & 1 (12 \%)  & 2   & 0 & False & False    & \texttt{0:03:31:05}  \\
 ladder                 & (2,1)  & line           & (4,1)  & 1 (33 \%)  & 2   & 0 & False & False    & \texttt{0:00:00:04}  \\
 prismatic-pentagonal   & (2,1)  & square         & (3,2)  & 1 (25 \%)  & 2   & 0 & False & False    & \texttt{0:00:08:54}  \\
 triangular             & (2,2)  & square         & (2,2)  & 1 (17 \%)  & 2   & 0 & False & False    & \texttt{0:00:20:36}  \\
 J1J2-square            & (2,2)  & square         & (2,2)  & 1 (12 \%)  & 2   & 0 & True  & True     & \texttt{0:03:23:09*} \\
 ladder                 & (2,1)  & line           & (4,1)  & 1 (33 \%)  & 2   & 0 & True  & True     & \texttt{0:00:00:12}  \\
 prismatic-pentagonal   & (2,1)  & square         & (3,2)  & 1 (25 \%)  & 2   & 0 & True  & True     & \texttt{0:00:24:16}  \\
 triangular             & (2,2)  & square         & (2,2)  & 1 (17 \%)  & 2   & 0 & True  & True     & \texttt{0:02:49:18}  \\
 shuriken               & (1,1)  & square         & (3,3)  & 1 (25 \%)  & 2   & 3 & False & False    & \texttt{0:01:31:38}  \\
 star                   & (1,1)  & square         & (3,3)  & 1 (33 \%)  & 2   & 3 & False & False    & \texttt{0:00:11:20}  \\
 cross                  & (1,1)  & square         & (4,4)  & 1 (33 \%)  & 2   & 4 & False & False    & \texttt{0:04:42:49}  \\
 tetrille               & (1,1)  & square         & (3,2)  & 1 (17 \%)  & 3   & 0 & False & False    & \texttt{0:02:03:54}  \\
 ruby                   & (1,1)  & square         & (3,3)  & 1 (25 \%)  & 3   & 3 & True  & False    & \texttt{0:14:23:38}  \\
 dice                   & (1,1)  & diamond-square & (2,1)  & 1 (17 \%)  & 4   & 1 & False & False    & \texttt{0:00:06:59}  \\
 tetrille               & (1,1)  & square         & (3,2)  & 1 (17 \%)  & 6   & 0 & True  & True     & \texttt{0:03:48:46}  \\
 union-jack             & (2,2)  & square         & (4,2)  & 1 (12 \%)  & 7*  & 0 & False & False    & \texttt{6:10:33:33}  \\
 floret-pentagonal      & (1,1)  & square         & (3,3)  & 1 (17 \%)  & 7   & 0 & True  & True     & \texttt{6:04:28:39}  \\
 J1J2-ladder            & (2,1)  & ladder         & (2,1)  & 2 (40 \%)  & 2   & 0 & False & True     & \texttt{0:00:01:32}  \\
 J1J2-ladder            & (2,1)  & square         & (2,2)  & 2 (40 \%)  & 2   & 0 & False & True     & \texttt{0:00:02:02}  \\
 J1J2-line              & (4,1)  & ladder         & (2,1)  & 2 (50 \%)  & 2   & 0 & False & True     & \texttt{0:00:00:33}  \\
 trellis                & (2,1)  & square         & (2,2)  & 2 (40 \%)  & 2   & 0 & False & True     & \texttt{0:00:03:32}  \\
 snub-square            & (1,1)  & square         & (2,2)  & 2 (40 \%)  & 2   & 0 & False & False    & \texttt{0:00:20:48}  \\
 J1J2-line              & (4,1)  & line           & (4,1)  & 2 (50 \%)  & 2   & 0 & True  & True     & \texttt{0:00:00:45*} \\
 shuriken               & (1,1)  & square         & (3,2)  & 2 (50 \%)  & 2   & 0 & True  & True     & \texttt{0:06:22:39}  \\
 snub-square            & (1,1)  & square         & (2,2)  & 2 (40 \%)  & 2   & 0 & True  & True     & \texttt{0:02:30:47}  \\
 star                   & (1,1)  & square         & (3,2)  & 2 (67 \%)  & 2   & 0 & True  & True     & \texttt{0:01:18:24}  \\
 kagome                 & (1,1)  & square         & (2,2)  & 2 (50 \%)  & 2   & 1 & False & False    & \texttt{0:00:01:32}  \\
 J1J2-ladder            & (2,1)  & line           & (4,1)  & 2 (40 \%)  & 3   & 0 & False & False    & \texttt{0:00:01:42}  \\
 cairo-pentagonal       & (1,1)  & square         & (3,2)  & 2 (50 \%)  & 3   & 0 & False & False    & \texttt{0:02:27:01}  \\
 ruby                   & (1,1)  & square         & (3,2)  & 2 (50 \%)  & 3   & 0 & False & False    & \texttt{0:06:12:43}  \\
 J1J2-line              & (4,1)  & line           & (4,1)  & 2 (50 \%)  & 4   & 0 & False & True     & \texttt{0:00:00:24}  \\
 J1J2-square            & (2,2)  & square         & (2,2)  & 2 (25 \%)  & 4   & 0 & False & True     & \texttt{0:02:54:26}  \\
 ladder                 & (2,1)  & line           & (4,1)  & 2 (67 \%)  & 4   & 0 & False & True     & \texttt{0:00:00:12}  \\
 prismatic-pentagonal   & (2,1)  & square         & (3,2)  & 2 (50 \%)  & 4   & 0 & False & True     & \texttt{0:01:00:30}  \\
 triangular             & (2,2)  & square         & (2,2)  & 2 (33 \%)  & 4   & 0 & False & True     & \texttt{0:00:18:18}  \\
 cairo-pentagonal       & (1,1)  & square         & (3,2)  & 2 (50 \%)  & 4   & 0 & True  & True     & \texttt{0:02:38:14}  \\
 kagome                 & (1,1)  & square         & (2,2)  & 2 (50 \%)  & 4   & 1 & False & True     & \texttt{0:00:00:53}  \\
 kagome                 & (1,1)  & square         & (2,2)  & 2 (50 \%)  & 4   & 1 & True  & True     & \texttt{0:00:13:42}  \\
 shuriken               & (1,1)  & square         & (3,3)  & 2 (50 \%)  & 4   & 3 & False & True     & \texttt{0:04:40:50}  \\
 star                   & (1,1)  & square         & (3,3)  & 2 (67 \%)  & 4   & 3 & False & True     & \texttt{0:00:23:12}  \\
 cross                  & (1,1)  & square         & (4,4)  & 2 (67 \%)  & 4   & 4 & False & True     & \texttt{2:01:43:49}  \\
 tetrille               & (1,1)  & square         & (3,2)  & 2 (33 \%)  & 6   & 0 & False & True     & \texttt{0:02:13:01}  \\
 dice                   & (1,1)  & diamond-square & (2,1)  & 2 (33 \%)  & 6   & 1 & False & True     & \texttt{0:00:17:49}  \\
 dice                   & (1,1)  & square         & (2,2)  & 2 (33 \%)  & 6   & 1 & False & True     & \texttt{0:00:29:03}  \\
 union-jack             & (2,2)  & diamond-square & (2,2)  & 2 (25 \%)  & 8*  & 0 & False & True     & \texttt{4:21:53:00}  \\
 ruby                   & (2,1)  & square         & (4,3)  & 2 (50 \%)  & 8*  & 0 & True  & True     & \texttt{0:09:32:23}  \\
 floret-pentagonal      & (1,1)  & square         & (3,3)  & 2 (33 \%)  & 12* & 0 & False & True     & \texttt{0:05:35:54}  \\
 union-jack             & (2,2)  & square         & (4,2)  & 2 (25 \%)  & 12* & 0 & False & True     & \texttt{3:16:41:07}  \\
 cross                  & (1,1)  & square         & (4,4)  & 2 (67 \%)  & 16* & 4 & True  & True     & \texttt{0:00:56:31}  \\
 J1J2-ladder            & (2,1)  & line           & (4,1)  & 3 (60 \%)  & 2   & 0 & True  & True     & \texttt{0:00:02:37*} \\
 J1J2-ladder            & (2,1)  & line           & (4,1)  & 3 (60 \%)  & 4   & 0 & False & True     & \texttt{0:00:02:26}  \\
 snub-square            & (1,1)  & square         & (2,2)  & 3 (60 \%)  & 4   & 0 & False & True     & \texttt{0:00:34:48}  \\
 bridge                 & (1,1)  & square         & (3,2)  & 3 (60 \%)  & 4   & 0 & False & False    & \texttt{4:09:17:54}  \\
 bridge                 & (1,1)  & square         & (3,2)  & 3 (60 \%)  & 4*  & 0 & True  & True     & \texttt{1:06:13:10}  \\
 cairo-pentagonal       & (1,1)  & square         & (3,2)  & 3 (75 \%)  & 6   & 0 & False & True     & \texttt{0:07:34:26}  \\
 bridge                 & (2,1)  & square         & (4,3)  & 3 (60 \%)  & 12* & 0 & False & True     & \texttt{5:03:34:17}  \\
 ruby                   & (1,1)  & square         & (3,2)  & 4 (100 \%) & 6   & 0 & False & True     & \texttt{2:07:18:57}  \\
 \hline
 \end{longtable}

\subsection{Rule 54}

\rowcolors{5}{}{gray!20}
{\begin{longtable}{lclccccccl}
  \hline
 $\ $                   & Basis          & $\ $      & Basis        & $\ $       & $\ $    & $\ $              & $\ $  & $\ $            & Wall-clock       \\
 Basis                  & circuit        & Basis     & graph        & Depth      & Naked   & Qudit             & Merge & $\ $            & time       \\
 circuit                & Size           & graph     & size         & overhead   & SWAPs   & overh.            & SWAPs & Cyclic          & \texttt{d:hh:mm:ss}  \\
\hline
  \endhead
 rule54                 & (1,1)  & ladder        & (2,1)  & 0 (0 \%)  &  1 & 0 & True  & True     & \texttt{0:00:04:11} \\
 rule54                 & (2,1)  & square        & (4,2)  & 0 (0 \%)  &  2 & 0 & True  & True     & \texttt{0:02:30:08} \\
 rule54                 & (1,1)  & ladder        & (2,1)  & 1 (5 \%)  &  8 & 0 & False & True     & \texttt{0:00:06:10} \\
 rule54                 & (2,1)  & square        & (4,2)  & 1 (5 \%)  & 16 & 0 & False & True     & \texttt{0:04:24:32} \\
 rule54                 & (1,1)  & line          & (4,1)  & 3 (15 \%) &  4 & 0 & True  & True     & \texttt{0:00:50:40} \\
 rule54                 & (1,1)  & line          & (4,1)  & 6 (30 \%) &  8 & 0 & False & True     & \texttt{0:00:29:49} \\
\hline
\end{longtable}}

\subsection{Rokhsar-Kivelson}
\rowcolors{5}{}{gray!20}
\begin{longtable}{lclccccccl}
  \hline
 $\ $                   & Basis          & $\ $      & Basis        & $\ $       & $\ $    & $\ $              & $\ $  & $\ $            & Wall-clock       \\
 Basis                  & circuit        & Basis     & graph        & Depth      & Naked   & Qudit             & Merge & $\ $            & time       \\
 circuit                & Size           & graph     & size         & overhead   & SWAPs   & overh.            & SWAPs & Cyclic          & \texttt{d:hh:mm:ss}  \\
\hline
  \endhead
Rokhsar-Kivelson\_xyz   & (1,1)  & square        & (2,2)  & 2 (11 \%) & 2   & 0 & True  & True     & \texttt{0:00:43:02} \\
 Rokhsar-Kivelson\_mcu  & (1,1)  & square        & (2,2)  & 2 (6 \%)  & 18* & 0 & True  & True     & \texttt{0:03:01:00} \\
 Rokhsar-Kivelson\_xyz  & (1,1)  & square        & (2,2)  & 3 (17 \%) & 4   & 0 & False & True     & \texttt{0:00:23:59} \\
  Rokhsar-Kivelson\_mcu & (1,1)  & square        & (2,2)  & 3 (9 \%)  & 18  & 0 & False & True     & \texttt{4:14:21:51} \\
\hline
\end{longtable}

\subsection{Fermi-Hubbard}
\rowcolors{5}{}{gray!20}
\begin{longtable}{lclccccccl}
  \hline
 $\ $                   & Basis          & $\ $      & Basis        & $\ $       & $\ $    & $\ $              & $\ $  & $\ $            & Wall-clock       \\
 Basis                  & circuit        & Basis     & graph        & Depth      & Naked   & Qudit             & Merge & $\ $            & time       \\
 circuit                & Size           & graph     & size         & overhead   & SWAPs   & overh.            & SWAPs & Cyclic          & \texttt{d:hh:mm:ss}  \\
\hline
  \endhead
 Fermi-Hubbard          & (1,1)       & plus-square   & (1,1)       & 0 (0 \%) & 0 & 0 & False & True     & \texttt{0:00:01:18} \\
\hline
\end{longtable}

\clearpage
\subsection{Kogut-Susskind}
\rowcolors{5}{}{gray!20}
\begin{longtable}{lclcccccccl}
  \hline
 $\ $                   & Basis          & $\ $      & Basis        & $\ $       & $\ $    & $\ $              & $\ $  & $\ $      & $\ $       & Wall-clock       \\
 Basis                  & circuit        & Basis     & graph        & Depth      & Naked   & Qudit             & Merge & $\ $      & Slice      & time       \\
 circuit                & Size           & graph     & size         & overhead   & SWAPs   & overh.            & SWAPs & Cyclic    & depth      & \texttt{d:hh:mm:ss}  \\
 \hline
 \endhead
 kogut-susskind         & (1,1)  & diamond-square & (2,2)  & 7 (3 \%)   & 284* & 1 & True  & True     &            40 & \texttt{0:04:18:50} \\
 kogut-susskind         & (1,1)  & diamond-square & (2,2)  & 9 (4 \%)   & 253  & 1 & True  & True     &            20 & \texttt{0:01:41:25} \\
 kogut-susskind         & (2,1)  & square         & (4,4)  & 12 (6 \%)  & 626* & 2 & False & True     &            40 & \texttt{2:13:31:44} \\
 kogut-susskind         & (1,1)  & diamond-square & (2,2)  & 13 (6 \%)  & 262  & 1 & False & True     &            20 & \texttt{0:00:40:02} \\
 kogut-susskind         & (1,1)  & diamond-square & (2,2)  & 13 (6 \%)  & 290* & 1 & False & True     &            40 & \texttt{0:04:28:20} \\
 kogut-susskind         & (1,1)  & diamond-square & (2,2)  & 14 (7 \%)  & 201* & 1 & True  & True     &            10 & \texttt{0:01:31:30} \\
 kogut-susskind         & (1,1)  & diamond-square & (2,2)  & 21 (10 \%) & 198  & 1 & False & True     &            10 & \texttt{0:01:17:32} \\
 kogut-susskind         & (2,1)  & square         & (4,4)  & 22 (10 \%) & 634  & 2 & False & True     &            20 & \texttt{1:19:45:51} \\
 kogut-susskind         & (2,1)  & square         & (4,4)  & 28 (13 \%) & 608  & 2 & True  & True     &            10 & \texttt{0:16:13:24} \\
 kogut-susskind         & (2,1)  & square         & (4,4)  & 32 (15 \%) & 563  & 2 & False & True     &            10 & \texttt{0:04:07:11} \\
\hline
\end{longtable}

\subsection{Main-text table}
For completeness, here we repeat the table of the main text with all available data fields and in the notation of the other tables in the SM. The rows are in the same order as in the main text. All rows are taken from the preceding tables.

\setlength{\tabcolsep}{.5em}
\rowcolors{5}{}{gray!20}
\begin{longtable}{lclcccccccl}
  \hline
 $\ $                   & Basis          & $\ $      & Basis        & $\ $       & $\ $    & $\ $              & $\ $  & $\ $      & $\ $       & Wall-clock       \\
 Basis                  & circuit        & Basis     & graph        & Depth      & Naked   & Qudit             & Merge & $\ $      & Slice      & time       \\
 circuit                & Size           & graph     & size         & overhead   & SWAPs   & overh.            & SWAPs & Cyclic    & depth      & \texttt{d:hh:mm:ss}  \\
 \hline
\endhead
ladder                & (2,1) & line           & (4,1) & 0 (0 \%)  & 0   & 0 & True  & False & N.A.          & \texttt{0:00:00:03}  \\
J1J2-ladder           & (2,1) & line           & (4,1) & 0 (0 \%)  & 0   & 0 & True  & False & N.A.          & \texttt{0:00:00:20*} \\
J1J2-line             & (4,1) & line           & (4,1) & 1 (25 \%) & 0   & 0 & True  & False & N.A.          & \texttt{0:00:00:17*} \\
\hline
rule54                & (1,1) & ladder         & (2,1) & 0 (0 \%)  & 1   & 0 & True  & True  & N.A.          & \texttt{0:00:04:11} \\
J1J2-line             & (4,1) & ladder         & (2,1) & 0 (0 \%)  & 0   & 0 & True  & False & N.A.          & \texttt{0:00:00:09}  \\
\hline
J1J2-square           & (2,2) & square         & (2,2) & 0 (0 \%)  & 0   & 0 & True  & False & N.A.          & \texttt{0:00:48:48}  \\
triangular            & (2,2) & square         & (2,2) & 0 (0 \%)  & 0   & 0 & True  & False & N.A.          & \texttt{0:00:07:58}  \\
kagome                & (2,2) & square         & (4,3) & 1 (25 \%) & 0   & 0 & True  & False & N.A.          & \texttt{2:12:17:46*} \\
shuriken              & (1,1) & square         & (3,2) & 1 (25 \%) & 0   & 0 & True  & False & N.A.          & \texttt{0:00:26:03*} \\
snub-square           & (1,1) & square         & (2,2) & 1 (20 \%) & 0   & 0 & True  & False & N.A.          & \texttt{0:00:13:41}  \\
\hline
Rokhsar-Kivelson\_xyz & (1,1) & square         & (2,2) & 2 (11 \%) & 2   & 0 & True  & True  & N.A.          & \texttt{0:00:43:02} \\
Fermi-Hubbard         & (1,1) & plus-square    & (1,1) & 0 (0 \%)  & 0   & 0 & False & True  & N.A.          & \texttt{0:00:01:18} \\
kogut-susskind        & (1,1) & diamond-square & (2,2) & 9 (4 \%)  & 253 & 1 & True  & True  &            20 & \texttt{0:01:41:25} \\
\hline
\end{longtable}

\section{Benchmarking}
The main point of our method is its linear complexity in the patch size of the circuit for which a routing solution is sought, while still attaining locally optimal solutions, by leveraging the spatial periodicity of that circuit. As a result, it is expected to outperform other routing methods that do not exploit spatial periodicity, in both overhead and wall-clock time, for sufficiently large patch sizes. Beyond this scaling advantage, it is also worthwhile to assess the practical benefits of our method for today’s quantum chips -- ranging from a hundred to a thousand qubits -- compared to established approaches, in terms of both computational wall-clock time and solution quality.

\subsection{Method}

The routing problems we consider here are as follows.
\begin{itemize}
\item[-] \emph{J1J2-line.} A single Trotter step of the Trotterized time evolution of the J1J2 model on a linear chain, mapped to hardware with a connectivity graph also described by a chain (or a `line').

\item[-] \emph{J1J2-square.} A single Trotter step of the Trotterized time evolution of the J1J2 model on a square lattice, mapped to hardware with a connectivity graph described by a square lattice.
\end{itemize}

Note that here, we do not enforce logical qubits to return to their initial position (i.e., we do not enforce cyclic routing). So, here, the routed Trotter steps can be interpreted as routed circuits for a single first-order Trotter step or half a second-order Trotter step. From the latter, quantum simulation circuits with an arbitrary number of steps can be generated trivially (\sec{trotterization}).

For the two listed problems, we benchmark QuanTile against well-established methods. These methods do not support cyclic routing, nor do they allow SWAP merging (\sec{merge_swaps}). To ensure a fair comparison, we run QuanTile with these options disabled. Additionally, since the well-established methods we consider do not optimize gate order, we allow QuanTile to perform this optimization and then use the resulting gate order as a fixed input for these methods. Specifically, QuanTile is benchmarked against the following methods.

\begin{itemize}
  \item[-] \emph{Qiskit SabreSwap.} This method implements the routing approach from Refs.~\cite{li2019tackling, zou2024lightsabre}, as available in Qiskit 1.3.1. We used the highest optimization level (\lin{optimization_level = 3}). In previous work~\cite{kattemolle2023line}, we found SabreSwap to be the best-performing standard method in Qiskit for closely related routing problems.

  \item[-] \emph{Qiskit AIRouter.} This method (not among the standard methods in Qiskit mentioned before) follows the approach from Ref.~\cite{kremer2024practical}, implemented in Qiskit 1.3.3~\cite{qiskit2024quantum} (using \texttt{qiskit-ibm-transpiler} 0.10.1 and \texttt{qiskit-ibm-ai-local-transpiler 0.2.0}). We applied the highest available optimization levels (\lin{optimization_level = 3}, \lin{ai_optimization_level = 3}).

  \item[-] \emph{Cirq.} This method uses the standard \texttt{RouteCQC} routing method in Cirq 1.4.1~\cite{cirq2024cirq}. We set the lookahead radius to 32 (\lin{lookahead_radius = 32}).
\end{itemize}

The full benchmarking code is available in the repository~\cite{kattemolle2024quantile}. All benchmarks were performed on the same hardware as in \sec{results}, using a single thread for QuanTile and allowing the use of 12 threads for the other methods. The only exception is Qiskit AIRouter, which was run on a 2.6 GHz 6-Core Intel Core i7 with up to 12 threads due to compatibility issues preventing its execution in local mode on the Linux-based compute cluster at the time of benchmarking.

\clearpage

\subsection{Results}

\begin{figure}[b]
\includegraphics[width=.45\textwidth]{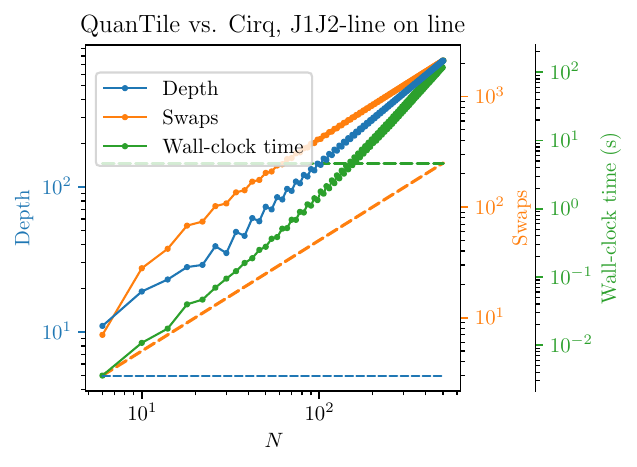}
\includegraphics[width=.45\textwidth]{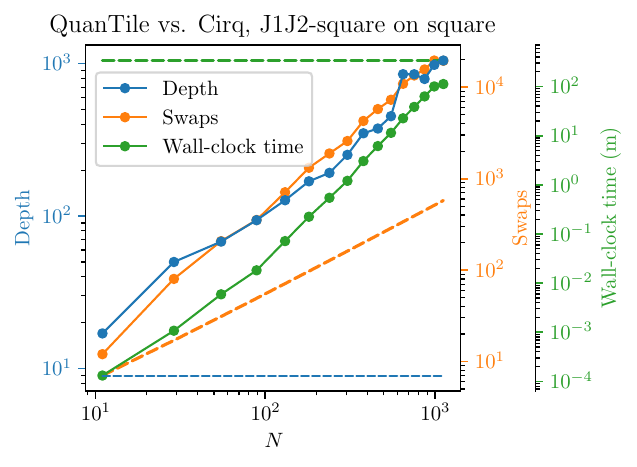}
\includegraphics[width=.47\textwidth]{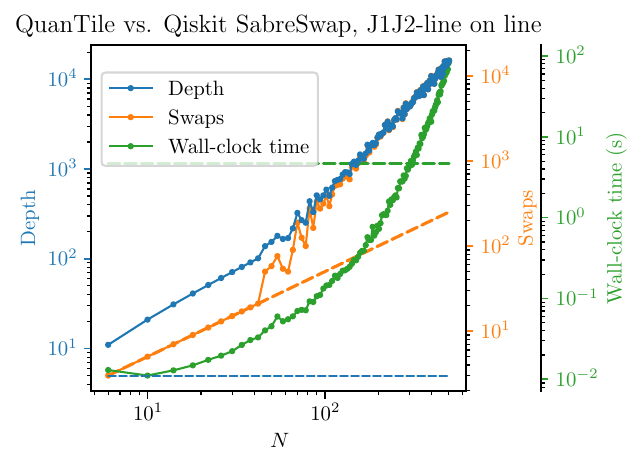}
\includegraphics[width=.45\textwidth]{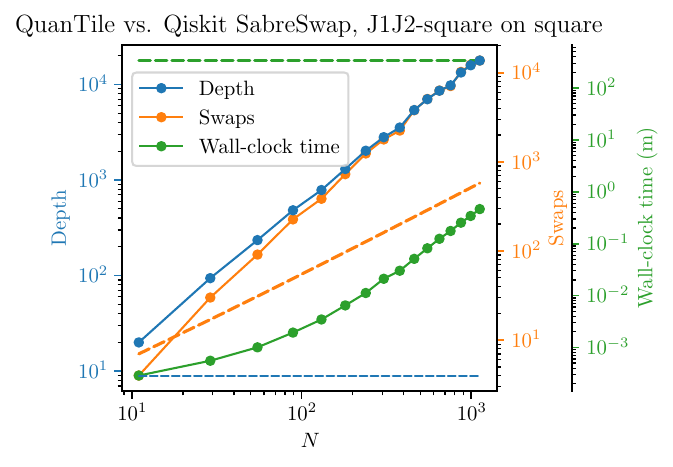}
\includegraphics[width=.45\textwidth]{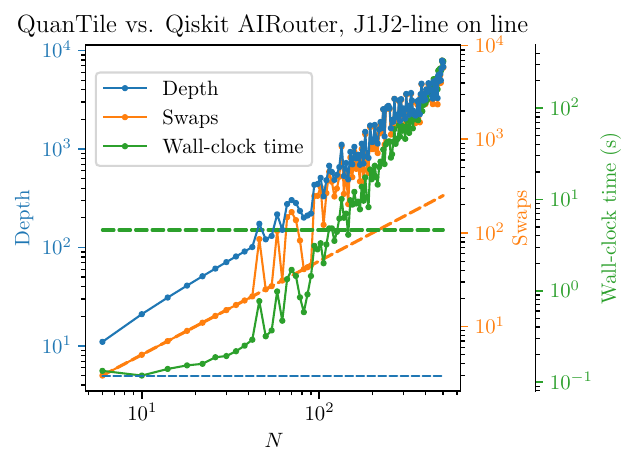}
\includegraphics[width=.45\textwidth]{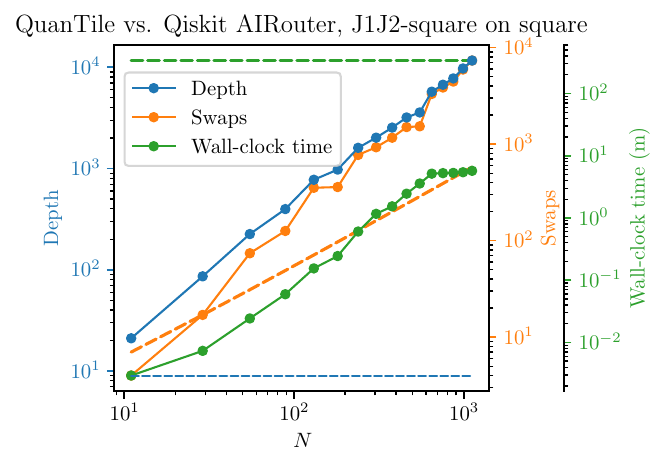}
\caption{\label{fig:benchmarking_plots_large}
Benchmarking results of QuanTile against Cirq (first row), Qiskit SabreSwap (second row), and Qiskit AIRouter (third row) for routing a single Trotter step in the quantum simulation of a two-local Hamiltonian. The benchmarks are performed on two interaction graphs: a chain with nearest and next-nearest neighbor edges (J1J2-line, first column) and a square lattice with nearest and next-nearest neighbor edges (J1J2-square, second column). The horizontal axes show the system size $N$ -- the number of qudits on which the Hamiltonian is defined.}
\end{figure}

\clearpage
\section{Basis graphs database}\label{sec:basis_graphs_database}
Here, we depict the basis graphs used in this work. They are defined in \lin{quantile/basis_graphs.json} in the implementation \cite{kattemolle2024quantile}.
Given an infinite lattice graph, the basis graph that induces it is not unique. So, for the circuits in \sec{results} to be fully unambiguous, it is necessary to give the basis graphs explicitly, which we do here. In \sec{results}, we used various basis graphs that generate the square lattice; they are all listed.

Each image depicts a patch of 5 by 5 basis graphs, with nodes represented using lattice graph notation. The central basis graph is highlighted with bold edges. Below each patch, the name of the corresponding basis graph is displayed. For the one-dimensional basis graphs, a patch of 5 by 1 basis graphs is shown.

We show the Archimedean lattices, the Laves lattices that are not also Archimedean, and miscellaneous other lattices. We gave a short introduction to these lattices in Ref.~\cite{kattemolle2024edge}.

\subsection{Archimedean}
\foreach \n in {
  square,
  plus-square,
  diamond-square,
  honeycomb,
  triangular,
  square-octagon,
  trellis,
  snub-square,
  kagome,
  bridge,
  ruby,
  star}
{
\placefig{figures/basis_graphs/\n.pdf}{\n}
}

\placefig{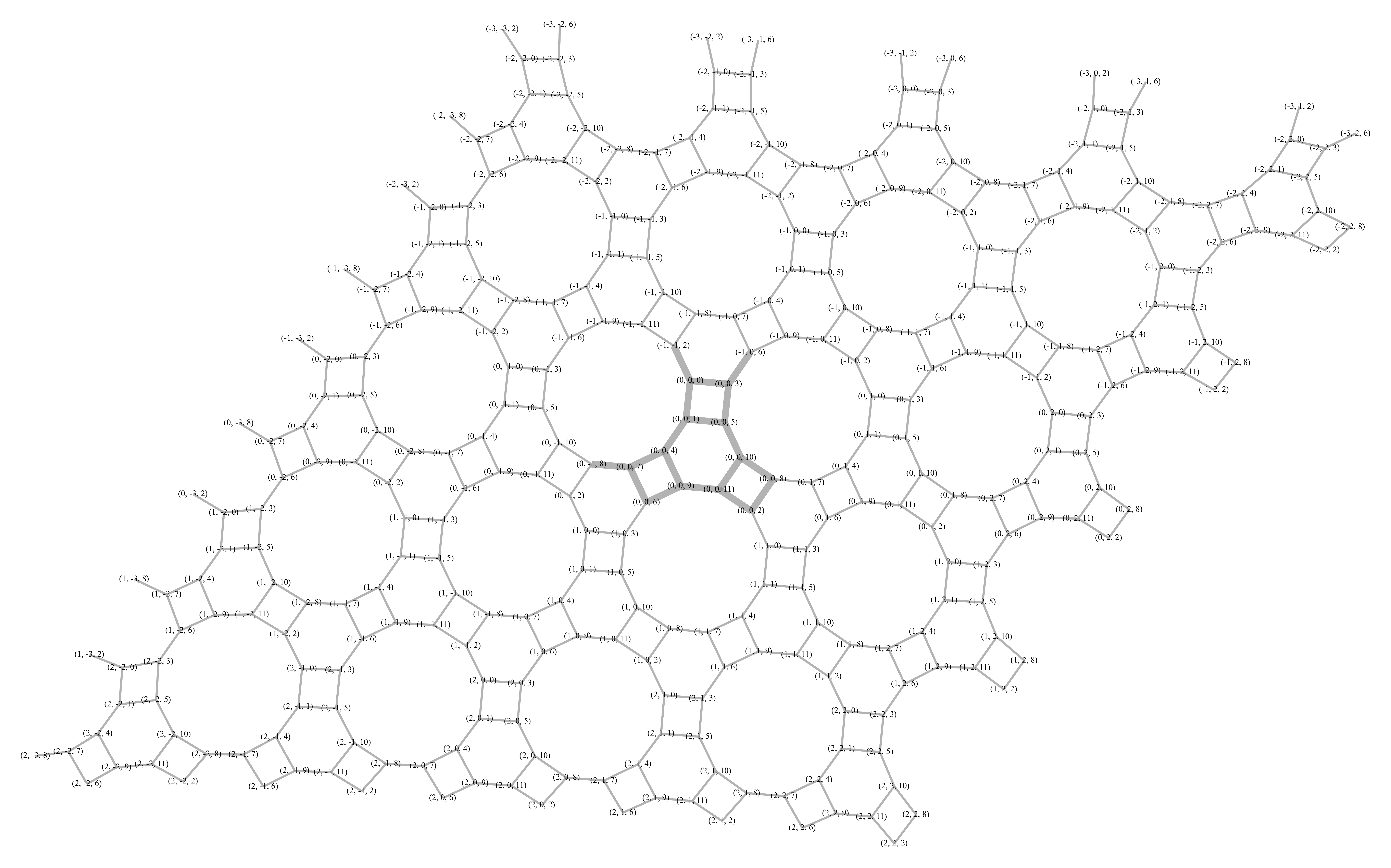}{cross}

\subsection{Laves}
\foreach \n in {
  union-jack,
  prismatic-pentagonal,
  cairo-pentagonal,
  dice,
  tetrille,
  asanoha,
  floret-pentagonal,
  kisrhombille}
{
\placefig{figures/basis_graphs/\n.pdf}{\n}
}
\subsection{Miscellaneous}
\vspace{-1em}
\foreach \n in {
ladder,
  line, shuriken,
  J1J2-ladder, J1J2-line, J1J2-square, J1J2J3-square,
  heavy-hex}
{
\placefig{figures/basis_graphs/\n.pdf}{\n}
}

\clearpage

\end{document}